\documentclass[10pt]{article}
\usepackage{amsfonts}
\usepackage{amsmath}
\usepackage{amssymb}
\usepackage{amsthm}
\usepackage[pdftex]{graphicx}
\usepackage[hmargin=1in,vmargin=1in]{geometry}
\usepackage{natbib}
\usepackage{setspace}
\usepackage{enumerate}
\usepackage[toc,title,titletoc,header]{appendix}
\usepackage[colorlinks,citecolor=blue]{hyperref}
\usepackage{etoolbox} 
\usepackage{bbm}
\bibliographystyle{ims}

\def\qed{\rule{2mm}{2mm}}
\def\indep{\perp \!\!\! \perp}

\parskip = 1.5ex plus 0.5 ex minus0.2 ex

\let\footnote=\endnote

\usepackage{pgfplots}
\let\oldmarginpar\marginpar
\renewcommand{\marginpar}[2][rectangle,draw,fill=black, text=white,text width= 2cm,rounded corners]{
    \oldmarginpar{
    \tiny \tikz \node at (0,0) [#1]{#2};}
}

\usepackage[pagewise,mathlines]{lineno}
\synctex=1
\mathchardef\dash="2D

\newtheorem{theorem}{Theorem}[section]
\newtheorem{lemma}{Lemma}[section]

\theoremstyle{definition}
\newtheorem{example}{Example}[section]
\newtheorem{remark}{Remark}[section]
\newtheorem{assumption}{Assumption}[section]

\AtEndEnvironment{remark}{~\qed}
\AtEndEnvironment{example}{~\qed}

\DeclareMathOperator*{\var}{Var}

\DeclareMathOperator*{\diag}{diag}

\begin{document}

\author{
Federico A. Bugni\\
Department of Economics\\
Duke University\\
\url{federico.bugni@duke.edu}
\and
Ivan A. Canay\\
Department of Economics\\
Northwestern University\\
\url{iacanay@northwestern.edu}
\and
Azeem M. Shaikh\\
Department of Economics\\
University of Chicago \\
\url{amshaikh@uchicago.edu}\\
}

\bigskip
\title{Inference under Covariate-Adaptive Randomization \\ with Multiple Treatments \thanks{We would like to thank Lori Beaman, Joseph Romano, Andres Santos, and seminar participants at various institutions for helpful comments on this paper.  Yuehao Bai, Jackson Bunting, Mengsi Gao, Max Tabord-Meehan, Vishal Kamat, and Winnie van Dijk provided excellent research assistance. The research of the first author was supported by National Institutes of Health Grant 40-4153-00-0-85-399 and the National Science Foundation Grant SES-1729280.  The research of the second author was supported by National Science Foundation Grant SES-1530534. The research of the third author was supported by National Science Foundation Grants SES-1308260, SES-1227091, and SES-1530661.}}

\maketitle

\vspace{-0.3in}
\thispagestyle{empty} 

\newpage
\begin{spacing}{1.3}
\begin{abstract}
This paper studies inference in randomized controlled trials with covariate-adaptive randomization when there are multiple treatments. More specifically, we study in this setting inference about the average effect of one or more treatments relative to other treatments or a control. As in \cite{bugni/canay/shaikh:16}, covariate-adaptive randomization refers to randomization schemes that first stratify according to baseline covariates and then assign treatment status so as to achieve ``balance'' within each stratum. Importantly, in contrast to \cite{bugni/canay/shaikh:16}, we not only allow for multiple treatments, but further allow for the proportion of units being assigned to each of the treatments to vary across strata. We first study the properties of estimators derived from a ``fully saturated'' linear regression, i.e., a linear regression of the outcome on all interactions between indicators for each of the treatments and indicators for each of the strata. We show that tests based on these estimators using the usual heteroskedasticity-consistent estimator of the asymptotic variance are invalid in the sense that they may have limiting rejection probability under the null hypothesis strictly greater than the nominal level; on the other hand, tests based on these estimators and suitable estimators of the asymptotic variance that we provide are exact in the sense that they have limiting rejection probability under the null hypothesis equal to the nominal level. For the special case in which the target proportion of units being assigned to each of the treatments does not vary across strata, we additionally consider tests based on estimators derived from a linear regression with ``strata fixed effects,'' i.e., a linear regression of the outcome on indicators for each of the treatments and indicators for each of the strata. We show that tests based on these estimators using the usual heteroskedasticity-consistent estimator of the asymptotic variance are conservative in the sense that they have limiting rejection probability under the null hypothesis no greater than and typically strictly less than the nominal level, but tests based on these estimators and suitable estimators of the asymptotic variance that we provide are exact, thereby generalizing results in \cite{bugni/canay/shaikh:16} for the case of a single treatment to multiple treatments. A simulation study and an empirical application illustrate the practical relevance of our theoretical results.
\end{abstract}
\end{spacing}

\noindent KEYWORDS: Covariate-adaptive randomization, multiple treatments, stratified block randomization, Efron's biased-coin design, treatment assignment, randomized controlled trial, strata fixed effects, saturated regression

\noindent JEL classification codes: C12, C14

\thispagestyle{empty} 
\newpage
\setcounter{page}{1}

\section{Introduction}

This paper studies inference in randomized controlled trials with covariate-adaptive randomization when there are multiple treatments. As in \cite{bugni/canay/shaikh:16}, covariate-adaptive randomization refers to randomization schemes that first stratify according to baseline covariates and then assign treatment status so as to achieve ``balance'' within each stratum.  Many such methods are used routinely when assigning treatment status in randomized controlled trials in all parts of the sciences.  See, for example, \cite{rosenberger/lachin:16} for a textbook treatment focused on clinical trials and \cite{duflo/etal:07} and \cite{bruhn/mckenzie:08} for reviews focused on development economics. Importantly, in contrast to \cite{bugni/canay/shaikh:16}, we not only allow for multiple treatments, but further allow the target proportion of units being assigned to each of the treatments to vary across strata.  
In this paper, we take as given the use of such a treatment assignment mechanism and study its consequences for inference about the average effect of one or more treatments relative to other treatments or a control. Our main requirement is that the randomization scheme is such that the fraction of units being assigned to each treatment within each stratum is suitably well behaved in a sense made precise by our assumptions below as the sample size $n$ tends to infinity. See, in particular, Assumptions \ref{ass:rand}.(b) and \ref{ass:rand-sfe}.(c).  Importantly, these requirements are satisfied by most commonly used treatment assignment mechanisms, including simple random sampling and stratified block randomization. The latter treatment assignment scheme is especially noteworthy because of its widespread use recently in development economics.  See, for example, \citet[][footnote 13]{dizon-Ross:15}, \citet[][footnote 6]{duflo/dupas/kremer:14}, \citet[][page 24]{Callen/etat:14}, and \citet[][page 6]{karlan/etal:2015}. 

We first study the properties of ordinary least squares estimation of a ``fully saturated'' linear regression, i.e., a linear regression of the outcome on all interactions between indicators for each of the treatments and indicators for each of the strata.  We emphasize that tests based on these estimators were not considered previously in \cite{bugni/canay/shaikh:16}.  We show that tests based on these estimators using the usual heteroskedasticity-consistent estimator of the asymptotic variance are invalid in the sense that they may have limiting rejection probability under the null hypothesis strictly greater than the nominal level.  As explained further below, this phenomenon contrasts sharply with the analysis in \cite{bugni/canay/shaikh:16} of other tests that were found to be conservative in the sense that their limiting rejection probabilities were no greater than the nominal level.  We then exploit our characterization of the behavior of the ordinary least squares estimator of the coefficients in such a regression under covariate-adaptive randomization to develop a consistent estimator of the asymptotic variance.  Our main result about the ``fully saturated'' linear regression shows that tests based on these estimators and our new estimator of the asymptotic variance are exact in the sense that they have limiting rejection probability under the null hypothesis equal to the nominal level. In a simulation study, we find that tests using the usual heteroskedasticity-consistent estimator of the asymptotic variance may have rejection probability under the null hypothesis dramatically larger than the nominal level. On the other hand, tests using the new estimator of the asymptotic variance have rejection probability under the null hypothesis very close to the nominal level. 

We additionally consider tests based on ordinary least squares estimation of a linear regression with ``strata fixed effects,'' i.e., a linear regression of the outcome on indicators for each of the treatments and indicators for each of the strata.  As emphasized by \citet[][Ch.\ 9]{imbens/rubin:15} in the case of a single treatment, such estimators need not even be consistent for the average treatment effect when the target proportion of units being assigned to treatment varies across strata, so in our analysis of tests based on these estimators we restrict attention to the special case in which the target proportion of units being assigned to each of the treatments does not vary across strata.  Based on simulation evidence and earlier assertions by \cite{kernan/etal:99}, the use of this test has been recommended by \cite{bruhn/mckenzie:08}.  More recently, \cite{bugni/canay/shaikh:16} provided a formal analysis of the properties of tests based on these estimators in the case of a single treatment. In this paper, we extend the analysis in \cite{bugni/canay/shaikh:16} about these tests to multiple treatments.  We show that tests based on these estimators using the usual heteroskedasticity-consistent estimator of the asymptotic variance are conservative in the sense that they have limiting rejection probability under the null hypothesis no greater than, and typically strictly less than, the nominal level. Once again, we exploit our characterization of the behavior of the ordinary least squares estimator of the coefficients in such a regression under covariate-adaptive randomization to develop a consistent estimator of the asymptotic variance. Our main result about the linear regression with ``strata fixed effects'' shows that tests based on these estimators and our new estimator of the asymptotic variance are exact in the sense that they have limiting rejection probability under the null hypothesis equal to the nominal level. In a simulation study, we find that tests using  the usual heteroskedasticity-consistent estimator of the asymptotic variance may have rejection probability under the null hypothesis dramatically less than the nominal level and, as a result, may have very poor power when compared to other tests. On the other hand, tests using the new estimator of the asymptotic variance have rejection probability under the null hypothesis very close to the nominal level.

The remainder of the paper is organized as follows.  In Section \ref{sec:setup}, we describe our setup and notation.  In particular, there we describe the assumptions we impose on the treatment assignment mechanism.  Our main results concerning the ``fully saturated'' linear regression are contained in Section \ref{sec:sat}. Our main results concerning the linear regression with ``strata fixed effects'' are contained in Section \ref{sec:sfe}. In Section \ref{sec:onetreatment}, we discuss our results in the special case where there is only a single treatment, which facilitates a comparison of our results with those in \citet[][Chapter 9]{imbens/rubin:15}. In Section \ref{sec:simulations}, we examine the finite-sample behavior of all the tests we consider in this paper via a small simulation study. In Section \ref{sec:advise}, we provide recommendations for empirical practice. Finally, in Section \ref{sec:application}, we provide an empirical illustration of our results. Proofs of all results are provided in the Appendix.

\section{Setup and Notation} \label{sec:setup}

Let $Y_i$ denote the (observed) outcome of interest for the $i$th unit, $A_i$ denote the treatment received by the $i$th unit, and $Z_i$ denote observed, baseline covariates for the $i$th unit. The list of possible treatments is given by $\mathcal A=\{1,\dots,|\mathcal A|\}$, and we say there are multiple treatments when $|\mathcal A|>1$. Without loss of generality we assume there is a control group, which we denote as treatment zero, and use $\mathcal A_0 = \{0\}\cup \mathcal A$ to denote the list of treatments that includes the control group. Denote by $Y_i(a)$ the potential outcome of the $i$th unit under treatment $a\in \mathcal A_0$.  As usual, the (observed) outcome and potential outcomes are related to treatment assignment by the relationship 
\begin{equation} \label{eq:obsy}
Y_i = \sum_{a\in\mathcal A_0} Y_i(a)I\{A_i=a\}=Y_i(A_i)~.
\end{equation}
Denote by $P_n$ the distribution of the observed data $$X^{(n)} = \{(Y_i,A_i,Z_i) : 1 \leq i \leq n\}$$ and denote by $Q_n$ the distribution of $$W^{(n)} = \{(Y_i(0),Y_i(1),\dots,Y_i(|\mathcal A|),Z_i) : 1 \leq i \leq n\}~.$$  Note that $P_n $ is jointly determined by \eqref{eq:obsy}, $Q_n$, and the mechanism for determining treatment assignment.  
We therefore state our assumptions below in terms of assumptions on $Q_n$ and assumptions on the mechanism for determining treatment status.  Indeed, we will not make reference to $P_n$ in the sequel and all operations are understood to be under $Q_n$ and the mechanism for determining treatment status.

Strata are constructed from the observed, baseline covariates $Z_i$ using a function $S : \text{supp}(Z_i) \rightarrow \mathcal S$, where $\mathcal S$ is a finite set.  For $1 \leq i \leq n$, let $S_i = S(Z_i)$ and denote by $S^{(n)}$ the vector of strata $(S_1, \ldots, S_n)$. 

We begin by describing our assumptions on $Q_n$.  We assume that $W^{(n)}$ consists of $n$ i.i.d.\ observations, i.e., $Q_n = Q^n$, where $Q$ is the marginal distribution of $(Y_i(0),Y_i(1),\dots,Y_i(|\mathcal A|),Z_i)$. In order to rule out trivial strata, we henceforth assume that $p(s) = P\{S_i = s\} > 0$ for all $s \in \mathcal S$. We further restrict $Q$ to satisfy the following mild requirement.

\begin{assumption} \label{ass:moments}
$Q$ satisfies 
\begin{equation*}
\max_{a\in\mathcal A_0}E[|Y_i(a)|^{2}] < \infty
\end{equation*}
and for all $a \in \mathcal A_0$
\begin{equation*} \label{ass:degenerate}
\max_{s\in \mathcal S}\var[Y_i(a)|S_i=s]> 0~.
\end{equation*}
\end{assumption}
\noindent We note that the second requirement in Assumption \ref{ass:moments} is made only to rule out degenerate situations and is stronger than required for our results.

Next, we describe our assumptions on the mechanism determining treatment assignment.  As mentioned previously, in this paper we focus on covariate-adaptive randomization, i.e., randomization schemes that first stratify according baseline covariates and then assign treatment status so as to achieve ``balance'' within each stratum.  In order to describe our assumptions on the treatment assignment mechanism more formally, we require some further notation. Let $A^{(n)}$ be vector of treatment assignments $(A_1, \ldots, A_n)$. For any $(a,s)\in \mathcal A_0\times \mathcal S$, let $\pi_a(s)\in(0,1)$ be the target proportion of units to assign to treatment $a$ in stratum $s$, let $$n_{a}(s) = \sum_{1 \leq i \leq n} I\{A_i=a,S_i=s\} $$ be the number of units assigned to treatment $a$ in stratum $s$, and let $$ n(s) = \sum_{1 \leq i \leq n} I\{S_i=s\} $$ be the total number of units in stratum $s$. Note that $\sum_{a\in\mathcal A_0}\pi_a(s)=1$ for all $s\in\mathcal S$. The following assumption summarizes our main requirement on the treatment assignment mechanism for the analysis of the ``fully saturated'' linear regression.
\begin{assumption} \label{ass:rand}
The treatment assignment mechanism is such that 
\begin{enumerate}[(a)]
	\item $W^{(n)} \indep A^{(n)} | S^{(n)}$.
	\item $\frac{n_a(s)}{n(s)}\overset{P}{\to}\pi_a(s)$ as $n \rightarrow \infty$ for all $(a,s)\in\mathcal A\times \mathcal S$.
\end{enumerate}
\end{assumption}
Assumption \ref{ass:rand}.(a) simply requires that the treatment assignment mechanism is a function only of the vector of strata and an exogenous randomization device. Assumption \ref{ass:rand}.(b) is an additional requirement that imposes that the (possibly random) fraction of units assigned to treatment $a$ and stratum $s$ approaches the target proportion $\pi_a(s)$ as the sample size tends to infinity.  This requirement is satisfied by a wide variety of randomization schemes; see \cite{bugni/canay/shaikh:16}, \citet[][Sections 3.10 and 3.11]{rosenberger/lachin:16}, and \citet[][Proposition 2.5]{wei/etal:86}. Before proceeding, we briefly discuss two popular randomization schemes that are easily seen to satisfy Assumption \ref{ass:rand}.

\begin{example} \label{example:srs} {\it (Simple Random Sampling)}
Simple random sampling (SRS), also known as Bernoulli trials, refers to the case where $A^{(n)}$ consists of $n$ i.i.d.\ random variables with 
\begin{equation} \label{eq:srs}
P\{A_k=a|S^{(n)},A^{(k-1)}\}=P\{A_k=a\}=\pi_a
\end{equation}
for $1 \leq k \leq n$ and $\pi_a\in(0,1)$ satisfying $\sum_{a\in\mathcal A_0}\pi_a=1$.  In this case, Assumption \ref{ass:rand}.(a) follows immediately from \eqref{eq:srs}, while Assumption \ref{ass:rand}.(b) follows from the weak law of large numbers. If \eqref{eq:srs} is such that the target probabilities $\pi_a$ vary by strata, then $$ P\{A_k=a|S^{(n)},A^{(k-1)}\}=P\{A_k=a|S_k=s\}=\pi_a(s) ~,$$ which is equivalent to simple random sampling within each stratum.   
\end{example}

\begin{example} \label{example:sbr} {\it (Stratified Block Randomization)} An early discussion of stratified block randomization (SBR) is provided by \cite{zelen:74} for the case of a single treatment.  This randomization scheme is sometimes also referred to as block randomization or permuted blocks within strata.  In order to describe this treatment assignment mechanism, for $s \in \mathcal S$, denote by $n(s)$ the number of units in stratum $s$ and let $$n_a(s) = \left \lfloor n(s)\pi_a(s)\right \rfloor $$ for $a\in\mathcal A$ with $n_0(s)=n(s)-\sum_{a\in\mathcal A}n_a(s)$.  In this randomization scheme, independently for each each stratum $s$, $n_a(s)$ units are assigned to each treatment $a$, where all $$\binom{n(s)}{n_0(s),n_1(s),\dots,n_{|\mathcal{A}|}(s)}$$
possible assignments are equally likely. Assumptions \ref{ass:rand}.(a) and \ref{ass:rand}.(b) follow by construction in this case. 
\end{example}

We note that our analysis of the linear regression with ``strata fixed effects'' requires an assumption that is mildly stronger than Assumption \ref{ass:rand} above.  It is worth emphasizing that this stronger assumption parallels the assumption made in \cite{bugni/canay/shaikh:16} for the analysis of linear regression with ``strata fixed effects'' in the case of a single treatment and is also satisfied by a wide variety of treatment assignment mechanisms, including Examples \ref{example:srs} and \ref{example:sbr} above.  See Assumption \ref{ass:rand-sfe} and the subsequent discussion there for further details.

Our object of interest is the vector of average treatment effects (ATEs) on the outcome of interest. For each $a\in\mathcal A$, we use 
\begin{equation} \label{eq:ate}
\theta_a(Q) \equiv E[Y_i(a) - Y_i(0)]~
\end{equation}
to denote the ATE of treatment $a$ relative to the control and 
$$\theta(Q) \equiv (\theta_a(Q):a\in\mathcal A)= (\theta_1(Q), \dots, \theta_{|\mathcal A|}(Q))' $$
to denote the  $|\mathcal A|$-dimensional vector of such ATEs. Our results permit testing a variety of hypotheses on smooth functions of the vector $\theta(Q)$ at level $\alpha \in (0,1)$. In particular, hypotheses on linear functionals can be written as 
\begin{equation}\label{eq:null-lin}
	H_0 : \Psi\theta(Q) = c \text{ versus } H_1 : \Psi\theta(Q) \neq c~, 
\end{equation}
where $\Psi$ is a full-rank $(r\times |\mathcal A|)$-dimensional matrix and $c$ is a $r$-dimensional column vector. This framework accommodates, for example, hypotheses on a particular ATE, 
\begin{equation} \label{eq:null-1}
H_0 : \theta_a(Q) = c \text{ versus } H_1 : \theta_a(Q) \neq c~,
\end{equation}
as well as hypotheses comparing treatment effects,  
\begin{equation}\label{eq:null-2}
	H_0 : \theta_a(Q) = \theta_{a'}(Q) \text{ versus } H_1 : \theta_a(Q) \neq \theta_{a'}(Q) \text{ for any }a,a'\in \mathcal A~.
\end{equation}
Note that $\theta_a(Q) = \theta_{a'}(Q)$ if and only if $E[Y_i(a)] = E[Y_i(a')]$. We note further that it is also possible to use our results to test smooth non-linear hypotheses on $\theta(Q)$ via the Delta method, but, for ease of exposition, we restrict our attention to linear restrictions as described above in what follows.
 
Finally, we often transform objects that are indexed by $(a,s)\in\mathcal A\times \mathcal S$ into vectors or matrices, using the following conventions. For $X(a)$ being a scalar object indexed over $a\in\mathcal A$, we use $(X(a):a\in\mathcal A)$ to denote the $|\mathcal A|$-dimensional column vector $(X(1),\dots,X(|\mathcal A|))'$. For $X_a(s)$ being a scalar object indexed by $(a,s)\in\mathcal A\times \mathcal S$ we use $(X_a(s): (a,s)\in\mathcal A\times \mathcal S)$ to denote the $(|\mathcal A|\times |\mathcal S|)$-dimensional column vector where the order of the indices matter: first we iterate over $a$ and then over $s$, i.e., $$(X_a(s): (a,s)\in\mathcal A\times \mathcal S) \equiv (X_1(1),\dots,X_{|\mathcal A|}(1),X_1(2),\dots,X_{|\mathcal A|}(2),\dots)'~.  $$

\begin{remark}
The term ``balance'' is often used in a different way to describe whether the distributions of baseline covariates $Z_i$ in the treatment and control groups are similar.  For example, this might be measured according to the difference in the means of $Z_i$ in the treatment and control groups.  Our usage follows the usage in \cite{efron:71} or \cite{hu/hu:12}, where ``balance'' refers to the extent to which the of fraction of treated units within a strata differs from the target proportion $\pi_{a}(s)$.
\end{remark}

\section{``Fully Saturated'' Linear Regression}\label{sec:sat}

In this section, we study the properties of ordinary least squares estimation of a linear regression of the outcome on all interactions between indicators for each of the treatments and indicators for each of the strata under covariate-adaptive randomization.  We then study the properties of different tests of \eqref{eq:null-lin} based on these estimators. As already noted, these tests have not been previously considered in \cite{bugni/canay/shaikh:16}. We consider tests using both the usual homoskedasticity-only and heteroskedasticity-robust estimators of the asymptotic variance.  Our results show that neither of these estimators are consistent for the asymptotic variance, and, as a result, both lead to tests that are asymptotically invalid in the sense that they may have limiting rejection probability under the null hypothesis strictly greater than the nominal level.  In light of these results, we exploit our characterization of the behavior of the ordinary least squares estimator of the coefficients in such a regression under covariate-adaptive randomization to develop a consistent estimator of the asymptotic variance.  Furthermore, tests using our new estimator of the asymptotic variance are exact in the sense that they have limiting rejection probability under the null hypotheses equal to the nominal level.

In order to define the tests we study, consider estimation of the equation
\begin{equation} \label{eq:linear-sat}
Y_i = \sum_{s \in \mathcal S} \delta(s) I\{S_i = s\} + \sum_{(a,s) \in \mathcal A\times \mathcal S} \beta_a(s) I\{A_i=a,S_i = s\}+ u_i
\end{equation}
by ordinary least squares.  For all $s\in\mathcal S$, denote by $\hat \delta_n(s)$ and $\hat \beta_{n,a}(s)$ the resulting estimators of $\delta(s)$ and $\beta_a(s)$, respectively. The corresponding estimator of the ATE of treatment $a$ is given by 
\begin{equation}\label{eq:sat-ATEa}
	\hat\theta_{n,a} = \sum_{s\in\mathcal S} \frac{n(s)}{n}\hat \beta_{n,a}(s)~, 	
\end{equation} 
and the resulting estimator of $\theta(Q)$ is thus given by 
\begin{equation}\label{eq:sat-ATE}
	\hat \theta_n = (\hat\theta_{n,a}:a\in\mathcal A) \equiv (\hat \theta_{n,1}, \dots, \hat \theta_{n,|\mathcal A|})'~.
\end{equation}
Let $\hat{\mathbb V}_n$ be an estimator of the asymptotic covariance matrix of $\hat\theta_n$. For testing the hypotheses in \eqref{eq:null-lin}, we consider tests of the form
\begin{equation} \label{eq:sat-test}
\phi_n^{\text{sat}}(X^{(n)}) = I\{T_n^{\text{sat}}(X^{(n)}) > \chi^2_{r,1 - \alpha}\}~,
\end{equation}
where $$T_{n}^{\text{sat}}(X^{(n)}) = n(\Psi\hat \theta_{n} - c)'(\Psi\hat{\mathbb V}_n\Psi' )^{-1}(\Psi\hat \theta_{n} - c)$$ and $\chi^2_{r,1 - \alpha}$ is the $1 - \alpha $ quantile of a $\chi^2$ random variable with $r$ degrees of freedom.  In order to study the properties of this test, we first derive in the following theorem the asymptotic behavior of $\hat \theta_n$.

\begin{theorem} \label{theorem:sat}
Suppose $Q$ satisfies Assumption \ref{ass:moments} and the treatment assignment mechanism satisfies Assumption \ref{ass:rand}.  Then, 
\begin{equation*} \label{eq:root-sat}
\sqrt n (\hat \theta_{n} - \theta(Q)) \stackrel{d}{\rightarrow} N(0,\mathbb V_{\rm sat})~,
\end{equation*}
where $\mathbb V_{\rm sat} = \mathbb V_{H} + \mathbb V_{\tilde Y}$,

\begin{align}
	\mathbb V_{H} &\equiv \sum_{s\in\mathcal S}p(s)\left( E[m_a(Z_i)-m_0(Z_i)|S_i=s]: a\in\mathcal A \right)\left( E[m_a(Z_i)-m_0(Z_i)|S_i=s]: a\in\mathcal A \right) ' \label{eq:VH} \\
	\mathbb V_{\tilde Y} &\equiv \sum_{s\in\mathcal S}\frac{p(s)\sigma^2_{\tilde Y(0)}(s)}{\pi_0(s)}\iota_{|\mathcal A|}\iota'_{|\mathcal A|} +   \diag \left( \sum_{s\in\mathcal S}\frac{p(s)\sigma^2_{\tilde Y(a)}(s)}{\pi_a(s)}: a\in\mathcal A \right)~,\label{eq:VY}
\end{align}
$\iota_{|\mathcal{A}|}$ is a $|\mathcal A|$-dimensional vector of ones, and 
\begin{align*}
	m_a(Z_i) &\equiv E[Y_i(a)|Z_i] - E[Y_i(a)]\\
	\sigma^2_{\tilde Y(a)}(s) &\equiv \var[\tilde Y_i(a)|S_i=s]\\
	\tilde Y_i(a) &\equiv Y_i(a) - E[Y_i(a)|S_i=s]~. 
\end{align*}
\end{theorem}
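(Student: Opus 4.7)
Because regression \eqref{eq:linear-sat} is saturated in the $\mathcal A_0\times\mathcal S$ cells, the first-order conditions immediately give $\hat\beta_{n,a}(s) = \bar Y_{n,a}(s) - \bar Y_{n,0}(s)$, where $\bar Y_{n,a}(s) \equiv n_a(s)^{-1}\sum_{i:A_i=a,S_i=s} Y_i$. Writing $\mu_a(s) \equiv E[Y_i(a)\mid S_i=s]$, using $Y_i = Y_i(A_i)$, and noting that $\theta_a(Q) = \sum_{s\in\mathcal S} p(s)(\mu_a(s)-\mu_0(s))$, I would decompose
\begin{equation*}
\sqrt n\bigl(\hat\theta_{n,a} - \theta_a(Q)\bigr) = R_{n,a}^{(1)} - R_{n,0}^{(1)} + R_{n,a}^{(2)},
\end{equation*}
where $R_{n,a}^{(1)} \equiv \sqrt n \sum_s \tfrac{n(s)}{n}(\bar Y_{n,a}(s) - \mu_a(s))$ captures within-stratum noise and $R_{n,a}^{(2)} \equiv \sqrt n \sum_s \bigl(\tfrac{n(s)}{n} - p(s)\bigr)(\mu_a(s) - \mu_0(s))$ captures between-stratum fluctuation in the stratum proportions. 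The plan is to show that $(R_{n,a}^{(2)}: a\in\mathcal A)$ and $(R_{n,a}^{(1)} - R_{n,0}^{(1)}:a\in\mathcal A)$ are jointly asymptotically normal with independent components of covariances $\mathbb V_H$ and $\mathbb V_{\tilde Y}$, respectively.

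For the between-stratum term, since $\sum_s(n(s)/n - p(s)) = 0$, I can rewrite $R_{n,a}^{(2)} = n^{-1/2}\sum_{i=1}^n \xi_a(S_i)$ with $\xi_a(s) \equiv (\mu_a(s)-\mu_0(s)) - \theta_a(Q)$. A direct computation identifies $\xi_a(s) = E[m_a(Z_i)-m_0(Z_i)\mid S_i=s]$ and $E[\xi_a(S_i)] = 0$, while Assumption \ref{ass:moments} ensures $E[\xi_a(S_i)^2]<\infty$. The ordinary i.i.d.\ CLT applied to the $|\mathcal A|$-vector $(\xi_a(S_i):a\in\mathcal A)$ then yields $(R_{n,a}^{(2)}:a\in\mathcal A)\stackrel{d}{\to} N(0,\mathbb V_H)$, with the $(a,a')$ entry of $\mathbb V_H$ equal to $E[\xi_a(S_i)\xi_{a'}(S_i)] = \sum_s p(s)\xi_a(s)\xi_{a'}(s)$, which matches \eqref{eq:VH}.

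For the within-stratum term, set $\tilde Y_i(a) \equiv Y_i(a) - \mu_a(S_i)$ and rewrite
\begin{equation*}
R_{n,a}^{(1)} = \sum_{s\in\mathcal S} \frac{n(s)/n}{n_a(s)/n}\cdot\frac{1}{\sqrt n}\sum_{i:A_i=a,S_i=s}\tilde Y_i(a),
\end{equation*}
so that Assumption \ref{ass:rand}.(b) together with the law of large numbers for $n(s)/n$ sends the scalar factor to $1/\pi_a(s)$ in probability. The key step is then a \emph{conditional} CLT: conditional on $(S^{(n)},A^{(n)})$, Assumption \ref{ass:rand}.(a) combined with $Q_n=Q^n$ renders the $\tilde Y_i(a)$ with $A_i=a, S_i=s$ i.i.d.\ with mean zero and variance $\sigma^2_{\tilde Y(a)}(s)$, and independent across pairs $(a,s)\in\mathcal A_0\times\mathcal S$ because distinct pairs use disjoint units drawn from the i.i.d.\ super-sample. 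Since $n_a(s)/n\stackrel{P}{\to} p(s)\pi_a(s)>0$, a Lindeberg CLT (whose condition is immediate from Assumption \ref{ass:moments}) delivers joint conditional convergence of the array $\{n^{-1/2}\sum_{i:A_i=a,S_i=s}\tilde Y_i(a)\}_{(a,s)}$ to a diagonal centered normal with variances $p(s)\pi_a(s)\sigma^2_{\tilde Y(a)}(s)$; because this limit is non-random, convergence extends unconditionally. Applying Slutsky and collecting terms, the common $-R_{n,0}^{(1)}$ entering every coordinate produces the rank-one $\iota_{|\mathcal A|}\iota'_{|\mathcal A|}$ block of $\mathbb V_{\tilde Y}$, while disjoint units supporting distinct $a\neq a'$ in $\mathcal A$ yield its diagonal block, reproducing \eqref{eq:VY}. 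Finally, because $R_{n,a}^{(2)}$ is $S^{(n)}$-measurable and $R_{n,a}^{(1)}$ has mean zero conditional on $S^{(n)}$, the two limiting normals are independent, so the total limit is $N(0,\mathbb V_H + \mathbb V_{\tilde Y})$. I expect the main obstacle to be the conditional-CLT bookkeeping coupled with Slutsky through the random cell sizes $n_a(s)$; the Lindeberg condition itself is routine under Assumption \ref{ass:moments}.
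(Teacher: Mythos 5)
Your decomposition of $\sqrt n(\hat\theta_{n,a}-\theta_a(Q))$ into the within-stratum piece $R^{(1)}_{n,a}-R^{(1)}_{n,0}$ and the stratum-proportion piece $R^{(2)}_{n,a}$ is exactly the decomposition the paper uses (its $L^{(1)}_{n,a}(s)$ and $L^{(2)}_{n,a}(s)$), and your identification of $\mathbb V_H$ via the i.i.d.\ CLT applied to $\xi_a(S_i)=E[m_a(Z_i)-m_0(Z_i)\mid S_i=s]$ is the same as theirs. Where you genuinely diverge is in the key limit theorem for the cell sums $n^{-1/2}\sum_{i:A_i=a,S_i=s}\tilde Y_i(a)$: the paper (Lemma \ref{lemma:CLT-sat}) proves this by a coupling-type construction --- it replaces the $\tilde Y_i(a)$ by i.i.d.\ copies $\tilde Y_i^s(a)$ drawn independently of $(A^{(n)},S^{(n)})$, reorders units by stratum and treatment, and then swaps the random partial-sum endpoints $N(s)+N_a(s)$ for deterministic ones $\lfloor n(F(s)+F_a(s))\rfloor$, so that independence from $\mathbb L^{(2)}_n$ and asymptotic normality come from standard partial-sum results --- whereas you condition on $(S^{(n)},A^{(n)})$ and apply a Lindeberg CLT to the conditionally i.i.d.\ cells, passing to the unconditional statement because the conditional limit is deterministic. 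Both routes are valid; the conditional-CLT route is arguably more transparent about where Assumption \ref{ass:rand}.(a) enters, while the paper's coupling handles the random cell sizes $n_a(s)$ once and for all (you need the subsequence argument you allude to, since $n_a(s)/n\to p(s)\pi_a(s)$ only in probability). One spot to tighten: your stated reason for independence of the two limiting blocks --- that $R^{(2)}_{n,a}$ is $S^{(n)}$-measurable and $R^{(1)}_{n,a}$ is conditionally mean zero --- only delivers uncorrelatedness; the correct justification, which your argument already contains, is that the conditional limit law of the $R^{(1)}$ block given $(S^{(n)},A^{(n)})$ is non-random, so the joint characteristic function factors asymptotically against anything $S^{(n)}$-measurable.
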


\begin{remark} \rm
For each $a \in \mathcal A$, note that 
\begin{eqnarray*}
\sqrt n (\hat \theta_{n,a} - \theta_a(Q)) &=& \sum_{s \in \mathcal S} \left ( \sqrt n \left ( \frac{n(s)}{n} - p(s) \right ) \hat \beta_{n,a}(s) + \sqrt n (\hat \beta_{n,a}(s) - \beta_a(s))p(s) \right ) \\
&=& \sum_{s \in \mathcal S} \left ( \sqrt n \left ( \frac{n(s)}{n} - p(s) \right ) \beta_{a}(s) + \sqrt n (\hat \beta_{n,a}(s) - \beta_a(s))p(s) \right ) + o_P(1)~,
\end{eqnarray*}
where the second equality exploits a novel law of large numbers that accounts for covariate-adaptive randomization (see Lemma \ref{lemma:conv-in-P}) and the central limit theorem.  It is therefore straightforward to derive the conclusion of Theorem \ref{theorem:sat} from the limit in distribution of 
\begin{equation} \label{eq:this}
\left ( \sqrt n \left ( \frac{n(s)}{n} - p(s) \right ) , \sqrt n (\hat \beta_{n,a}(s) - \beta_a(s)) : (a, s) \in \mathcal A \times \mathcal S \right )~.
\end{equation}
The derivation of the limit in distribution of \eqref{eq:this} does not follow from conventional central limit theorems due to covariate-adaptive randomization.  These difficulties are overcome in Lemma \ref{lemma:CLT-sat} in the Appendix using a novel coupling-like argument in combination with results about partial sums.
\end{remark}

The following theorem characterizes the limits in probability for the usual homoskedasticity-only and heteroskedasticity-robust estimators of the asymptotic variance.  It shows, in particular, that neither $\hat{\mathbb V}_{\rm ho}$ nor $\hat{\mathbb V}_{\rm hc}$ are consistent for the asymptotic variance of $\hat \theta_n$, $\mathbb V_{\rm sat}$.

\begin{theorem}\label{theorem:sat-se}
	Suppose $Q$ satisfies Assumption \ref{ass:moments} and the treatment assignment mechanism satisfies Assumption \ref{ass:rand}. Let $\hat{\mathbb V}_{\rm ho}$ be the homoskedasticity-only estimator of the asymptotic variance defined in \eqref{eq:V-ho} and $\hat{\mathbb V}_{\rm hc}$ be the heteroskedasticity-consistent estimator of the asymptotic variance defined in \eqref{eq:V-hc}. Then, 
	\begin{equation*}
		\hat{\mathbb V}_{\rm ho} \overset{P}{\to} \sum_{(a,s)\in \mathcal{A} _{0}\times \mathcal{S}}p(s)\pi_a (s)\sigma _{\tilde{Y}(a)}^{2}(s)\left[ \sum_{s\in \mathcal{S}}\frac{p(s)}{\pi_0(s)}\iota_{|\mathcal{A}|}\iota_{|\mathcal{A}|}'+\diag\left( \sum_{s\in \mathcal{S}}\frac{p(s)}{\pi_a (s)}:a\in \mathcal{A}\right) \right]	 	
	\end{equation*} 
	and 
	\begin{equation*}
		\hat{\mathbb V}_{\rm hc} \overset{P}{\to}\sum_{s\in \mathcal{S}} \frac{p(s)\sigma _{\tilde{Y}(0)}^{2}(s)}{\pi_0 (s)}\iota_{|\mathcal{A}|}\iota_{|\mathcal{A}|}' + \diag\left(\sum_{s\in \mathcal{S}}\frac{p(s)\sigma _{ \tilde{Y}(a)}^{2}(s)}{\pi_a (s)}:a\in \mathcal{A}\right)~.
	\end{equation*} 
\end{theorem}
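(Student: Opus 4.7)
The plan is to exploit the block structure of the saturated regression.  First reparametrize \eqref{eq:linear-sat} as the equivalent cell-means regression $Y_i = \sum_{(a,s)\in\mathcal A_0\times\mathcal S} \gamma_a(s) I\{A_i=a,S_i=s\} + u_i$, where $\gamma_0(s)=\delta(s)$ and $\gamma_a(s)=\delta(s)+\beta_a(s)$ for $a\in\mathcal A$.  Because the indicator columns are mutually orthogonal, $X'X=\diag(n_a(s))$, so $\hat\gamma_{n,a}(s)=\bar Y_n(a,s)$, $\hat\beta_{n,a}(s)=\bar Y_n(a,s)-\bar Y_n(0,s)$, and the OLS residuals equal the within-cell deviations $\hat u_i=Y_i-\bar Y_n(A_i,S_i)$.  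Consequently, the homoskedasticity-only estimator $\hat\sigma^{2}(X'X)^{-1}$ for $\hat\gamma$ is diagonal with entries $\hat\sigma^{2}/n_a(s)$, where $\hat\sigma^{2}=\tfrac{1}{n-|\mathcal A_0||\mathcal S|}\sum_i\hat u_i^{2}$, and the heteroskedasticity-consistent estimator $(X'X)^{-1}X'\diag(\hat u_i^{2})X(X'X)^{-1}$ is diagonal with entries $\hat\sigma^{2}_{n,a}(s)/n_a(s)$, where $\hat\sigma^{2}_{n,a}(s)=\tfrac{1}{n_a(s)}\sum_{i:A_i=a,S_i=s}(Y_i-\bar Y_n(a,s))^{2}$.

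Since distinct cells are uncorrelated under the OLS variance formula and $\hat\theta_{n,a}=\sum_s(n(s)/n)[\hat\gamma_{n,a}(s)-\hat\gamma_{n,0}(s)]$, the $(a,a')$ entry of $\hat{\mathbb V}_{\rm ho}$ takes the closed form
\begin{equation*}
(\hat{\mathbb V}_{\rm ho})_{a,a'} = \hat\sigma^{2}\sum_{s\in\mathcal S}\frac{(n(s)/n)^{2}}{n_0(s)/n} + \mathbf{1}\{a=a'\}\,\hat\sigma^{2}\sum_{s\in\mathcal S}\frac{(n(s)/n)^{2}}{n_a(s)/n},
\end{equation*}
with an analogous formula for $\hat{\mathbb V}_{\rm hc}$ obtained by inserting $\hat\sigma^{2}_{n,0}(s)$ and $\hat\sigma^{2}_{n,a}(s)$ inside the respective sums.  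By the ordinary LLN $n(s)/n\overset{P}{\to}p(s)$, and combined with Assumption \ref{ass:rand}.(b) this gives $n_a(s)/n\overset{P}{\to}p(s)\pi_a(s)$.  Expanding $\hat\sigma^{2}_{n,a}(s)=\tfrac{1}{n_a(s)}\sum_{i:A_i=a,S_i=s}Y_i^{2}-\bar Y_n(a,s)^{2}$ and applying a law of large numbers for cell-restricted averages, one obtains $\bar Y_n(a,s)\overset{P}{\to}E[Y_i(a)\mid S_i=s]$ and $\hat\sigma^{2}_{n,a}(s)\overset{P}{\to}\sigma^{2}_{\tilde Y(a)}(s)$; writing $\hat\sigma^{2}=\tfrac{n}{n-|\mathcal A_0||\mathcal S|}\sum_{(a,s)}(n_a(s)/n)\hat\sigma^{2}_{n,a}(s)$ then yields $\hat\sigma^{2}\overset{P}{\to}\sum_{(a,s)\in\mathcal A_0\times\mathcal S}p(s)\pi_a(s)\sigma^{2}_{\tilde Y(a)}(s)$.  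Substituting these limits and invoking the continuous mapping theorem delivers the two matrices in the statement.

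The hard part is justifying the LLN for cell-restricted averages: conditional on $S^{(n)}$, the random set $\{i:A_i=a,S_i=s\}$ depends on the covariate-adaptive assignment and may induce dependence across units, so the classical i.i.d.\ LLN does not apply directly.  This is precisely what Lemma \ref{lemma:conv-in-P} in the Appendix (already invoked in the proof of Theorem \ref{theorem:sat}) is designed to handle, using Assumption \ref{ass:rand} together with a coupling-type argument.  Once that tool is in hand, the rest is purely algebraic manipulation of the closed forms above.
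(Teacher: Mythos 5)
Your proposal is correct and leads to the same limits, but it organizes the argument differently from the paper, in a way that is somewhat more elementary. The paper works directly in the parametrization \eqref{eq:linear-sat}: it computes the probability limit of $\tfrac{1}{n}\mathbb C_n'\mathbb C_n$ and its inverse for the non-orthogonal design (Lemma \ref{lemma:XX-XY}), the limit of $\tfrac{1}{n}\sum_i \hat u_i^2$ and of the meat matrix $\tfrac{1}{n}\mathbb C_n'\diag(\hat u_i^2)\mathbb C_n$ (Lemma \ref{lemma:residuals}), and then multiplies out $\mathbb R_n(\cdot)\mathbb R_n'$. You instead pass to the cell-means parametrization, where $X'X=\diag(n_a(s))$ and both sandwich matrices are diagonal, so the $(a,a')$ entries of $\hat{\mathbb V}_{\rm ho}$ and $\hat{\mathbb V}_{\rm hc}$ have exact closed forms in terms of $n(s)/n$, $n_a(s)/n$, $\hat\sigma^2$, and $\hat\sigma^2_{n,a}(s)$. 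This buys you two things: you avoid inverting the block matrix $\Sigma_C$ explicitly, and---because $\hat u_i=Y_i-\bar Y_n(A_i,S_i)$ exactly, so $\hat\sigma^2_{n,a}(s)$ is literally the within-cell sample variance---you avoid the step in Lemma \ref{lemma:residuals} that replaces $\hat u_i^2$ by $u_i^2$. The key analytic ingredient is identical in both routes: the law of large numbers for cell-restricted averages under covariate-adaptive randomization (Lemma \ref{lemma:conv-in-P}), which you correctly identify as the non-trivial step and correctly note cannot be replaced by the classical i.i.d.\ LLN. Two small points you should make explicit: (i) the estimators in \eqref{eq:V-ho} and \eqref{eq:V-hc} are defined via $\mathbb R_n$ in the original parametrization, so you need the (standard, but worth one line) observation that the sandwich variance estimator of the linear functional $\hat\theta_n$ is invariant to a nonsingular reparametrization of the regressors, the residuals being unchanged; (ii) the paper's \eqref{eq:V-ho} uses $\tfrac{1}{n}\sum_i\hat u_i^2$ rather than your degrees-of-freedom--corrected $\hat\sigma^2$, which is of course asymptotically immaterial.
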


\begin{remark}
In the special case with a single treatment, i.e.\ $|\mathcal A|=1$, we show in Section \ref{sec:onetreatment} that the limit in probability of $\hat{\mathbb V}_{\rm hc}$ could be strictly smaller than $\mathbb V_{\rm sat}$. Therefore, testing  \eqref{eq:null-lin} using \eqref{eq:sat-test} with $\hat{\mathbb V}_{n}=\hat{\mathbb V}_{\rm hc}$ could lead to over-rejection. In our simulation study in Section \ref{sec:simulations}, we find that the rejection probability may in fact be substantially larger than the nominal level.
\end{remark}

\begin{remark}\label{rem:Ho-Hc}
	It is important to note that in the special case where $|\mathcal A|=1$ and $\pi_1(s)=\frac{1}{2}$ for all $s\in\mathcal S$, both $\hat{\mathbb V}_{\rm ho}$ and $\hat{\mathbb V}_{\rm hc}$ are consistent for $\mathbb V_{\rm sat}$. The particular properties of this special case have been already highlighted by \cite{bugni/canay/shaikh:16} in the cases of the two-sample $t$-test, $t$-test with strata fixed effects, and covariate-adaptive permutation tests. 
\end{remark}

Even though $\hat{\mathbb V}_{\rm hc}$ is generally inconsistent for $\mathbb V_{\rm sat}$, the proof of Theorem \ref{theorem:sat-se} reveals that 

\begin{equation}\label{eq:diag-HC}
	\hat{\mathbb V}_{\rm hc}\overset{P}{\to} \mathbb V_{\tilde Y}~,
\end{equation}
under the same assumptions.  We exploit this observation in the following theorem to construct a consistent estimator of the asymptotic variance.  The theorem further establishes that tests using this new estimator of the asymptotic variance are exact in the sense that they have limiting rejection probability under the null hypotheses equal to the nominal level.

\begin{theorem}\label{theorem:sat-new}
	Suppose $Q$ satisfies Assumption \ref{ass:moments} and the treatment assignment mechanism satisfies Assumption \ref{ass:rand}. Let $\hat{\mathbb V}_{\rm hc}$ be the heteroskedasticity-consistent estimator of the asymptotic variance defined in \eqref{eq:V-hc} and let 
	\begin{equation}\label{eq:hatV-H}
		\hat{\mathbb V}_{H} = \sum_{s\in\mathcal S} \frac{n(s)}{n}\left(\hat\beta_{n,a}(s)-\hat\theta_{n,a}:a\in\mathcal A\right)\left(\hat\beta_{n,a}(s)-\hat\theta_{n,a}:a\in\mathcal A\right)'~,
	\end{equation}
	where $\hat\theta_{n,a}$ is as in \eqref{eq:sat-ATEa} and $\hat\beta_{n,a}(s)$ is the ordinary least squares estimator of $\beta_a(s)$ in \eqref{eq:linear-sat}. 	Then, 

	\begin{equation}\label{eq:Vhat-sat}
		\hat{\mathbb V}_{\rm sat}=\hat{\mathbb V}_{H}+\hat{\mathbb V}_{\rm hc}  \overset{P}{\to} \mathbb V_{\rm sat} = \mathbb V_{H} + \mathbb V_{\tilde Y}~.
	\end{equation}
	In addition, for the problem of testing \eqref{eq:null-lin} at level $\alpha \in (0,1)$, $\phi_n^{{\rm sat}}(X^{(n)})$ defined in \eqref{eq:sat-test} with $\hat{\mathbb V}_{n}=\hat{\mathbb V}_{{\rm sat}}$ satisfies 
\begin{equation} \label{eq:sat-exact}
\lim_{n \rightarrow \infty} E[\phi_n^{\rm sat}(X^{(n)})] = \alpha
\end{equation}
for $Q$ additionally satisfying the null hypothesis, i.e., $\Psi\theta(Q) = c$.
\end{theorem}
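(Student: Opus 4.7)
The plan decomposes into two tasks: (i) prove the consistency $\hat{\mathbb V}_{\rm sat}\overset{P}{\to}\mathbb V_{\rm sat}$ in \eqref{eq:Vhat-sat}, and (ii) deduce the exactness claim \eqref{eq:sat-exact}. For (i), since equation \eqref{eq:diag-HC} from the proof of Theorem \ref{theorem:sat-se} already supplies $\hat{\mathbb V}_{\rm hc}\overset{P}{\to}\mathbb V_{\tilde Y}$ under the stated assumptions, it suffices to establish $\hat{\mathbb V}_{H}\overset{P}{\to}\mathbb V_{H}$. The key observation is that, because \eqref{eq:linear-sat} is fully saturated and the interaction with the control label is absorbed into $\delta(s)$, the OLS estimator reduces to the within-stratum difference of sample means
$$\hat\beta_{n,a}(s)=\bar Y_{n,a}(s)-\bar Y_{n,0}(s),\qquad\bar Y_{n,a}(s)=\frac{1}{n_a(s)}\sum_{i:A_i=a,S_i=s}Y_i.$$

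I would then apply the law of large numbers adapted to covariate-adaptive randomization (the same LLN referenced in Remark 2 in the excerpt and proved in the appendix): Assumption \ref{ass:rand}(a) lets us treat $Y_i$ on the event $\{A_i=a,S_i=s\}$ as a realization of $Y_i(a)|S_i=s$, and Assumption \ref{ass:rand}(b) combined with $n(s)/n\overset{P}{\to}p(s)>0$ guarantees $n_a(s)\to\infty$ in probability at rate $n\pi_a(s)p(s)$. This gives $\bar Y_{n,a}(s)\overset{P}{\to}E[Y_i(a)|S_i=s]$ for every $(a,s)\in\mathcal A_0\times\mathcal S$, hence $\hat\beta_{n,a}(s)\overset{P}{\to}E[Y_i(a)-Y_i(0)|S_i=s]$ and $\hat\theta_{n,a}=\sum_{s}(n(s)/n)\hat\beta_{n,a}(s)\overset{P}{\to}\theta_a(Q)$. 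Subtracting and using iterated expectations with $S_i=S(Z_i)$,
$$\hat\beta_{n,a}(s)-\hat\theta_{n,a}\overset{P}{\to}E[Y_i(a)-Y_i(0)|S_i=s]-\theta_a(Q)=E[m_a(Z_i)-m_0(Z_i)|S_i=s].$$
The continuous mapping theorem applied to the defining expression \eqref{eq:hatV-H}, combined with $n(s)/n\overset{P}{\to}p(s)$, then yields $\hat{\mathbb V}_H\overset{P}{\to}\mathbb V_H$ as in \eqref{eq:VH}, establishing \eqref{eq:Vhat-sat}.

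For (ii), Theorem \ref{theorem:sat} gives $\sqrt n(\hat\theta_n-\theta(Q))\overset{d}{\to} N(0,\mathbb V_{\rm sat})$, so under $\Psi\theta(Q)=c$ the continuous mapping theorem delivers $\sqrt n(\Psi\hat\theta_n-c)\overset{d}{\to} N(0,\Psi\mathbb V_{\rm sat}\Psi')$. To convert this into the $\chi^2_r$ limit of $T_n^{\rm sat}$, I must argue invertibility of $\Psi\mathbb V_{\rm sat}\Psi'$: in \eqref{eq:VY}, the diagonal summand has strictly positive entries by Assumption \ref{ass:moments}, the rank-one summand is positive semidefinite, and $\mathbb V_H$ in \eqref{eq:VH} is positive semidefinite, so $\mathbb V_{\rm sat}$ is positive definite; combined with the full row rank of $\Psi$, this makes $\Psi\mathbb V_{\rm sat}\Psi'$ invertible, and the continuous functional $M\mapsto M^{-1}$ acts continuously at $\Psi\mathbb V_{\rm sat}\Psi'$. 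Slutsky's lemma together with part (i) then gives $T_n^{\rm sat}(X^{(n)})\overset{d}{\to}\chi^2_r$, whence $E[\phi_n^{\rm sat}(X^{(n)})]\to P\{\chi^2_r>\chi^2_{r,1-\alpha}\}=\alpha$.

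The genuinely non-standard ingredient is the within-stratum LLN $\bar Y_{n,a}(s)\overset{P}{\to}E[Y_i(a)|S_i=s]$ under covariate-adaptive randomization, which cannot be obtained by a naive i.i.d.\ argument because $A^{(n)}$ need not be independent across units. Once that tool is available (as in the appendix lemma cited in the Theorem \ref{theorem:sat} remark), the present proof is essentially a continuous-mapping plus Slutsky argument, so I do not anticipate any further substantive obstacle.
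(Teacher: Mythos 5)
Your proposal is correct and follows essentially the same route as the paper: it combines the limit $\hat{\mathbb V}_{\rm hc}\overset{P}{\to}\mathbb V_{\tilde Y}$ from Theorem \ref{theorem:sat-se} with the probability limits of $\hat\beta_{n,a}(s)-\hat\theta_{n,a}$ (which the paper obtains via Lemma \ref{lemma:XX-XY} and you obtain by writing the saturated OLS coefficients as within-stratum differences of cell means and invoking the covariate-adaptive LLN of Lemma \ref{lemma:conv-in-P} --- the same underlying tool), then concludes by the continuous mapping theorem and Theorem \ref{theorem:sat}. Your added verification that $\mathbb V_{\rm sat}$ is positive definite (so that $\Psi\mathbb V_{\rm sat}\Psi'$ is invertible and Slutsky applies) is a detail the paper leaves implicit, and it is correct.
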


\section{Linear Regression with ``Strata Fixed Effects''}\label{sec:sfe}

In this section, we study the properties of ordinary least squares estimation of a linear regression of the outcome on indicators for each of the treatments and indicators for each of the strata under covariate-adaptive randomization.  We then study the properties of different tests of \eqref{eq:null-lin} based on these estimators.  As before, we consider tests using both the usual homoskedasticity-only and heteroskedasticity-robust estimators of the asymptotic variance, and our results show that neither of these estimators are consistent for the asymptotic variance.  We therefore exploit, as in the previous section, our characterization of the behavior of the ordinary least squares estimator of the coefficients in such a regression under covariate-adaptive randomization to develop a consistent estimator of the asymptotic variance, which leads to tests that are exact in the sense that they have limiting rejection probability under the null hypotheses equal to the nominal level.

In order to define the tests we study, consider estimation of the equation
\begin{equation} \label{eq:sfe-reg}
Y_i = \sum_{s \in \mathcal S} \delta^{*}_s I\{S_i = s\} + \sum_{a \in \mathcal A} \beta^{*}_a I\{A_i=a\}+ u_i 
\end{equation}
by ordinary least squares.  Denote by $\hat \beta^{*}_{n,a}$ the resulting estimator of $\beta^{*}_a$ in \eqref{eq:sfe-reg}. The corresponding estimator of the ATE of treatment $a$ is simply given by $\hat \beta^{*}_{n,a}$, and the resulting estimator of $\theta(Q)$ is thus given by 
\begin{equation}\label{eq:sfe-ATE}
	\hat \theta^{*}_n = (\hat \beta^{*}_{n,a}:a\in\mathcal A) \equiv (\hat \beta^{*}_{n,1} , \dots, \hat \beta^{*}_{n,|\mathcal A|} )'~.
\end{equation}
Let $\hat{\mathbb V}^{*}_n$ be an estimator of the asymptotic variance of $\hat\theta^{*}_n$. For testing the hypotheses in \eqref{eq:null-lin}, we consider tests of the form
\begin{equation} \label{eq:sfe-test}
\phi_n^{\text{sfe}}(X^{(n)}) = I\{T_n^{\text{sfe}}(X^{(n)}) > \chi^2_{r,1 - \alpha}\}~,
\end{equation}
where $$T_{n}^{\text{sfe}}(X^{(n)}) = n(\Psi\hat \theta_{n}^* - c)'(\Psi\hat{\mathbb V}_n^*\Psi' )^{-1}(\Psi\hat \theta_{n}^* - c)$$ and $\chi^2_{r,1 - \alpha}$ is the $1 - \alpha $ quantile of a $\chi^2$ random variable with $r$ degrees of freedom.  In order to study the properties of this test, we first derive the asymptotic behavior of $\hat \theta_n^*$.  As mentioned earlier, in order to do so, we impose instead of Assumption \ref{ass:rand} the following assumption, which mildly strengthens it.  We emphasize again that this stronger assumption parallels the assumption made in \cite{bugni/canay/shaikh:16} for the analysis of linear regression with ``strata fixed effects'' in the case of a single treatment and is also satisfied by a wide variety of treatment assignment mechanisms, including Examples \ref{example:srs} and \ref{example:sbr}. 

\begin{assumption} \label{ass:rand-sfe}
The treatment assignment mechanism is such that 
\begin{enumerate}[(a)]
	\item $W^{(n)} \indep A^{(n)} | S^{(n)}$.
	\item $\pi_{a}(s)=\pi_{a}\in (0,1)$ for all $(a,s)\in\mathcal A \times \mathcal S$.
	\item $\left\lbrace \left(\sqrt{n}\left(\frac{n_a(s)}{n(s)}-\pi_a\right):(a,s)\in\mathcal A\times \mathcal S\right) \Big|S^{(n)}\right\rbrace\overset{d}{\to} N(0,\diag(\Sigma_{D}(s)/p(s):s\in\mathcal S))$ a.s.\ where for each $s\in\mathcal S$ and some $\tau(s)\in[0,1]$,
	\begin{equation}
	 	\Sigma_{D}(s) = \tau(s)\left[ \diag(\pi_a:a\in\mathcal A)-(\pi_a:a\in\mathcal A)(\pi_a:a\in\mathcal A)' \right]~.
	 \end{equation} 
\end{enumerate}
\end{assumption}

Assumption \ref{ass:rand-sfe}.(a) is the same as Assumption \ref{ass:rand}.(a) and requires that the treatment assignment mechanism is a function only of the vector of strata and an exogenous randomization device. Assumption \ref{ass:rand-sfe}.(b) requires the target proportion $\pi_{a}(s)$ to be constant across strata. This restriction is required for consistency of $\hat \theta_n^*$ for $\theta(Q)$. Finally, Assumption \ref{ass:rand-sfe}.(c) is stronger than Assumption \ref{ass:rand}.(b) and requires that the (possibly random) fraction of units assigned to treatment $a$ and stratum $s$ is asymptotically normal as the sample size tends to infinity.  In the case of simple random sampling, where each unit is randomly assigned to each treatment with probability $\pi_a$, Assumption  \ref{ass:rand-sfe}.(c) holds with $\tau(s)=1$ for all $s\in\mathcal S$. In this sense, the assumption requires that the treatment assignment mechanism improves ``balance'' relative to simple random sampling.  At the other extreme, we say that the treatment assignment mechanism achieves ``strong balance'' when $\tau(s)=0$ for all $s \in \mathcal S$, which leads to $\Sigma_D(s)$ being a null matrix.  It is straightforward to show that stratified block randomization satisfies Assumption \ref{ass:rand-sfe}.(c) with $\tau(s) = 0$, i.e., that it achieves ``strong balance.'' 

The following theorem derives the asymptotic behavior of $\hat \theta_n^*$:

\begin{theorem} \label{theorem:sfe}
Suppose $Q$ satisfies Assumption \ref{ass:moments} and the treatment assignment mechanism satisfies Assumption \ref{ass:rand-sfe}.  Then, 
\begin{equation*} \label{eq:root-sfe}
\sqrt n (\hat \theta^{*}_{n} - \theta(Q)) \stackrel{d}{\rightarrow} N(0,\mathbb V_{\rm sfe})~,
\end{equation*}
where $\mathbb V_{\rm sfe} = \mathbb V_{H} + \mathbb V_{\tilde Y}+\mathbb V_{A}$, $\mathbb V_{H}$ is as in \eqref{eq:VH}, $\mathbb V_{\tilde Y}$ is as in \eqref{eq:VY} with $\pi_a(s)=\pi_a$ for all $(a,s)\in\mathcal A\times \mathcal S$, and 
\begin{align}
 	\mathbb V_{A} &\equiv \left(\sum_{s\in\mathcal S}p(s)\left(\xi_a(s)\xi_{a'}(s)\frac{\Sigma_{D}(s)_{[a,a']}}{\pi_a\pi_{a'}} - \xi_a(s)\xi_{0}(s)\frac{\Sigma_{D}(s)_{[a,0]}}{\pi_a\pi_{0}}\right.\notag \right. \\
 	&\left. \left.-\xi_{a'}(s)\xi_{0}(s)\frac{\Sigma_{D}(s)_{[a',0]}}{\pi_{a'}\pi_{0}} + \xi_{0}(s)\xi_{0}(s)\frac{\Sigma_{D}(s)_{[0,0]}}{\pi_{0}\pi_{0}}\right):(a,a')\in\mathcal A\times \mathcal A\right)\label{eq:VA-main}
\end{align} 
and 
\begin{equation}\label{eq:Gamma-main}
	\xi_a(s) \equiv E[m_a(Z_i)|S_i=s]-\sum_{a'\in\mathcal A_0} \pi_{a'}E[m_{a'}(Z_i)|S_i=s]~.
\end{equation}
\end{theorem}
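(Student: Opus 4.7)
The plan is to proceed via a Frisch--Waugh--Lovell (FWL) representation of $\hat\theta^*_n$, followed by a stochastic expansion that isolates the three sources of randomness responsible for the three variance components. Let $D_{i,a} = I\{A_i = a\}$ and $\hat\pi_a(s) = n_a(s)/n(s)$. After partialling out the strata indicators, FWL gives $\hat\theta^*_n = (\tilde D'\tilde D/n)^{-1}(\tilde D'\tilde Y/n)$ where $\tilde D_{i,a} = D_{i,a} - \hat\pi_a(S_i)$ for $a \in \mathcal{A}$ and $\tilde Y_i = Y_i - \bar Y_n(S_i)$, with $\bar Y_n(s) = n(s)^{-1}\sum_{i:S_i = s} Y_i$. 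A direct computation using $\hat\pi_a(s) \xrightarrow{P} \pi_a$ (from Assumption \ref{ass:rand-sfe}(b),(c)) shows that $\tilde D'\tilde D/n \xrightarrow{P} \Sigma_\pi = \diag(\pi_a : a\in\mathcal{A}) - (\pi_a : a\in\mathcal{A})(\pi_a : a\in\mathcal{A})'$, which is invertible with $(\Sigma_\pi^{-1})_{a,a'} = \pi_a^{-1}I\{a = a'\} + \pi_0^{-1}$.

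The next step is to derive the linear expansion $\sqrt n(\hat\theta^*_n - \theta(Q)) = \Sigma_\pi^{-1}\sqrt n(v_n - (\tilde D'\tilde D/n)\theta(Q)) + o_P(1)$, where $v_n = \tilde D'\tilde Y/n$. Substituting $Y_i = \sum_{a\in\mathcal{A}_0} Y_i(a)D_{i,a}$ and computing the $a$-th component of $v_n$ yields the representation $v_{n,a} = \sum_{s\in\mathcal{S}}\frac{n(s)}{n}\hat\pi_a(s)\bigl(\bar Y_{n,a}(s) - \sum_{a'\in\mathcal{A}_0}\hat\pi_{a'}(s)\bar Y_{n,a'}(s)\bigr)$. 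Centering this against $(\tilde D'\tilde D/n)\theta(Q)$ and linearizing around $(\hat\pi_a(s), n(s)/n, \bar Y_{n,a}(s)) = (\pi_a, p(s), E[Y_i(a)|S_i = s])$ produces three stochastic terms: one in $\sqrt n(\bar Y_{n,a}(s) - E[Y_i(a)|S_i = s])$, one in $\sqrt n(n(s)/n - p(s))$, and one in $\sqrt n(\hat\pi_a(s) - \pi_a)$. Premultiplying by $\Sigma_\pi^{-1}$ and simplifying should produce exactly the coefficients $\xi_a(s)$ in \eqref{eq:Gamma-main} attached to the third term, because the combination $\pi_a^{-1}E[Y_i(a)|S_i = s] - \pi_0^{-1}E[Y_i(0)|S_i = s]$ aggregated against $\Sigma_\pi^{-1}$ collapses to $\xi_a(s)/\pi_a - \xi_0(s)/\pi_0$.

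Once this expansion is in hand, the joint limiting distribution of the three stochastic sources must be established. Conditioning on $S^{(n)}$: (i) $\sqrt n(n(s)/n - p(s))$ is $S^{(n)}$-measurable and obeys a standard multinomial CLT unconditionally; (ii) the within-stratum sample means and the within-stratum treatment proportions are conditionally independent by Assumption \ref{ass:rand-sfe}(a); (iii) $\sqrt n(\hat\pi_a(s) - \pi_a)$ is handled using Assumption \ref{ass:rand-sfe}(c), which delivers a conditional limit involving $\Sigma_D(s)/p(s)$; (iv) the within-stratum means $\sqrt n(\bar Y_{n,a}(s) - E[Y_i(a)|S_i=s])$ are analyzed using the same coupling/partial-sum argument as in Lemma \ref{lemma:CLT-sat} of the saturated case. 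Because these three sources are asymptotically independent given $S^{(n)}$ and (i) is a smooth function of $S^{(n)}$ only, their contributions add, yielding the three summands $\mathbb V_{\tilde Y}$, $\mathbb V_H$, and $\mathbb V_A$ of $\mathbb V_{\rm sfe}$, respectively.

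The main obstacle will be the $\mathbb V_A$ component, which has no analogue in Theorem \ref{theorem:sat} because the saturated estimator $\hat\beta_{n,a}(s) = \bar Y_{n,a}(s) - \bar Y_{n,0}(s)$ is algebraically insensitive to $\hat\pi_a(s)$, while the SFE estimator is not. Correctly extracting the $\xi_a(s)$ coefficients requires tracking how the deviations $\sqrt n(\hat\pi_a(s) - \pi_a)$ propagate through both the numerator $v_n$ and the Jacobian-type adjustment $(\tilde D'\tilde D/n)\theta(Q)$, and then applying the assumed conditional CLT in Assumption \ref{ass:rand-sfe}(c) to the $|\mathcal{A}_0|$-dimensional vector $(\sqrt n(\hat\pi_a(s) - \pi_a): a \in \mathcal{A}_0)$ per stratum (using $\sum_{a\in\mathcal{A}_0}\pi_a = 1$ to handle the constraint on proportions). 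The quadratic-form structure of $\mathbb V_A$ in \eqref{eq:VA-main}, with the four $\Sigma_D(s)_{[a,a']}/(\pi_a\pi_{a'})$ terms, then arises mechanically from $\xi_a(s)/\pi_a - \xi_0(s)/\pi_0$ being the coefficient attached to each $\sqrt n(\hat\pi_a(s) - \pi_a)$.
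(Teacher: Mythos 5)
Your proposal is correct and follows essentially the same route as the paper: a Frisch--Waugh--Lovell representation of $\hat\theta^*_n$, a linearization that isolates the three stochastic sources $\frac{1}{\sqrt n}\sum_i I\{A_i=a,S_i=s\}\tilde Y_i(a)$, $\sqrt n(n(s)/n-p(s))$, and $\sqrt n(n_a(s)/n(s)-\pi_a)$ with the coefficients $\xi_a(s)/\pi_a-\xi_0(s)/\pi_0$ on the last, and a joint block-diagonal CLT obtained by combining the coupling/partial-sum argument of Lemma \ref{lemma:CLT-sat} with Assumption \ref{ass:rand-sfe}.(c) and the i.i.d.\ CLT for strata frequencies (this is exactly Lemma \ref{lemma:CLT-sfe}). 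The only cosmetic difference is that you parametrize the first source via within-stratum means $\bar Y_{n,a}(s)$ rather than the normalized partial sums used in the paper, which is an equivalent formulation.
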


Lemmas \ref{lemma:sfe-Ho} and \ref{lemma:sfe-He} in the Appendix derive the limit in probability of the usual homoskedasticity-only and heteroskedasticity-consistent estimators of the asymptotic variance of $\hat \theta^{\ast}_n$.  As in the preceding section, these results show that neither of these estimators are consistent for the asymptotic variance of $\hat \theta^{\ast}_n$.  In the special case with only one treatment (i.e., $|\mathcal A|=1$), however, the heteroskedasticity-consistent estimator of the asymptotic variance leads to tests that are asymptotically conservative in the sense that they have limiting rejection probability under the null hypothesis no greater than the nominal level.  See \citet[][Theorem 4.3]{bugni/canay/shaikh:16} and Section \ref{sec:onetreatment} below for further discussion.  In light of these results, the following theorem constructs a consistent estimator of the asymptotic variance of $\hat \theta_n^*$.  The theorem further establishes that tests using this new estimator of the asymptotic variance are exact in the sense that they have limiting rejection probability under the null hypotheses equal to the nominal level.  Before proceeding, we note, however, that the theorem imposes the additional requirement that the randomization scheme achieves ``strong balance,'' i.e,. that $\tau(s) = 0$ for all $s \in \mathcal S$.  While it is possible to derive consistent estimators of the asymptotic variance of $\hat \theta_n^*$ even when this is not the case, it follows from Theorem \ref{thm:LocalPower} in the Appendix that when each test is used with a consistent estimator for the appropriate asymptotic variance, $\phi_n^{\text{sfe}}(X^{(n)})$ is in general less powerful along a sequence of local alternatives than $\phi_n^{\text{sat}}(X^{(n)})$ except in the case of ``strong balance.''  Indeed, it follows immediately from Theorems \ref{theorem:sat} and \ref{theorem:sfe} that the asymptotic variance of $\hat \theta^{*}_n$ coincides with the asymptotic variance of $\hat \theta_n$ for randomization schemes that achieve ``strong balance.''  For this reason, we view the case of randomization schemes that achieve ``strong balance'' as being the most relevant.

\begin{theorem}\label{theorem:sfe-new}
	Suppose $Q$ satisfies Assumption \ref{ass:moments} and the treatment assignment mechanism satisfies Assumption \ref{ass:rand-sfe} with $\tau(s)=0$ for all $s\in\mathcal S$. Let $\hat{\mathbb V}_{\rm hc}$ be the heteroskedasticity-consistent estimator of the asymptotic variance defined in \eqref{eq:V-hc} and let $\hat{\mathbb V}_{H}$ be defined as in \eqref{eq:hatV-H}. Then, 
	\begin{equation}\label{eq:Vhat-sfe}
		\hat{\mathbb V}_{\rm sfe}=\hat{\mathbb V}_{H} + \hat{\mathbb V}_{\rm hc}  \overset{P}{\to} \mathbb V_{\rm sfe} = \mathbb V_{H} + \mathbb V_{\tilde Y}~.
	\end{equation}
	In addition, for the problem of testing \eqref{eq:null-lin} at level $\alpha \in (0,1)$, $\phi_n^{{\rm sfe}}(X^{(n)})$ defined in \eqref{eq:sfe-test} with $\hat{\mathbb V}_{n}=\hat{\mathbb V}_{{\rm sfe}}$ satisfies 
	\begin{equation} \label{eq:sfe-exact}
		\lim_{n \rightarrow \infty} E[\phi_n^{\rm sfe}(X^{(n)})] = \alpha
	\end{equation}
	for $Q$ additionally satisfying the null hypothesis, i.e., $\Psi\theta(Q) = c$.
\end{theorem}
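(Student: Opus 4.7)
The plan is to leverage Theorem \ref{theorem:sfe} and Theorem \ref{theorem:sat-new}, exploiting the structural observation that the strong balance condition $\tau(s)=0$ for all $s \in \mathcal S$ reduces $\mathbb V_{\rm sfe}$ to $\mathbb V_H + \mathbb V_{\tilde Y}$, which is the same limit targeted by the saturated-regression variance estimator from Section \ref{sec:sat}. Indeed, when $\tau(s)=0$, Assumption \ref{ass:rand-sfe}(c) gives $\Sigma_D(s)=0$ for every $s$, and hence $\mathbb V_A = 0$ by direct inspection of \eqref{eq:VA-main}. Theorem \ref{theorem:sfe} then reads $\sqrt n (\hat \theta_n^* - \theta(Q)) \overset{d}{\to} N(0, \mathbb V_H + \mathbb V_{\tilde Y})$, and the entire task reduces to establishing consistency of $\hat{\mathbb V}_{\rm sfe}$ for this common limit and applying the continuous mapping theorem.

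For the consistency claim $\hat{\mathbb V}_{\rm sfe} \overset{P}{\to} \mathbb V_{\rm sfe}$, I would first note that Assumption \ref{ass:rand-sfe} implies Assumption \ref{ass:rand}: part (a) is identical, while the conditional CLT in Assumption \ref{ass:rand-sfe}(c) yields $n_a(s)/n(s) - \pi_a = o_P(1)$, so Assumption \ref{ass:rand}(b) holds with the constant target proportions $\pi_a(s)=\pi_a$ of Assumption \ref{ass:rand-sfe}(b). The hypotheses of Theorem \ref{theorem:sat-new} are therefore satisfied, and \eqref{eq:Vhat-sat} delivers $\hat{\mathbb V}_{\rm sfe} = \hat{\mathbb V}_H + \hat{\mathbb V}_{\rm hc} \overset{P}{\to} \mathbb V_H + \mathbb V_{\tilde Y}$, which equals $\mathbb V_{\rm sfe}$ under strong balance.

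For the asymptotic exactness of the test, I would combine the previous two steps with Slutsky's theorem and the continuous mapping theorem. Under the null $\Psi \theta(Q) = c$, Theorem \ref{theorem:sfe} gives $\sqrt n (\Psi \hat \theta_n^* - c) \overset{d}{\to} N(0, \Psi \mathbb V_{\rm sfe} \Psi')$. Assumption \ref{ass:moments} ensures $\sigma^2_{\tilde Y(a)}(s) > 0$ for at least one $s$ and every $a$, which makes the diagonal of $\mathbb V_{\tilde Y}$ strictly positive; combined with $\mathbb V_H$ being positive semi-definite and $\Psi$ having full row rank, the matrix $\Psi \mathbb V_{\rm sfe}\Psi'$ is positive definite, hence invertible. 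Consistency of $\hat{\mathbb V}_{\rm sfe}$ and continuous mapping then yield $(\Psi \hat{\mathbb V}_{\rm sfe} \Psi')^{-1} \overset{P}{\to} (\Psi \mathbb V_{\rm sfe}\Psi')^{-1}$, so $T_n^{\rm sfe} \overset{d}{\to} \chi_r^2$ and hence $\lim_n E[\phi_n^{\rm sfe}(X^{(n)})] = P(\chi_r^2 > \chi^2_{r,1-\alpha}) = \alpha$. The main substantive point—rather than a technical obstacle—is recognizing that the variance estimator built from the saturated-regression coefficients $\hat\beta_{n,a}(s)$ remains consistent for the asymptotic variance of the SFE point estimator $\hat \theta_n^*$ precisely when strong balance holds; outside this regime, the residual imbalance term $\mathbb V_A$ would require a separate estimator of the conditional covariance $\Sigma_D(s)$, which strong balance conveniently eliminates.
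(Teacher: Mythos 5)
Your proposal is correct and follows essentially the same route the paper takes (indeed, the paper gives no separate proof of this theorem, precisely because it reduces to combining Theorem \ref{theorem:sfe} with $\tau(s)=0$ forcing $\mathbb V_A=0$, the consistency of $\hat{\mathbb V}_H+\hat{\mathbb V}_{\rm hc}$ already established in the proof of Theorem \ref{theorem:sat-new}, and Slutsky's theorem). The only minor imprecision is that positive definiteness of $\mathbb V_{\tilde Y}$ follows not merely from its diagonal being positive but from its structure as $c_0\,\iota_{|\mathcal A|}\iota_{|\mathcal A|}'$ plus a diagonal matrix with strictly positive entries; this is immediate and does not affect the argument.
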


\section{The Case of a Single Treatment}\label{sec:onetreatment}

In this section we consider the special case where $|\mathcal A|=1$ to better illustrate the results we derived for the general case and to compare them to those in \cite{imbens/rubin:15}. When $|\mathcal A|=1$, $\theta(Q)$ is a scalar parameter and the asymptotic variances in Theorems \ref{theorem:sat} and \ref{theorem:sfe} become considerably simpler.

Consider first the the ``fully saturated'' linear regression.  Applying Theorem \ref{theorem:sat} to the case $|\mathcal A|=1$ shows that $\sqrt{n}(\hat \theta_n-\theta(Q))$ tends in distribution to a normal random variable with mean zero and variance equal to
\begin{equation*}
	\mathbb{V}_{\rm sat} = \varsigma_{H}^2 + \varsigma_{\tilde Y}^2 ~,
\end{equation*}
where
\begin{align}
	 \varsigma_{H}^2 &\equiv \sum_{s\in\mathcal S}p(s)\left( E[m_1(Z_i)-m_0(Z_i)|S_i=s]\right)^2\label{eq:varsigma-H}\\
	\varsigma_{\tilde Y}^2 &\equiv \sum_{s\in\mathcal S}p(s)\left( \frac{\sigma^2_{\tilde Y(0)}(s)}{\pi_0(s)}+ \frac{\sigma^2_{\tilde Y(1)}(s)}{\pi_1(s)} \right)~.\label{eq:varsigma-Y}
\end{align}
In addition, it follows from Theorem \ref{theorem:sat-se} and \eqref{eq:diag-HC} that the usual heteroskedasticity-consistent estimator of the asymptotic variance of $\hat \theta_n$ converges in probability to $\varsigma_{\tilde Y}^2$.  As a result, tests based on $\hat \theta_n$ and this estimator for the asymptotic variance lead to over-rejection under the null hypothesis whenever $\varsigma_{H}^2>0$. 

\citet[][Ch.\ 9]{imbens/rubin:15} study the properties of $\hat \theta_n$ when $|\mathcal A|=1$ and the treatment assignment mechanism is stratified block randomization, which satisfies the hypotheses of Theorem \ref{theorem:sat}. In contrast to our results,  \citet[][Theorem 9.2, page 207]{imbens/rubin:15} conclude that $\sqrt{n}(\hat \theta_n-\theta(Q))$ tends in distribution to a normal random variable with mean zero and variance equal to $\varsigma_{\tilde Y}^2$. In other words, the results in \cite{imbens/rubin:15} coincide with our results when the model is sufficiently homogeneous in the sense that $ \varsigma_{H}^2=0$. This condition can be alternatively written as  
\begin{equation}\label{eq:condition}
	E[Y_i(1)-Y_i(0)|S_i=s]=E[Y_i(1)-Y_i(0)]\quad \text{for all }s\in\mathcal S~.
\end{equation} 
When this condition does not hold, however, our results differ from those in \cite{imbens/rubin:15} and lead to tests that are asymptotically exact under arbitrary heterogeneity. In Section \ref{sec:simulations}, we show further that tests based on $\hat \theta_n$ and a consistent estimator of $\varsigma_{\tilde Y}^2$ only may over-reject dramatically when $ \varsigma_{H}^2$ is indeed positive.

Now consider the linear regression with ``strata fixed effects.'' Applying Theorem \ref{theorem:sfe} to the case $|\mathcal A|=1$ shows that $\sqrt{n}(\hat \theta^{*}_n-\theta(Q))$ tends in distribution to a normal random variable with mean zero and variance equal to 
\begin{equation*}
\mathbb{V}_{\rm sfe} = \varsigma_{H}^2 + \varsigma_{\tilde Y}^2 +\varsigma_{A}^2 ~,
\end{equation*}
where $ \varsigma_{H}^2$ is as in \eqref{eq:varsigma-H}, $ \varsigma_{\tilde Y}^2$ is as in \eqref{eq:varsigma-Y}, and 
\begin{equation}\label{eq:varsigma-A}
	\varsigma^2_{A} =  \frac{(1-2\pi_1)^2}{\pi_1(1-\pi_1)}\sum_{s\in\mathcal S}\tau(s)p(s)\left(E[m_1(Z)|S=s]-E[m_0(Z)|S=s] \right)^2~.
\end{equation}
For treatment assignment mechanisms that achieve ``strong balance,'' we have in particular that $\mathbb V_{\rm sfe} = \varsigma_{H}^2 + \varsigma_{\tilde Y}^2$. Furthermore, applying Lemmas \ref{lemma:sfe-Ho} and \ref{lemma:sfe-He} in the Appendix to the case $|\mathcal A|=1$ and $\tau(s)=0$ shows that the usual homoskedasticity-only estimator of the asymptotic variance is generally inconsistent for $\mathbb V_{\rm sfe}$, while the heteroskedasticity-consistent  estimator of the variance, $\hat{\mathbb V}^{\ast}_{\rm hc}$, satisfies   
\begin{equation}\label{eq:hc-sfe-limit}
	\hat{\mathbb V}^{\ast}_{\rm hc} \overset{P}{\to} \left[\frac{1}{\pi_1(1-\pi_1)}-3 \right]\varsigma_{H}^2+\varsigma_{\tilde Y}^2~,
\end{equation}
which is strictly greater than $\mathbb{V}_{\rm sfe}$, unless $\varsigma_{H}^2=0$ or $\pi_1=\frac{1}{2}$. In other words, when $|\mathcal A|=1$ and $\tau(s)=0$ for all $s\in\mathcal S$, tests of \eqref{eq:null-lin} based on $\hat \theta_n^*$ and the usual the heteroskedasticity-consistent estimator of the asymptotic variance $\hat{\mathbb V}^{\ast}_{\rm hc} $ are asymptotically conservative unless $\varsigma_{H}^2=0$ or $\pi_1=\frac{1}{2}$.  See \citet[][Theorem 4.3]{bugni/canay/shaikh:16} for a formal statement of this result.  

\citet[][Ch.\ 9]{imbens/rubin:15} also study the properties of $\hat \theta^{*}_n$ when $|\mathcal A|=1$ and the treatment assignment mechanism is stratified block randomization, which satisfies the hypotheses of Theorem \ref{theorem:sfe}.  In particular, stratified block randomization satisfies Assumption \ref{ass:rand-sfe} with $\tau(s)=0$ for all $s\in\mathcal S$, so $\varsigma_A^2 = 0$. In contrast to our results, \citet[][Theorem 9.1, page 206]{imbens/rubin:15} conclude that $\sqrt{n}(\hat \theta^{*}_n-\theta(Q))$ tends in distribution to a normal random variable with mean zero and variance that can be expressed in our notation as 
\begin{equation*}\label{eq:IR-sfe}
	\left[\frac{1}{\pi_1(1-\pi_1)}-3 \right]\varsigma_{H}^2+\varsigma_{\tilde Y}^2~.
\end{equation*}
This asymptotic variance is strictly greater than $\mathbb V_{\rm sfe}$ unless $\varsigma_{H}^2=0$ or $\pi_1=\frac{1}{2}$, and it coincides with the limit in probability of the heteroskedasticity-consistent estimator of the asymptotic variance in \eqref{eq:hc-sfe-limit}.  As in the case of the ``fully saturated'' linear regression, the results in \cite{imbens/rubin:15} coincide with our results when the model is sufficiently homogeneous in the sense that condition \eqref{eq:condition} holds. When this condition does not hold, however, our results differ from those in \cite{imbens/rubin:15} and lead to tests that are asymptotically exact under arbitrary heterogeneity. In Section \ref{sec:simulations}, we again show that tests based on $\hat \theta_n^*$ and the usual heteroskedasticity-consistent estimator of the asymptotic variance may over-reject dramatically under the null hypothesis.

\begin{remark}
An inspection of the proofs of Theorems \ref{theorem:sat} and \ref{theorem:sfe} reveals that the $\varsigma_H^2$ term in the expressions for the variances of our limiting distributions of $\sqrt n (\hat \theta_n - \theta(Q))$ and $\sqrt n (\hat \theta_n^* - \theta(Q))$ stems from the contribution of a term involving $\left ( \sqrt n \left ( \frac{n(s)}{n} - p(s) \right ) : s \in \mathcal S\right ) $. It follows from this observation that it may be possible to reconcile the differences between our analysis and that in \citet[][Ch.\ 9]{imbens/rubin:15} by considering an alternative sampling framework where $\frac{n(s)}{n}$ is constant with $n$. 
\end{remark}

\section{Monte Carlo Simulations}\label{sec:simulations}
In this section, we examine the finite-sample performance of several tests for the hypotheses in \eqref{eq:null-lin}, including those introduced in Sections \ref{sec:sat} and \ref{sec:sfe}, with a simulation study.  For $a \in\mathcal A$ and $1 \leq i \leq n$, potential outcomes are generated in the simulation study according to the equation:
\begin{equation}
	Y_i(a) = \mu_{a} + (m_a(Z_i)-M_a) + \sigma_{a}(Z_i)\epsilon_{a,i}~,
\end{equation}
where $\mu_a$, $m_a(Z_i)$, $\sigma_{a}(Z_i)$, $M_a$, and $\epsilon_{a,i}$ are defined below.  In each specification, $n = 500$, $\{(Z_i,\epsilon_{0,i},\epsilon_{1,i}) : 1 \leq i \leq n\}$ are i.i.d.\ with $Z_i$, $\epsilon_{0,i}$, and $\epsilon_{1,i}$ all being independent of each other, and $M_a=E[m_a(Z_i)]$. We focus on the case $|\mathcal A|=1$ with $\pi_1(s)=\pi$ for all $s\in\mathcal S$ in order to be able to compare the tests studied in Sections \ref{sec:sat} and \ref{sec:sfe}; but also consider the case where $\pi_1(s)\ne \pi_1(s')$ for $s\ne s'$. 
\begin{itemize} 
\item[] {\bf Model 1}: $Z_i \sim \text{Beta}(2,2)$ (re-centered and re-scaled by the population mean and variance to have mean zero and variance one); $\sigma_0(Z_i) =\sigma_0= 1$ and $\sigma_1(Z_i) = \sigma_1$; $\epsilon_{0,i} \sim N(0,1)$ and $\epsilon_{1,i} \sim N(0,1)$; $m_0(Z_i) = m_1(Z_i) = \gamma Z_i$. In this case, $$Y_i = \mu_0 + (\mu_1 - \mu_0)A_i + \gamma Z_i + \eta_i~,$$ where $$\eta_i = \sigma_1 A_i \epsilon_{1,i} + \sigma_0 (1 - A_i) \epsilon_{0,i}$$ and $E[\eta_i | A_i, Z_i] = 0$.
\item[] {\bf Model 2}: As in Model 1, but $m_0(Z_i) = -\gamma\log(Z_i+3)I\{Z_i\le \frac{1}{2}\}$.
\item[] {\bf Model 3}: As in Model 2, but $\sigma_a(Z_i) = \sigma_a |Z_i|$.
\item[] {\bf Model 4}: $Z_i \sim \text{Unif}(-2,2)$; $\epsilon_{0,i} \sim \frac{1}{3}t_{3}$ and $\epsilon_{1,i} \sim \frac{1}{3}t_{3}$; $\sigma_a(Z_i) = \sigma_a |Z_i|$; and  
\begin{equation*}
m_0(Z_i) =
\begin{cases}
\gamma Z_i^2 & \text{ if } Z_i \in [-1,1] \\
\gamma Z_i & \text{ otherwise }
\end{cases}
~\quad \text{ and }\quad 
m_1(Z_i) =
\begin{cases}
\gamma Z_i & \text{ if } Z_i \in [-1,1] \\
\gamma Z_i^2 & \text{ otherwise }
\end{cases}~.
\end{equation*}
\end{itemize}
Treatment status is determined according to one of the following four different covariate-adaptive randomization schemes:
\begin{itemize}
\item[] {\bf SRS}: Treatment assignment is generated as in Example \ref{example:srs}.
\item[] {\bf SBR}: Treatment assignment is generated as in Example \ref{example:sbr}.
\end{itemize}
In each case, strata are determined by dividing the support of $Z_i$ into $|\mathcal S|$ intervals of equal length and letting $S(Z_i)$ be the function that returns the interval in which $Z_i$ lies. In all cases, observed outcomes $Y_i$ are generated according to \eqref{eq:obsy}.  Finally, for each of the above specifications, we consider different values of $(|\mathcal S|,\pi,\gamma,\sigma_1)$ and consider both $(\mu_0,\mu_1) = (0,0)$ (i.e., under the null hypothesis that $\theta=\mu_1-\mu_0=0$) and $(\mu_0,\mu_1) = (0,0.2)$ (i.e., under the alternative hypothesis with $\theta=0.2$). 

The results of our simulations are presented in Tables \ref{tab:T1}--\ref{tab:T4} below.  Rejection probabilities are computed using $10^4$ replications.  Columns are labeled in the following way:
\begin{itemize}
\item[] {\bf SAT}: The $t$-test from the ``fully saturated'' linear regression studied in Section \ref{sec:sat}. We report results for this test using the homoskedasticity-only (`HO'), heteroskedasticity-robust (`HC'), and the new (`NEW') consistent (as in Theorem \ref{theorem:sat-new}),  estimators of the asymptotic variance.

\item[] {\bf SFE}: The $t$-test from the linear reression with ``strata fixed effects'' studied in Section \ref{sec:sfe}. We report results for this test using the homoskedasticity-only (`HO'), heteroskedasticity-robust (`HC'), and the new (`NEW') consistent (as in Theorem \ref{theorem:sat-new}),  estimators of the asymptotic variance.
\end{itemize}

Table \ref{tab:T1} displays the results of our baseline specification, where $(|\mathcal S|,\pi,\gamma,\sigma_1)=(10,0.3,1,1)$. Table \ref{tab:T2} displays the results for $(|\mathcal S|,\pi,\gamma,\sigma_1)=(10,0.3,2,1)$, to explore sensitivity to changes in $\gamma$. Tables \ref{tab:T3} and \ref{tab:T4} replace $\pi=0.3$ with $\pi=0.7$, so $(|\mathcal S|,\pi,\gamma,\sigma_1)=(10,0.7,1,1)$ and $(|\mathcal S|,\pi,\gamma,\sigma_1)=(10,0.7,2,1)$. Finally, Table \ref{tab:T5} considers the baseline specification but with $\pi_1(s)\ne \pi_1(s')$ for $s\ne s'$, i.e., 
\begin{equation}\label{eq:pis}
	(\pi_1(1),\dots,\pi_1(|\mathcal S|)) =(0.20,0.25,0.30,0.35,0.40,0.60,0.65,0.70,0.75,0.80)~.
\end{equation}
We organize our discussion of the results by test:

\begin{table}[t!]
	\begin{center}
	\scalebox{0.9}{
{\small 
\begin{tabular}{cl|cccccccc|ccccccc}
    \hline\hline
	\multicolumn{2}{c}{} & \multicolumn{7}{c}{Rejection rate under $H_0$: $\theta=0$}
           &\multicolumn{1}{c}{} & \multicolumn{7}{c}{Rejection rate under $H_1$: $\theta=0.2$} \\  
   \multicolumn{2}{c}{} & \multicolumn{3}{c}{SAT} & & \multicolumn{3}{c}{SFE} & \multicolumn{1}{c}{}& \multicolumn{3}{c}{SAT} & & \multicolumn{3}{c}{SFE}  \\ \cline{3-5} \cline{7-9} \cline{11-13}\cline{15-17}
       M  & \multicolumn{1}{c}{CAR} & HO & HC & NEW & & HO & HC & \multicolumn{1}{c}{NEW}& \multicolumn{1}{c}{} &HO & HC & NEW & & HO & HC & \multicolumn{1}{c}{NEW} \\
\hline 
1 & SRS & 5.13 & 5.30 & 5.27 & & 5.08 & 5.14 & 5.17 & & 81.96 & 82.11 & 82.08 & & 82.01 & 82.06 & 82.15  \\ 
  & SBR & 4.74 & 4.98 & 4.92 & & 4.71 & 4.88 & 4.93 & & 82.25 & 82.44 & 82.32 & & 82.21 & 82.17 & 82.31 \\ 
\hline
2 & SRS & 6.65 & 6.84 & 4.93 & & 6.31 & 5.05 & 5.08 & & 80.18 & 80.77 & 75.71 & & 75.91 & 72.58 & 72.66 \\  
  & SBR & 6.75 & 4.63 & 4.60 & & 4.74 & 3.58 & 4.63 & & 79.63 & 79.94 & 75.14 & & 75.75 & 71.91 & 75.77 \\ 
\hline
3 & SRS & 7.69 & 7.79 & 5.17 & & 6.25 & 4.86 & 4.89 & & 84.84 & 84.93 & 80.87 & & 80.10 & 76.98 & 77.06 \\
  & SBR & 7.19 & 4.59 & 4.52 & & 4.53 & 3.34 & 4.59 & & 85.11 & 85.16 & 80.58 & & 81.14 & 77.75 & 81.08\\   
\hline
4 & SRS & 20.04&19.22 & 5.06 & &10.80 & 5.12 & 5.13 & & 92.44 & 91.93 & 79.17 & & 76.45 & 65.00 & 65.11 \\
  & SBR & 19.92&19.16 & 5.19 & & 5.92 & 2.21 & 5.35 & & 92.91 & 92.37 & 79.10 & & 80.19 & 67.16 & 78.98 \\  
\hline \hline
\end{tabular}} 
}
	\caption{\small Treatment assignment implemented via simple random sampling (SRS) and stratified block randomization (SBR). SAT and SFE tests implemented with homoskedastic-only (HO), heteroskedasticity-consistent (HC), and newly developed (NEW) standard errors. Parameter values: $(|\mathcal S|,\pi,\gamma,\sigma_1)=(10,0.3,1,1)$.}
	\label{tab:T1}
	\end{center}
\end{table}

\begin{table}[t!] 
	\begin{center}
	\scalebox{0.9}{
{\small 
\begin{tabular}{cl|cccccccc|ccccccc}
    \hline\hline
	\multicolumn{2}{c}{} & \multicolumn{7}{c}{Rejection rate under $H_0$: $\theta=0$}
           &\multicolumn{1}{c}{} & \multicolumn{7}{c}{Rejection rate under $H_1$: $\theta=0.2$} \\  
   \multicolumn{2}{c}{} & \multicolumn{3}{c}{SAT} & & \multicolumn{3}{c}{SFE} & \multicolumn{1}{c}{}& \multicolumn{3}{c}{SAT} & & \multicolumn{3}{c}{SFE}  \\ \cline{3-5} \cline{7-9} \cline{11-13}\cline{15-17}
       M  & \multicolumn{1}{c}{CAR} & HO & HC & NEW & & HO & HC & \multicolumn{1}{c}{NEW}& \multicolumn{1}{c}{} &HO & HC & NEW & & HO & HC & \multicolumn{1}{c}{NEW} \\
\hline 
1 & SRS & 8.57 & 5.06 & 5.07 & & 8.41 & 4.85 & 4.87 & & 66.73 & 58.45 & 58.55 & & 67.22 & 58.37 & 58.47 \\ 
  & SBR & 8.51 & 5.10 & 5.05 & & 8.42 & 5.00 & 5.06 & & 67.57 & 59.03 & 58.79 & & 67.43 & 58.64 & 58.80\\ 
\hline
2 & SRS & 14.35& 10.16& 5.31 & & 10.85& 5.39 & 5.44 & & 65.42 & 58.17 & 45.91 & & 53.33 & 39.88 & 39.93  \\ 
  & SBR & 14.58& 9.80 & 5.06 & & 7.50 & 3.15 & 5.10 & & 65.87 & 58.93 & 46.96 & & 54.53 & 39.72 & 47.68  \\ 
\hline
3 & SRS & 14.73& 10.45& 5.25 & & 10.23& 5.09 & 5.10 & & 69.79 & 63.22 & 49.71 & & 56.39 & 43.53 & 43.64  \\ 
  & SBR & 15.02& 10.55& 4.88 & & 6.96 & 2.89 & 4.97 & & 71.28 & 64.39 & 49.93 & & 57.48 & 41.88 & 51.10  \\ 
\hline
4 & SRS & 31.22& 26.06& 5.28 & & 12.35& 5.39 & 5.41  & & 73.57 & 69.41 & 36.25 & & 42.20 & 26.50 & 26.56  \\ 
  & SBR & 32.00& 26.69& 5.00 & & 6.56 & 1.82 & 5.09  & & 74.30 & 69.97 & 36.60 & & 40.38 & 21.48 & 36.56\\ 
\hline \hline
\end{tabular}}

}
	\caption{\small Treatment assignment implemented via simple random sampling (SRS) and stratified block randomization (SBR). SAT and SFE tests implemented with homoskedastic-only (HO), heteroskedasticity-consistent (HC), and newly developed (NEW) standard errors. Parameter values: $(|\mathcal S|,\pi,\gamma,\sigma_1)=(10,0.3,2,\sqrt{2})$.}
	\label{tab:T2}
	\end{center}
\end{table}

\begin{itemize}
\item[] {\bf SAT}:  As expected in light of Theorems \ref{theorem:sat} and \ref{theorem:sat-se}, the test $\phi_n^{\rm sat}(X^{(n)})$ in \eqref{eq:sat-test} when $\hat{\mathbb V}_n$ is either the homoskedasticity-only or heteroskedasticity-consistent estimator of the asymptotic variance may over-reject under the null hypothesis. Indeed, in some cases (Model 4 in Table \ref{tab:T2}) the rejection probability under the null hypothesis could be as high as $32\%$ for the homoskedasticity-only case and $30\%$ for the heteroskedasticity-consistent case. This over-rejection happens both, under simple random sampling and stratified block randomization. Finally, and consistent with the results in Section \ref{sec:onetreatment}, whenever $Q$ is such that $\mathbb{V}_H=0$, as it is the case in Model 1, the test with the heteroskedasticity-consistent estimator of the asymptotic variance is asymptotically exact. 

According to Theorem \ref{theorem:sat-new}, the test $\phi_n^{\rm sat}(X^{(n)})$ in \eqref{eq:sat-test} when $\hat{\mathbb V}_n$ is given by the new consistent estimator of the asymptotic variance in \eqref{eq:Vhat-sat} is asymptotically exact across all the specifications we consider. Indeed, the rejection probability under the null hypothesis is very close to the nominal level in all models and all tables. The rejection probability under the alternative hypothesis is the highest under simple random sampling among the tests that are asymptotically exact and do not over-reject under the null hypothesis. Under stratified block randomization, and given that in this case $\tau(s)=0$ for all $s\in\mathcal S$, the rejection probability under the alternative hypothesis is effectively the same as that of $\phi_n^{\rm sfe}(X^{(n)})$ with the new consistent estimator of the asymptotic variance in \eqref{eq:Vhat-sfe}. These results are in line with the theoretical results described in Section \ref{sec:sfe}. Finally, Table \ref{tab:T5} illustrates that the results for $\phi_n^{\rm sat}(X^{(n)})$ with the new consistent estimator of the asymptotic variance are not affected by whether $\pi_1(s)$ is the same across strata $s\in\mathcal S$ or not. 

\begin{table}[t!]
	\begin{center}
	\scalebox{0.9}{
{\small 
\begin{tabular}{cl|cccccccc|ccccccc}
    \hline\hline
	\multicolumn{2}{c}{} & \multicolumn{7}{c}{Rejection rate under $H_0$: $\theta=0$}
           &\multicolumn{1}{c}{} & \multicolumn{7}{c}{Rejection rate under $H_1$: $\theta=0.2$} \\  
   \multicolumn{2}{c}{} & \multicolumn{3}{c}{SAT} & & \multicolumn{3}{c}{SFE} & \multicolumn{1}{c}{}& \multicolumn{3}{c}{SAT} & & \multicolumn{3}{c}{SFE}  \\ \cline{3-5} \cline{7-9} \cline{11-13}\cline{15-17}
       M  & \multicolumn{1}{c}{CAR} & HO & HC & NEW & & HO & HC & \multicolumn{1}{c}{NEW}& \multicolumn{1}{c}{} &HO & HC & NEW & & HO & HC & \multicolumn{1}{c}{NEW} \\
\hline 
1 & SRS & 5.08 & 5.29 & 5.23 & & 4.96 & 5.01 & 5.02 & & 81.75 & 82.12 & 82.00 & & 81.99 & 81.97 & 82.01 \\ 
  & SBR & 5.02 & 5.10 & 5.06 & & 4.95 & 4.95 & 5.00 & & 82.76 & 82.93 & 82.79 & & 82.65 & 82.73 & 82.82 \\ 
\hline
2 & SRS & 6.72 & 6.94 & 4.83 & & 6.26 & 5.01 & 5.03 & & 79.85 & 80.08 & 75.32 & & 74.87 & 71.56 & 71.63 \\ 
  & SBR & 7.05 & 7.11 & 5.08 & & 4.99 & 3.93 & 5.05 & & 80.46 & 80.54 & 76.61 & & 75.77 & 72.26 & 76.04 \\ 
\hline
3 & SRS & 7.23 & 7.58 & 5.03 & & 6.44 & 5.03 & 5.05 & & 85.81 & 85.82 & 81.28 & & 80.35 & 77.09 & 77.12 \\ 
  & SBR & 7.56 & 7.70 & 5.14 & & 5.07 & 3.92 & 5.16 & & 87.56 & 87.62 & 83.07 & & 82.40 & 78.71 & 82.75 \\ 
\hline
4 & SRS & 18.46& 19.91& 5.43 & & 10.02& 5.20 & 5.21 & & 92.45 & 93.12 & 80.79 & & 76.88 & 66.84 & 66.95 \\ 
  & SBR & 18.25& 19.63& 5.93 & & 5.21 & 2.09 & 5.83 & & 92.98 & 93.33 & 82.57 & & 81.27 & 71.75 & 82.77 \\ 
\hline \hline
\end{tabular}}

 }
	\caption{\small Treatment assignment implemented via simple random sampling (SRS) and stratified block randomization (SBR). SAT and SFE tests implemented with homoskedastic-only (HO), heteroskedasticity-consistent (HC), and newly developed (NEW) standard errors. Parameter values: $(|\mathcal S|,\pi,\gamma,\sigma_1)=(10,0.7,1,1)$.}
	\label{tab:T3}
	\end{center}
\end{table}

\begin{table}[t!] 
	\begin{center}
	\scalebox{0.9}{
{\small 
\begin{tabular}{cl|cccccccc|ccccccc}
    \hline\hline
	\multicolumn{2}{c}{} & \multicolumn{7}{c}{Rejection rate under $H_0$: $\theta=0$}
           &\multicolumn{1}{c}{} & \multicolumn{7}{c}{Rejection rate under $H_1$: $\theta=0.2$} \\  
   \multicolumn{2}{c}{} & \multicolumn{3}{c}{SAT} & & \multicolumn{3}{c}{SFE} & \multicolumn{1}{c}{}& \multicolumn{3}{c}{SAT} & & \multicolumn{3}{c}{SFE}  \\ \cline{3-5} \cline{7-9} \cline{11-13}\cline{15-17}
       M  & \multicolumn{1}{c}{CAR} & HO & HC & NEW & & HO & HC & \multicolumn{1}{c}{NEW}& \multicolumn{1}{c}{} &HO & HC & NEW & & HO & HC & \multicolumn{1}{c}{NEW} \\
\hline 
1 & SRS & 2.72 & 5.55 & 5.45 & & 2.79 & 5.35 & 5.38 & & 58.45 & 68.64 & 68.35 & & 59.02 & 68.51 & 68.62 \\ 
  & SBR & 2.66 & 5.23 & 5.17 & & 2.64 & 5.13 & 5.14 & & 58.79 & 68.91 & 68.79 & & 58.79 & 68.74 & 68.80  \\ 
\hline
2 & SRS & 7.18 & 11.48 & 5.28 & & 6.22 & 5.44 & 5.47 & & 58.35 & 66.71 & 51.98 & & 47.35 & 45.08 & 45.21 \\ 
  & SBR & 7.18 & 11.19 & 4.99 & & 3.19 & 2.80 & 5.02 & & 58.95 & 66.52 & 53.69 & & 45.17 & 43.14 & 52.74  \\ 
\hline
3 & SRS & 8.00 & 12.36 & 5.13 & & 6.43 & 5.24 & 5.29 & & 64.51 & 71.87 & 56.25 & & 51.30 & 47.55 & 47.61\\ 
  & SBR & 7.63 & 11.88 & 4.99 & & 3.35 & 2.83 & 5.00 & & 65.91 & 73.20 & 58.83 & & 50.41 & 47.03 & 57.71 \\ 
\hline
4 & SRS & 24.98 & 30.67 & 5.12 & & 10.82& 5.61 & 5.62 & & 69.65 & 74.39 & 39.07 & & 39.87 & 27.80 & 27.86 \\ 
  & SBR & 24.81 & 30.72 & 6.01 & & 4.49 & 1.50 & 5.81 & & 70.74 & 75.42 & 41.60 & & 37.57 & 24.20 & 41.41  \\ 
\hline \hline
\end{tabular}}

}
	\caption{\small Treatment assignment implemented via simple random sampling (SRS) and stratified block randomization (SBR). SAT and SFE tests implemented with homoskedastic-only (HO), heteroskedasticity-consistent (HC), and newly developed (NEW) standard errors. Parameter values: $(|\mathcal S|,\pi,\gamma,\sigma_1)=(10,0.7,2,\sqrt{2})$.}
	\label{tab:T4}
	\end{center}
\end{table}

\item[] {\bf SFE}:  As expected from Theorem \ref{theorem:sfe} and the subsequent discussion, the test $\phi_n^{\rm sfe}(X^{(n)})$ in \eqref{eq:sfe-test} when $\hat{\mathbb V}_n$ is the homoskedasticity-only estimator of the asymptotic variance could lead to over-rejection or under-rejection, depending on the specification. For example, the rejection probability under the null hypothesis in Table \ref{tab:T2} could be as high as $12.25\%$, while in Table \ref{tab:T4} could be as low as $2.64\%$. On the other hand, when $\hat{\mathbb V}_n$ is the heteroskedasticity-consistent estimator of the asymptotic variance, this test is asymptotically conservative; in line with the results in \cite{bugni/canay/shaikh:16} and Section \ref{sec:onetreatment}. Indeed, the rejection probability under the null hypothesis is close to $2\%$ in Model 4 under stratified block randomization for all the specifications we consider. Finally, and consistent with the results in Section \ref{sec:onetreatment}, whenever $Q$ is such that $\mathbb{V}_H=0$, as it is the case in Model 1, the test with the heteroskedasticity-consistent estimator of the asymptotic variance is asymptotically exact.  

\begin{table}[t!]
	\begin{center}
	\scalebox{0.9}{
{\small 
\begin{tabular}{cl|cccccccc|ccccccc}
    \hline\hline
	\multicolumn{2}{c}{} & \multicolumn{7}{c}{Rejection rate under $H_0$: $\theta=0$}
           &\multicolumn{1}{c}{} & \multicolumn{7}{c}{Rejection rate under $H_1$: $\theta=0.2$} \\  
   \multicolumn{2}{c}{} & \multicolumn{3}{c}{SAT} & & \multicolumn{3}{c}{SFE} & \multicolumn{1}{c}{}& \multicolumn{3}{c}{SAT} & & \multicolumn{3}{c}{SFE}  \\ \cline{3-5} \cline{7-9} \cline{11-13}\cline{15-17}
       M  & \multicolumn{1}{c}{CAR} & HO & HC & NEW & & HO & HC & \multicolumn{1}{c}{NEW}& \multicolumn{1}{c}{} &HO & HC & NEW & & HO & HC & \multicolumn{1}{c}{NEW} \\
\hline 
1 & SRS & 5.20 & 5.47 & 5.47 & & 5.08 & 5.12 & 5.15 & &81.63 & 82.48 & 82.48 & & 82.80 & 82.71 & 82.75 \\
  & SBR & 5.27 & 5.39 & 5.39 & & 5.32 & 5.42 & 5.44 & &83.15 & 83.48 & 83.48 & & 83.49 & 83.43 & 83.58 \\
\hline
2 & SRS & 6.74 & 7.18 & 5.70 & & 9.05 & 7.13 & 9.51 & &79.53 & 80.14 & 76.98 & & 87.24 & 84.66 & 87.61 \\
  & SBR & 7.18 & 7.33 & 5.63 & & 8.92 & 7.05 & 9.08 & &80.57 & 80.91 & 77.23 & & 90.72 & 88.61 & 90.91 \\ 
\hline
3 & SRS & 8.89 & 8.14 & 6.34 & & 9.49 & 8.18 & 8.99 & &85.19 & 84.10 & 81.04 & & 92.03 & 90.57 & 91.54 \\
  & SBR & 8.24 & 7.56 & 5.53 & & 9.03 & 7.53 & 8.37 & &86.51 & 85.38 & 81.77 & & 94.92 & 93.76 & 94.42 \\ 
\hline
4 & SRS & 19.74 & 18.16 & 6.41 & & 60.82 & 45.51 & 59.43 & & 91.77 & 90.90 & 80.14 & & 12.92 & 5.62 & 12.42 \\
  & SBR & 19.71 & 18.14 & 6.69 & & 67.13 & 48.22 & 66.08 & & 91.61 & 90.77 & 80.78 & & 4.42  & 1.12 & 4.00 \\ 
\hline \hline
\end{tabular}} 
}
	\caption{\small Treatment assignment implemented via simple random sampling (SRS) and stratified block randomization (SBR). SAT and SFE tests implemented with homoskedastic-only (HO), heteroskedasticity-consistent (HC), and newly developed (NEW) standard errors. Parameter values: $(|\mathcal S|,\pi,\gamma,\sigma_1)=(10,\pi_1(s),1,1)$ with $\pi_1(s)$ as in \eqref{eq:pis}.}
	\label{tab:T5}
	\end{center}
\end{table}

According with Theorem \ref{theorem:sfe-new}, the test $\phi_n^{\rm sfe}(X^{(n)})$ in \eqref{eq:sfe-test} when $\hat{\mathbb V}_n$ is given by the new consistent estimator of the asymptotic variance in \eqref{eq:Vhat-sfe} is asymptotically exact across all the specifications we consider. The rejection probability under the null hypothesis is very close to the nominal level in all models and all tables. The rejection probability under the alternative hypothesis is similar to that of $\phi_n^{\rm sat}(X^{(n)})$ with $\hat{\mathbb V}_n=\hat{\mathbb V}_{\rm sat}$ under stratified block randomization, but often below the rejection probability of that same test under simple random sampling. These results are again in line with the theoretical results discuss in Section \ref{sec:sfe}. Finally, Table \ref{tab:T5} illustrates that $\phi_n^{\rm sfe}(X^{(n)})$ is only a valid test for the null in \eqref{eq:null-lin} when $\pi_1(s)=\pi$ for all $s\in\mathcal S$ and may otherwise over-reject under the null hypothesis.   
\end{itemize}
 
\section{Implications for Empirical Practice}\label{sec:advise}
When the target proportion of units being assigned to each treatment varies across strata, we recommend using the test $\phi_n^{\rm sat}$ based on ordinary least squares estimation of the ``fully saturated'' linear regression and the consistent estimator of the asymptotic variance that we derive in Theorem \ref{theorem:sat-new}.  Importantly, tests based on these estimators with the usual heteroskedasticity-consistent estimator of the asymptotic variance may be invalid in the sense that they may have limiting rejection probability under the null hypothesis strictly greater than the nominal level.  When the target proportion of units being assigned to each treatment does not vary across strata, one may additionally consider use of the test $\phi_n^{\rm sfe}$ based on ordinary least squares estimation of the linear regression with ``strata fixed effects'' and the consistent estimator of the asymptotic variance that we derive in Theorem \ref{theorem:sfe-new}.  Our theoretical results results reveal that for a given function mapping $Z_i$ into strata fixed, the power of $\phi_n^{\rm sfe}$ is highest when using a randomization schemes that satisfies Assumption \ref{ass:rand-sfe}.(c) with $\tau(s) = 0$ for all $s \in \mathcal S$, such as stratified block randomization.  On the other hand, $\phi_n^{\rm sat}$ is in general weakly preferred to $\phi_n^{\rm sfe}$ and may be strictly preferred for randomization schemes that satisfy Assumption \ref{ass:rand-sfe}.(c) with $\tau(s) > 0$ for some $s \in \mathcal S$.  For simplicity, it may therefore be preferable to use $\phi_n^{\rm sat}$.

In this paper, we do not consider further questions about ``optimal'' treatment assignment, but, in conclusion, we mention two recent papers on this topic. Building upon our results, \cite{tabord:18} considers optimization of the power of $\phi_n^{\rm sat}$ over different functions mapping $Z_i$ into strata using stratification trees. \cite{bai:18}, on the other hand, considers minimization of the mean squared error of the difference-in-means estimator of the average treatment effect over a general class of randomization mechanisms that, importantly, includes mechanisms with a ``large'' number of strata.

\section{Empirical Illustration}\label{sec:application}

We conclude our paper with an empirical illustration using data from \cite{chong2016iron}, who study the effect of iron deficiency anemia (i.e., anemia caused by a lack of iron) on school-age children's educational attainment and cognitive ability in Peru. {The data used in this experiment are publicly available in the AEA website at \url{https://www.aeaweb.org/articles?id=10.1257/app.20140494}.} 

\subsection{Empirical Setting}

We now briefly summarize the empirical setting; see \cite{chong2016iron} for a more detailed description. According to the medical literature, iron deficiency anemia may impair cognitive function, memory, and attention span.  In this way, iron deficiency anemia may significantly increase the cost of human capital accumulation for school-age children and lead to nutrition-based poverty traps. \cite{chong2016iron} investigate whether showing students promotional videos can incentivize them to increase their iron intake and thus improve their academic performance.

The units in this experiment are 219 students in a rural secondary school in the impoverished Cajamarca district of Peru between October and December in 2009. During this period, these students were exposed to short instructional videos when logging into their personal computers at school. Each student was randomly assigned to one of three types of videos: two treatments and a control. The first treatment video featured a popular soccer player encouraging the students to consume iron supplements to maximize their energy. The second treatment video featured a doctor encouraging them to consume iron supplements for their overall health. Finally, the control video featured a dentist who encouraged oral hygiene without mentioning iron in any way.  Throughout this experiment, researchers additionally stocked the local clinic with iron supplements, which were provided for free to any student who requested them.

Students were assigned to one of the three types of videos using stratified block randomization, where stratification occurred by grade, taking values $s \in \mathcal S = \{1,2,3,4,5\}$. As explained in footnote 17 of \cite{chong2016iron}, within each grade, the researchers assigned one third of the students to each video type, i.e., $\pi_a(s)= 1/3$ for all $a \in \mathcal A_0 = \{0,1,2\}$ and $s \in \mathcal S$. Table \ref{tab:Application.1} describes the sample sizes for each combination of stratum and treatment.  Note that the sample consists of 215 students rather than 219 students because four students were excluded from the study for various reasons; see, for example, footnote 24 in \cite{chong2016iron}, which explains that two students failed to turn in a required consent form.  {We conjecture that these} exclusions explain the discrepancies between the observed treatment proportions and $\pi_a(s)$ observed in Table \ref{tab:Application.1}.  Note further that since in this case $\pi_a(s)$ does not depend on $s$, our results imply that we could analyze the experiment using either the ``fully saturated'' linear regression described in Section \ref{sec:sat} or the linear regression with ``strata fixed effects'' described in Section \ref{sec:sfe}.  Below we focus on the former, but note that the latter provides similar results.

\begin{table}[h!]
	\begin{center}
	\begin{tabular}{crrrrrr}
		\hline\hline
		& $s=1$ & $s=2$ &$s=3$ &$s=4$ &$s=5$ & total  \\
		\hline
		\multicolumn{1}{l|}{$a=0$ (placebo video)} & 15 & 19 & 16 & 12& 10 & 72\\
		\multicolumn{1}{l|}{$a=1$ (soccer video)} & 16 & 19 & 15 & 10& 10 & 70\\
		\multicolumn{1}{l|}{$a=2$ (doctor video)} & 17 & 20 & 15 & 11& 10 & 73\\\hline
		total & 48 & 58 & 46 & 33& 30 & 215\\
		\hline\hline
	\end{tabular}
	\end{center}
	\caption{Sample sizes for each combination of stratum and treatment.}
	\label{tab:Application.1}
\end{table}

\subsection{Results}

\cite{chong2016iron} examine the effect of the treatment videos relative to the control video on a variety of cognitive ability and educational attainment outcomes. We focus on academic achievement, as measured by a student's average grade during the last two quarters of the 2009 academic year in five subjects: math, foreign language, social science, science, and communications. As explained by the authors, this constitutes one of the primary outcomes of interest in \cite{chong2016iron}.

We present our results in Table \ref{tab:Application.2}, which was computed using our \verb|car_sat| Stata package available at \url{https://bitbucket.org/iacanay/car-stata}. In both the top and bottom half of Table \ref{tab:Application.2}, the first column reports point estimates of $\theta_{a}(Q)$ for the two treatment videos $a\in\mathcal A = \{1,2\}$ that we obtained from the ``fully saturated'' linear regression, i.e., 
\begin{equation*}
	\hat\theta_{n,a} = \sum_{s=1}^5 \frac{n(s)}{n}\hat \beta_{n,a}(s)~, 	
\end{equation*} 
where $\hat \beta_{n,a}(s)$ is the ordinary least squares estimator of $\beta_a(s)$ in the following regression,
\begin{equation*}
Y_i = \sum_{s=1}^5 \delta(s) I\{S_i = s\} + \sum_{a=1}^2 \sum_{s=1}^5 \beta_a(s) I\{A_i=a,S_i = s\}+ u_i~.
\end{equation*}
The remaining columns report standard errors, the resulting $t$-statistic, a $p$-value for a two-sided test of the null hypothesis that $\theta_a(Q) = 0$; and a 95\% confidence interval for $\theta_a(Q)$. The difference between the top and bottom half of Table \ref{tab:Application.2} resides in the estimators of the standard errors. The top half reports results for the ``new'' standard errors computed using our estimator of the asymptotic variance defined in \eqref{eq:Vhat-sat}. To facilitate reading, we restate the expressions here in the context of our application; that is, 
$$\hat{\mathbb V}_{\rm sat}=\hat{\mathbb V}_{H}+\hat{\mathbb V}_{\rm hc}~,$$
where $\hat{\mathbb V}_{H}$ is the variance component due to treatment effect heterogeneity,
	\begin{equation*}
		\hat{\mathbb V}_{H} = \sum_{s=1}^5 \frac{n(s)}{n} \left(
		\begin{array}{c}
		\hat\beta_{n,1}(s)-\hat\theta_{n,1}\\
		\hat\beta_{n,2}(s)-\hat\theta_{n,2}
		\end{array}
		 \right)\left(
		\begin{array}{c}
		\hat\beta_{n,1}(s)-\hat\theta_{n,1}\\
		\hat\beta_{n,2}(s)-\hat\theta_{n,2}
		\end{array}
		 \right)'
	\end{equation*}
and	$\hat{\mathbb V}_{\rm hc}$ is the usual heteroskedasticity-consistent estimator of the asymptotic variance defined in \eqref{eq:V-hc}. The bottom half of Table \ref{tab:Application.2} reports results when the standard errors are computed using the usual heteroskedasticity-consistent estimator of the asymptotic variance $\hat{\mathbb{V}}_{\rm hc}$.

\begin{table}[h!]
\begin{center}
\begin{tabular}{crrrrrr}
	\hline\hline
	\multicolumn{1}{c}{} & \multicolumn{6}{c}{SAT regression: ``new'' standard errors}\\
	\multicolumn{1}{c}{} & Coef. & s.e. & $t$-stat & $p$-value & \multicolumn{2}{c}{[95\% Conf. Int.]}\\
	\hline 
	\multicolumn{1}{c|}{$\hat\theta_{n,1}$ (soccer video)} & -0.051  & 0.206 & -0.248  & 0.805 & -0.458 & 0.356\\
	\multicolumn{1}{c|}{$\hat\theta_{n,2}$ (doctor video)} &  0.409  & 0.206 & 1.981 & 0.049 & -0.002 & 0.816 \\
	 \hline
	\multicolumn{1}{c}{} & \multicolumn{6}{c}{SAT regression: hc standard errors}\\
	\multicolumn{1}{c}{} & Coef. & s.e. & $t$-stat & $p$-value & \multicolumn{2}{c}{[95\% Conf. Int.]}\\
	\hline
	\multicolumn{1}{c|}{$\hat\theta_{n,1}$ (soccer video)} & -0.051  & 0.206 & -0.248  & 0.804 & -0.457 & 0.354\\
	\multicolumn{1}{c|}{$\hat\theta_{n,2}$ (doctor video)} &  0.409  & 0.203 & 2.013 & 0.046 & -0.008 & 0.810 \\
	\hline \hline
	\label{tab:illustration}
\end{tabular}
\end{center}
\caption{\small Inference about the average effect of treatments $a\in\mathcal A = \{1,2\}$ (relative to the control) on academic achievement. ``New'' standard errors correspond to the ones we derive in this paper, while hc standard errors are the default ``robust'' standard errors in Stata.}
\label{tab:Application.2}
\end{table}

Since the diagonal elements of $\hat{\mathbb{V}}_{\rm sat} = \hat{\mathbb{V}}_{H} + \hat{\mathbb{V}}_{\rm hc}$ are larger than the diagonal elements of $\hat{\mathbb{V}}_{\rm hc}$, the ``new'' standard errors are larger than the usual heteroskedasticity-consistent standard errors.  The differences, however, in this instance are small and do not lead to any meaningful differences in terms of the conclusions we draw from the experiment when testing either the null hypothesis that $\theta_1(Q) = 0$ or the null hypothesis that $\theta_2(Q) = 0$ at the conventional 5\% significance level.  To gain further insight into the magnitude of these differences, it is instructive to examine $\hat{\mathbb{V}}_{H}$ and $\hat{\mathbb{V}}_{\rm hc}$ in more detail, which are displayed below:
\begin{equation*}
	\hat{\mathbb{V}}_{H} = \left(
	\begin{array}{cc}
		0.0630 & 0.0385\\
		0.0385 & 0.291
	\end{array}
	\right) ~,~
	\hat{\mathbb{V}}_{\rm hc}= 
	\left(
	\begin{array}{cc}
		9.101 & 4.503\\
		4.503 & 8.879
	\end{array}
	\right)
	~.
\end{equation*}
We see that $\hat{\mathbb{V}}_{H}$ is close to zero and at least an order of magnitude smaller than $\hat{\mathbb{V}}_{\rm hc}$. By inspecting the expression of ${\mathbb{V}}_{H}$ above, we see that $\hat \beta_{n,1}(s)$ and $\hat \beta_{n,2}(s)$ are nearly constant across the five strata, which in turn suggests that stratification is nearly irrelevant in this particular application in the sense that $E[Y_i(a)-Y_i(0)|S_i]$ nearly equals $E[Y_i(a) - Y_i(0)]$ for each $a \in \mathcal \{1,2\}$.  


\renewcommand{\theequation}{\Alph{section}-\arabic{equation}}
\begin{appendices}                    

\begin{small}	
\section{Additional Notation}
Throughout the Appendix we employ the following notation, not necessarily introduced in the text.
\sloppy

\vspace{-0.05 in}

\begin{table}[ht]
{\renewcommand{\arraystretch}{1.5}
\begin{center}
\begin{tabular}{rl}
	$\sigma_{X}^2(s)$ & For a random variable $X$, $\sigma_{X}^2(s)=\var[X|S=s]$\\
	$\sigma_{X}^2$ & For a random variable $X$, $\sigma_{X}^2=\var[X]$\\
	$\mu_{a}$ & For $a\in\mathcal A_0$, $E[Y_i(a)]$\\
	$\tilde{Y}_i(a)$ & For $a\in\mathcal A_0$, $Y_i(a) - E[Y_i(a)|S_i]$\\
	$m_{a}(Z_i)$ & For $a\in\mathcal A_0$, $E[Y_i(a)|Z_i] - \mu_{a}$\\
	$n(s)$ & Number of individuals in strata $s\in \mathcal{S}$\\
	$n_a(s)$ & Number of individuals in treatment $a\in \mathcal A_0$ in strata $s\in \mathcal{S}$\\
	$\iota_{|\mathcal A|}$ & $|\mathcal A|$-dimensional column vector of ones\\
	$\mathbb{O}$ & $(|\mathcal A|\times \mathcal |S|)$-dimensional matrix of zeros\\
	$\mathbbm{I}_{|\mathcal A|}$ & $|\mathcal A|$-dimensional identity matrix\\
	$\mathbbm{J}_{s}$ &  $(|\mathcal S|\times |\mathcal S|)$-dimensional matrix with a $1$ on the $(s,s)$th coordinate and zeros otherwise
\end{tabular}
\end{center}
\caption{Useful notation}\label{tab:notation}
}
\end{table}

In addition, we often transform objects that are indexed by $(a,s)\in\mathcal A\times \mathcal S$ into vectors or matrices, using the following conventions. For $X(a)$ being a scalar object indexed over $a\in\mathcal A$, we use $(X(a):a\in\mathcal A)$ to denote the $|\mathcal A|$-dimensional vector $(X(1),\dots,X(|\mathcal A|))'$. For $X_a(s)$ being a scalar object indexed by $(a,s)\in\mathcal A\times \mathcal S$ we use $(X_a(s): (a,s)\in\mathcal A\times \mathcal S)$ to denote the $(|\mathcal A|\times |\mathcal S|)$-dimensional column vector where the order of the indices is as follows, $$(X_a(s): (a,s)\in\mathcal A\times \mathcal S) = (X_1(1),\dots,X_{|\mathcal A|}(1),X_1(2),\dots,X_{|\mathcal A|}(2),\dots)'~.$$ 
Finally throughout the appendix we use $L_{n,a}^{(j)}(s)$ and $\mathbb L_n^{(j)}$ for $j=1,2,\dots$, to denote scalar objects and matrices/vectors that may be redefined from theorem to theorem.

\section{Proof of Main Theorems}
\subsection{Proof of Theorem \ref{theorem:sat}}

Let $\mathbb C_n$ be the matrix of covariate associated with the regression in \eqref{eq:linear-sat}, i.e., the matrix with $i$th row given by 
\begin{equation*}\label{eq:Ci}
	C_i = [(I\{S_i=s\}:s\in\mathcal S)',(I\{A_i=a,S_i=s\}:(a,s)\in\mathcal A\times \mathcal S)']~.
\end{equation*}
Let $\mathbb R_n$ be a matrix with $|\mathcal A|$ rows and $(|\mathcal S|+|\mathcal A|\times|\mathcal S|)$ columns defined as 
\begin{equation}\label{eq:Rn}
 	\mathbb R_n =\left[\mathbb O,\frac{n(1)}{n}\mathbbm{I}_{|\mathcal A|},\dots,\frac{n\left(|\mathcal S|\right)}{n}\mathbbm{I}_{|\mathcal A|} \right]~,
 \end{equation} 
where $\mathbb O$ and $\mathbbm{I}_{|\mathcal A|}$ are defined in Table \ref{tab:notation}. Using this notation, we can write
\begin{equation*}
	\hat\theta_n = \mathbb R_n \left[\begin{array}{c}
		(\hat \delta_n(s):s\in\mathcal S)\\
		(\hat \beta_{n,a}(s):(a,s)\in\mathcal A\times \mathcal S)
	\end{array}\right]
\end{equation*}
where $\hat \delta_n(s)$ and $\hat \beta_{n,a}(s)$ are the resulting estimators of $\delta(s)$ and $\beta_a(s)$ in \eqref{eq:linear-sat}, respectively. Now consider the following derivation,
\begin{align*}
	\sqrt{n}(\hat\theta_n - \theta(Q)) &= \sqrt{n}\left(\mathbb R_{n}\left(\frac{1}{n}\mathbb{C}_{n}^{\prime }\mathbb{C}_{n}\right)^{-1}\frac{1}{n} \mathbb{C}_{n}^{\prime }\mathbb{Y}_{n} - \theta(Q) \right) \\
	&= \left( \sum_{s\in\mathcal S}\frac{n(s)}{n_{a}(s)}\left[ \frac{1}{\sqrt{n}}\sum_{i=1}^{n}I\{A_i=a,S_i=s\}\tilde{Y}_{i}(a)\right] -\sum_{s\in\mathcal S} \frac{n(s)}{n_0(s)}\left[ \frac{1}{\sqrt{n}}\sum_{i=1}^{n}I\{A_i=0,S_i=s\}\tilde{Y}_{i}(0)\right] \right.\\
	&\quad \left. +\sum_{s\in\mathcal S} \sqrt{n}\left( \frac{n(s)}{n}-p(s)\right) E\left[ m_{a}(Z)-m_{0}(Z)|S=s\right] : a\in \mathcal A  \right)\\
	&= \left(\sum_{s\in\mathcal S} \Big(L^{(1)}_{n,a}(s)-L^{(1)}_{n,0}(s) \Big)   :a\in\mathcal A \right) + \left(\sum_{s\in\mathcal S} L^{(2)}_{n,a}(s)   :a\in\mathcal A \right) + o_{P}(1)
\end{align*}
where for $(a,s)\in \mathcal{A}\times \mathcal{S}$,
\begin{align*}
	L^{(1)}_{n,a}(s) &\equiv \frac{1}{\pi_{a}(s)}\left[ \frac{1}{\sqrt{n}}\sum_{i=1}^{n}I\{A_i=a,S_i=s\}\tilde{Y}_{i}(a)\right]\\
	L^{(2)}_{n,a}(s) &\equiv \sqrt{n}\left( \frac{n(s)}{n}-p(s)\right) E\left[ m_{a}(Z)-m_{0}(Z)|S=s\right]~.
\end{align*}
By Lemma \ref{lemma:CLT-sat} and some additional calculations, it follows that  
\begin{equation*}
	\left(
	\begin{array}{l}
	\left(\sum_{s\in\mathcal S} \Big(L^{(1)}_{n,a}(s)-L^{(1)}_{n,0}(s) \Big)   :a\in\mathcal A \right) \\ 
	\left(\sum_{s\in\mathcal S} L^{(2)}_{n,a}(s)   :a\in\mathcal A \right)
	\end{array}\right)
	\overset{d}{\to} N\left( \left( 
	\begin{array}{c}
	0 \\ 
	0
	\end{array}
	\right) ,\left( 
	\begin{array}{cc}
	\mathbb V_{\tilde Y} & 0 \\ 
	0 & \mathbb V_{H}
	\end{array}
	\right) \right)~,
\end{equation*}
where $\mathbb V_{\tilde Y}$ is as in \eqref{eq:VY} and  $\mathbb V_{H}$ is as in \eqref{eq:VH}. Importantly, to get $\mathbb V_{H}$ for the second term we used that $\sum_{s\in\mathcal S}p(s)E\left[ m_{a}(Z)-m_{0}(Z)|S=s\right]=0$ for all $a\in\mathcal A$.  

\subsection{Proof of Theorem \ref{theorem:sat-se}}

The homoskedasticity-only estimator of the asymptotic variance for the regression in \eqref{eq:linear-sat} is
\begin{equation}\label{eq:V-ho}
	\hat{\mathbb V}_{\rm ho} = \left(\frac{1}{n}\sum_{i=1}^n \hat u_i^2\right) \mathbb R_n\left(\frac{1}{n} \mathbb C_n'\mathbb C_n\right)^{-1}\mathbb R_n'~, 
\end{equation}
where $\{\hat u_i:1\le i\le n\}$ are the least squares residuals. The result then follows immediately from 
\begin{equation*}
	\frac{1}{n}\sum_{i=1}^{n}\hat{u}_{i}^{2}\overset{P}{\to}\sum_{(a,s)\in \mathcal{A}_{0}\times \mathcal{S}}p(s)\pi_{a}(s)\sigma_{
	\tilde{Y}(a)}^{2}(s)~,
\end{equation*}
which follows from Lemma \ref{lemma:residuals}, and 
\begin{equation*}
	\mathbb R_n\left(\frac{1}{n} \mathbb C_n'\mathbb C_n\right)^{-1}\mathbb R_n'\overset{P}{\to}\left[ \sum_{s\in \mathcal{S}}\frac{p(s)}{\pi_0(s)}\iota_{|\mathcal{A}|}\iota_{|\mathcal{A}|}'+\diag\left( \sum_{s\in \mathcal{S}}\frac{p(s)}{\pi_a (s)}:a\in \mathcal{A}\right) \right]
\end{equation*}
which follows from Lemma \ref{lemma:XX-XY}, \eqref{eq:Rn}, and some additional calculations. 

The heteroskedasticity-consistent estimator of the asymptotic variance for the regression in \eqref{eq:linear-sat} is
\begin{equation}\label{eq:V-hc}
	\hat{\mathbb V}_{\rm hc} = \mathbb R_n\left[\left(\frac{1}{n} \mathbb C_n'\mathbb C_n\right)^{-1} \left(\frac{1}{n}\mathbb C_n' \diag\left(\hat u_i^2: 1\le i\le n\right)\mathbb C_n \right) \left(\frac{1}{n} \mathbb C_n'\mathbb C_n\right)^{-1}\right]\mathbb R_n'~.
\end{equation}
First note that $\frac{1}{n}\mathbb C_n' \diag\left(\hat u_i^2: 1\le i\le n\right)\mathbb C_n$ equals 
\begin{equation*}
	\left[\begin{array}{cc}
	\diag(\frac{1}{n}\sum_{i=1}^n \hat u_i^2 I\{S_i=s\}:s\in\mathcal S) & \sum_{s\in\mathcal S} \mathbb{J}_{s}\otimes(\frac{1}{n}\sum_{i=1}^n \hat u_i^2 I\{A_i=a,S_i=s\}:a\in\mathcal A)'~\\
	\sum_{s\in\mathcal S} \mathbb{J}_{s}\otimes(\frac{1}{n}\sum_{i=1}^n \hat u_i^2 I\{A_i=a,S_i=s\}:a\in\mathcal A) & \diag(\frac{1}{n}\sum_{i=1}^n \hat u_i^2 I\{A_i=a,S_i=s\} :(a,s)\in\mathcal A\times \mathcal S)
	\end{array}\right]~,
\end{equation*}
which follows from Lemma \ref{lemma:XX-XY}. By Lemma \ref{lemma:conv-in-P}, this matrix converges in probability to 
\begin{equation*}
	\left[\begin{array}{cc}
	\diag(\sum_{a\in\mathcal A_0} p(s)\pi_a(s)\sigma^2_{\tilde Y(a)}(s):s\in\mathcal S) & \sum_{s\in\mathcal S} \mathbb{J}_{s}\otimes( p(s)\pi_a(s)\sigma^2_{\tilde Y(a)}(s):a\in\mathcal A)'~\\
	\sum_{s\in\mathcal S} \mathbb{J}_{s}\otimes( p(s)\pi_a(s)\sigma^2_{\tilde Y(a)}(s):a\in\mathcal A) & \diag( p(s)\pi_a(s)\sigma^2_{\tilde Y(a)}(s):(a,s)\in\mathcal A\times \mathcal S)
	\end{array}\right]~.
\end{equation*}
The result follows by combining this with Lemma \ref{lemma:XX-XY} and doing some additional calculations. 

\subsection{Proof of Theorem \ref{theorem:sat-new}}

By Theorem \ref{theorem:sat-se}, it follows that 
\begin{equation*}
		\hat{\mathbb V}_{\rm hc} \overset{P}{\to}\sum_{s\in \mathcal{S}} \frac{p(s)\sigma _{\tilde{Y}(0)}^{2}(s)}{\pi_0 (s)}\iota_{|\mathcal{A}|}\iota_{|\mathcal{A}|}' + \diag\left(\sum_{s\in \mathcal{S}}\frac{p(s)\sigma _{ \tilde{Y}(a)}^{2}(s)}{\pi_a (s)}:a\in \mathcal{A}\right) .
\end{equation*} 
By Lemma \ref{lemma:XX-XY} and for any $a\in\mathcal A$,
\begin{equation*}
	\left( \hat{\beta}_{n,a}(s)-\hat{\theta}_{n,a}\right) \overset{P}{\to }E\left[ m_a(Z)-m_0(Z)|S=s\right] ,
\end{equation*}
which in turn implies that  
\begin{align*}
	\hat{\mathbb V}_{\rm H} &= \sum_{s\in\mathcal S} \frac{n(s)}{n}\left(\hat\beta_{n,a}(s)-\hat\theta_{n,a}:a\in\mathcal A\right)\left(\hat\beta_{n,a}(s)-\hat\theta_{n,a}:a\in\mathcal A\right)'\\
	&\overset{P}{\to} \sum_{s\in\mathcal S}p(s)\left( E[m_a(Z)-m_0(Z)|S=s]: a\in\mathcal A \right)\left( E[m_a(Z)-m_0(Z)|S=s]: a\in\mathcal A \right) ' ~,
\end{align*}
where we used $\frac{n(s)}{n}\overset{P}{\to }p(s)$. By the continuous mapping theorem, we conclude that $\hat{\mathbb V}_{\rm sat} \overset{P}{\to} \mathbb V_{\rm sat}$. By Theorem \ref{theorem:sat}, $\lim_{n \rightarrow \infty} E[\phi_n^{\rm sat}(X^{(n)})]=\alpha $ follows immediately whenever $Q$ is such that $\Psi\theta(Q) = c$. 

\subsection{Proof of Theorem \ref{theorem:sfe}}

Let $\mathbb M_{n}\equiv \mathbb{I}_{n}-\mathbb{S}_{n}(\mathbb{S}_{n}^{\prime }\mathbb{S}_{n}) ^{-1}\mathbb{S}_{n}^{\prime }$ denote the projection on the orthogonal complement of the column space of $ \mathbb{S}_{n}$, where $\mathbb S_n$ is the matrix with $i$th row given by $(I\{S_i=s\}:s\in\mathcal S)'$. By the Frisch-Waugh-Lovell Theorem,
\begin{equation*}
	\hat{\theta}^{\ast}_{n} = ( \mathbb{A}_{n}^{\prime }\mathbb M_{n}^{\prime }\mathbb M_{ n}\mathbb{A}_{n})^{-1}( \mathbb{A} _{n}^{\prime }\mathbb M_{n}^{\prime }\mathbb{Y}_{n})~,\label{eq:FWLtheorem} 
\end{equation*}
where $\mathbb Y_n = (Y_i:1\le i\le n)$ and $\mathbb A_n$ is the matrix with $i$th row given by $(I\{A_i=a\}:a\in\mathcal A)'$.
Next, notice that
\begin{equation*}\label{eq:Basis}
\mathbb M_{n}\mathbb{A}_{n}=\left( \left( I\{A_{i}=a\}-\sum_{s\in \mathcal{S}}I\{S_{i}=s\}\frac{n_a(s)}{n(s)}:a\in\mathcal A \right)': 1\le i\le n\right) 
\end{equation*}
is an $(n\times \mathcal |A|)$-dimensional matrix, where we have used that $\mathbb{S}_{n}^{\prime }\mathbb{S} _{n}=\diag\left(n(s):s\in \mathcal{S}\right)$ and that $\mathbb{S}_{n}^{\prime } \mathbb{A}_{n}$ is an $(|\mathcal S|\times |\mathcal A|)$-dimensional matrix with $(s,a)$th element given by $n_a(s)$. It follows from the above derivation and Assumption \ref{ass:rand-sfe} that the $(a,\tilde{a})$ element of $\frac{1}{n} \mathbb{A}_{n}^{\prime }\mathbb M_{n}^{\prime }\mathbb M_{ n}\mathbb{A}_{n}$ satisfies 
\begin{equation*}
	I\{a= \tilde{a}\} \sum_{s\in \mathcal{S}}\frac{n_a(s)}{n}-\sum_{ \tilde{s}\in \mathcal{S}}\frac{n_a(\tilde{s})n_{\tilde a}(\tilde{s})}{n(\tilde{s})n} \overset{P}{\to} I\{a= \tilde{a}\} \pi_a - \pi_a\pi_{\tilde a}~,
\end{equation*}
and so by the continuous mapping theorem we get 
\begin{equation*}\label{eq:inverse}
	\left(\frac{1}{n} \mathbb{A}_{n}^{\prime }\mathbb M_{n}^{\prime }\mathbb M_{ n}\mathbb{A}_{n}\right)^{-1}\overset{P}{\to}\diag\left(\frac{1}{\pi_a}:a\in\mathcal A\right)+\frac{1}{\pi_0}\iota_{|\mathcal{A}|}\iota_{|\mathcal{A}|}'~.
\end{equation*}
Now consider the matrix $\frac{1}{n} \mathbb{A} _{n}^{\prime }\mathbb M_{n}^{\prime }\mathbb{Y}_{n}$. Simple manipulations shows that 
\begin{align*}
	\frac{1}{n} \mathbb{A} _{n}^{\prime }\mathbb M_{n}^{\prime }\mathbb{Y}_{n}
	&= \left(\sum_{s\in \mathcal{S}}\frac{1}{n}\sum_{i=1}^{n}I\{A_{i}=a,S_{i}= s\}\tilde{Y}_{i}(a)
	-\sum_{s\in \mathcal{S}}\sum_{\tilde{a}\in \mathcal{A}_{0}}\frac{n_a(s)}{n(s)}\frac{1}{n}\sum_{i=1}^{n}I\{A_{i}=\tilde{a},S_{i}= s\}\tilde{Y}_{i}(\tilde{a}) \right. \notag \\
	&\quad \left.  +\sum_{s\in \mathcal{S}}\frac{n_a(s)}{n(s)}\frac{ n(s) }{n}E[m_a(Z)|S=s] -\sum_{\tilde{a}\in \mathcal{A}_{0}}\sum_{s\in \mathcal{S}}\frac{n_a(s)}{n(s)}\frac{n_{\tilde a}(s)}{n(s)}\frac{n(s)}{n}E[m_{\tilde{a}}(Z)|S=s]:a\in\mathcal A\right)
\end{align*}
We conclude that 
\begin{equation*}
	\sqrt{n}(\hat \theta_n^{\ast} - \theta(Q)) = \left( \diag\left(\frac{1}{\pi_a}:a\in\mathcal A\right)+\frac{1}{\pi_0}\iota_{|\mathcal{A}|}\iota_{|\mathcal{A}|}' + o_{P}(1)\right)\frac{1}{\sqrt n} \mathbb{A} _{n}^{\prime }\mathbb M_{n}^{\prime }\mathbb{Y}_{n}~.
\end{equation*}
Next, we derive the limiting distribution of $\frac{1}{\sqrt n} \mathbb{A} _{n}^{\prime }\mathbb M_{n}^{\prime }\mathbb{Y}_{n}$. In order to do this, write
\begin{equation*}
	\frac{1}{\sqrt n} \mathbb{A} _{n}^{\prime }\mathbb M_{n}^{\prime }\mathbb{Y}_{n} = \overline{\mathbb L}_n + o_P(1)~,
\end{equation*}
where 
\begin{align*}
	\overline{\mathbb L}_n 	&= \left(\sum_{s\in \mathcal{S}}\frac{1}{\sqrt n}\sum_{i=1}^{n}I\{A_{i}=a,S_{i}= s\}\tilde{Y}_{i}(a) - \pi_a\sum_{s\in \mathcal{S}}\sum_{\tilde{a}\in \mathcal{A}_{0}}\frac{1}{\sqrt n}\sum_{i=1}^{n}I\{A_{i}=\tilde{a},S_{i}= s\}\tilde{Y}_{i}(\tilde{a}) \right. \notag \\
	&\quad \left.  +\pi_a\sum_{s\in \mathcal{S}}\sqrt{n}\left( \frac{ n(s) }{n}-p(s)\right)\left[E[m_a(Z)|S=s]-\sum_{\tilde a \in \mathcal A_0} \pi_{\tilde a}E[m_{\tilde a}(Z)|S=s]\right] \right.\notag \\
	&\quad \left.  +\sum_{s\in \mathcal{S}}\sqrt{n}\left( \frac{ n_a(s) }{n(s)}-\pi_a\right)p(s)\left[E[m_a(Z)|S=s]-\sum_{\tilde a \in \mathcal A_0} \pi_{\tilde a}E[m_{\tilde a}(Z)|S=s]\right] \right.\notag \\
	&\quad \left. - \pi_a\sum_{\tilde{a}\in \mathcal{A}_{0}}\sum_{s\in \mathcal{S}}\sqrt{n}\left( \frac{ n_{\tilde a}(s) }{n(s)}-\pi_{\tilde a}\right)p(s)E[m_{\tilde{a}}(Z)|S=s]:a\in\mathcal A\right)~.\label{eq:bar-Ln}
\end{align*}
Since the right-hand side is $O_P(1)$, then Slutzky's theorem and some simple manipulations shows that
\begin{align*}
	\sqrt{n}(\hat \theta_n^{\ast} - \theta(Q)) &= \left( \diag\left(\frac{1}{\pi_a}:a\in\mathcal A\right)+\frac{1}{\pi_0}\iota_{|\mathcal{A}|}\iota_{|\mathcal{A}|}'\right)\overline{\mathbb L}_n + o_P(1)\notag \\
	&= \left(\sum_{s\in\mathcal S} \Big(\bar{L}^{(1)}_{n,a}(s)-\bar{L}^{(1)}_{n,0}(s) \Big)   :a\in\mathcal A \right) + \left(\sum_{s\in\mathcal S} \bar{L}^{(2)}_{n,a}(s)   :a\in\mathcal A \right)\\
	&\; + \left(\sum_{s\in\mathcal S} (\bar{L}^{(3)}_{n,a}(s)-\bar{L}^{(3)}_{n,0}(s))   :a\in\mathcal A \right)+ o_{P}(1)~,
\end{align*}
where for $(a,s)\in \mathcal{A}\times \mathcal{S}$,
\begin{align*}
	\bar{L}^{(1)}_{n,a}(s) &\equiv \frac{1}{\pi_{a}}\left[ \frac{1}{\sqrt{n}}\sum_{i=1}^{n}I\{A_i=a,S_i=s\}\tilde{Y}_{i}(a)\right]\\
	\bar{L}^{(2)}_{n,a}(s) &\equiv \sqrt{n}\left( \frac{n(s)}{n}-p(s)\right) E\left[ m_{a}(Z)-m_{0}(Z)|S=s\right]\\
	\bar{L}^{(3)}_{n,a}(s) &\equiv \sqrt{n}\left( \frac{ n_{a}(s) }{n(s)}-\pi_{a}\right)\frac{p(s)}{\pi_a}\left[E[m_a(Z)|S=s]-\sum_{\tilde a \in \mathcal A} \pi_{\tilde a}E[m_{\tilde a}(Z)|S=s]\right]~.
\end{align*}
By Lemma \ref{lemma:CLT-sfe} and some additional calculations, it follows that  
\begin{equation*}
	\left(
	\begin{array}{l}
	\left(\sum_{s\in\mathcal S} \Big(\bar{L}^{(1)}_{n,a}(s)-\bar{L}^{(1)}_{n,0}(s) \Big)   :a\in\mathcal A \right) \\ 
	\left(\sum_{s\in\mathcal S} \bar{L}^{(2)}_{n,a}(s) : a\in\mathcal A \right)\\
	\left(\sum_{s\in\mathcal S} (\bar{L}^{(3)}_{n,a}(s)-\bar{L}^{(3)}_{n,0}(s))   :a\in\mathcal A \right)
	\end{array}\right)
	\overset{d}{\to} N\left( \left( 
	\begin{array}{c}
	0 \\ 
	0 \\
	0
	\end{array}
	\right) ,\left( 
	\begin{array}{ccc}
	\mathbb V_{\tilde Y} & 0 & 0 \\ 
	0 & \mathbb V_{H} & 0 \\
	0 & 0 & \mathbb V_{A}
	\end{array}
	\right) \right)~,
\end{equation*}
where $\mathbb V_{\tilde Y}$ is as in \eqref{eq:VY} with $\pi_a(s)=\pi_a$ for all $(a,s)\in\mathcal A_0\times \mathcal S$, $\mathbb V_{H}$ is as in \eqref{eq:VH}, and  
\begin{align*}
 	\mathbb V_{A} &= \left(\sum_{s\in\mathcal S}p(s)\left(\xi_a(s)\xi_{a'}(s)\frac{\Sigma_{D}(s)_{[a,a']}}{\pi_a\pi_{a'}} - \xi_a(s)\xi_{0}(s)\frac{\Sigma_{D}(s)_{[a,0]}}{\pi_a\pi_{0}}\right.\notag \right. \\
 	&\left. \left.-\xi_{a'}(s)\xi_{0}(s)\frac{\Sigma_{D}(s)_{[a',0]}}{\pi_{a'}\pi_{0}} + \xi_{0}(s)\xi_{0}(s)\frac{\Sigma_{D}(s)_{[0,0]}}{\pi_{0}\pi_{0}}\right):(a,a')\in\mathcal A\times \mathcal A\right)\label{eq:VA}
\end{align*} 
with 
\begin{equation*}\label{eq:Gamma}
	\xi_a(s) \equiv E[m_a(Z_i)|S_i=s]-\sum_{a'\in\mathcal A_0} \pi_{a'}E[m_{a'}(Z_i|S_i=s)]~.
\end{equation*}
Importantly, to get $\mathbb V_{H}$ for the second term we used that $\sum_{s\in\mathcal S}p(s)E\left[ m_{a}(Z)-m_{0}(Z)|S=s\right]=0$ for all $a\in\mathcal A$.

\section{Auxiliary Results}

\begin{lemma}\label{lemma:CLT-sat}
	Suppose $Q$ satisfies Assumption \ref{ass:moments} and the treatment assignment mechanism satisfies Assumption \ref{ass:rand}. Define
	\begin{align}
		\mathbb L^{(1)}_{n} & \equiv \left(\frac{1}{\sqrt{n}}\sum_{i=1}^{n}I\{A_{i}=a,S_{i}=s\}\tilde{
	Y}_{i}(a) :( a,s) \in \mathcal{A}_{0}\times \mathcal S\right)\label{eq:L1}\\
		\mathbb L^{(2)}_{n} & \equiv \left(\sqrt{n}\left( \frac{n(s)}{n}-p(s)\right): s\in \mathcal S\right) \label{eq:L2}~,
	\end{align}
	and $\mathbb L_{n} = (\mathbb L^{(1)\prime}_{n},\mathbb L^{(2)\prime}_{n})'$. It follows that 
	\begin{equation*}
		\mathbb L_{n}\overset{d}{\to} N\left( \left( 
	\begin{array}{c}
	0 \\ 
	0
	\end{array}
	\right) ,\left( 
	\begin{array}{cc}
	\Sigma _{1} & 0 \\ 
	0 & \Sigma _{2}
	\end{array}
	\right) \right)~,
	\end{equation*}
	where 
	\begin{align*}
		\Sigma_{1} &= \diag\left(\pi_{a}(s)p(s)\sigma _{\tilde{Y}(a)}^{2}(s) : (a,s)\in \mathcal{A}_{0}\times \mathcal{S}\right) \\
		\Sigma_{2} &= \diag\left(p(s):s\in\mathcal S\right)-\left(p(s):s\in\mathcal S\right)\left(p(s):s\in\mathcal S\right)'~.
	\end{align*}
\end{lemma}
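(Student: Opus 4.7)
The plan is to handle the two blocks of $\mathbb L_n$ separately and then combine them via characteristic functions so as to obtain the block-diagonal covariance. The marginal convergence of $\mathbb L_n^{(2)}$ is standard: since $S_1,\ldots,S_n$ are i.i.d.\ with $P\{S_i=s\}=p(s)$, the multivariate CLT applied to the indicator vector $(I\{S_i=s\}:s\in\mathcal S)$ yields $\mathbb L_n^{(2)}\overset{d}{\to} N(0,\Sigma_2)$ with the stated $\Sigma_2$.

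For $\mathbb L_n^{(1)}$ I would condition on $(S^{(n)},A^{(n)})$ and exploit Assumption \ref{ass:rand}.(a), which gives $W^{(n)}\indep A^{(n)}\mid S^{(n)}$. Because $W^{(n)}$ is i.i.d., the potentials $\{\tilde Y_i(a):S_i=s\}$ are i.i.d.\ with mean zero and variance $\sigma^2_{\tilde Y(a)}(s)$, and they remain so after further conditioning on $A^{(n)}$. For each $(a,s)$, let $i_1^{(a,s)}<\cdots<i_{n_a(s)}^{(a,s)}$ enumerate the units with $A_i=a,S_i=s$, and set $\tilde Y_k^{(a,s)}\equiv\tilde Y_{i_k^{(a,s)}}(a)$. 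Because distinct $(a,s)$ correspond to disjoint sets of units and potentials of different units are independent, the sequences $\{\tilde Y_k^{(a,s)}\}_k$ are conditionally independent across $(a,s)$ given $(S^{(n)},A^{(n)})$, and each is conditionally i.i.d.\ from the law of $\tilde Y(a)\mid S=s$. This yields the identity
\[
\frac{1}{\sqrt n}\sum_{i=1}^n I\{A_i=a,S_i=s\}\tilde Y_i(a)=\sqrt{\tfrac{n_a(s)}{n}}\cdot\frac{1}{\sqrt{n_a(s)}}\sum_{k=1}^{n_a(s)}\tilde Y_k^{(a,s)}~.
\]
By the strong law $n(s)/n\to p(s)$ a.s., combined with Assumption \ref{ass:rand}.(b) extended to $a=0$ via $\pi_0(s)=1-\sum_{a\in\mathcal A}\pi_a(s)$, we obtain $n_a(s)/n\overset{P}{\to}\pi_a(s)p(s)>0$ and $n_a(s)\to\infty$ in probability. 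Conditional on $(S^{(n)},A^{(n)})$, the second factor above is a normalized sum of $n_a(s)$ i.i.d.\ zero-mean summands with finite variance (by Assumption \ref{ass:moments}), so the conditional CLT with Slutsky's theorem delivers, for each $(a,s)$,
\[
\sqrt{\tfrac{n_a(s)}{n}}\cdot\frac{1}{\sqrt{n_a(s)}}\sum_{k=1}^{n_a(s)}\tilde Y_k^{(a,s)}\overset{d}{\to}N\bigl(0,\pi_a(s)p(s)\sigma^2_{\tilde Y(a)}(s)\bigr)~,
\]
and joint convergence to $N(0,\Sigma_1)$ follows from the conditional independence across $(a,s)$.

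To obtain the joint limit with block-diagonal covariance I would factor the characteristic function as
\[
E\bigl[e^{it_1'\mathbb L_n^{(1)}+it_2'\mathbb L_n^{(2)}}\bigr]=E\bigl[e^{it_2'\mathbb L_n^{(2)}}\,\varphi_n(t_1)\bigr],\quad \varphi_n(t_1)\equiv E\bigl[e^{it_1'\mathbb L_n^{(1)}}\mid S^{(n)},A^{(n)}\bigr]~,
\]
and show that $\varphi_n(t_1)\overset{P}{\to}\exp(-\tfrac12 t_1'\Sigma_1 t_1)$, a deterministic limit. Bounded convergence together with the marginal CLT for $\mathbb L_n^{(2)}$ then produces the factorized limit of the joint characteristic function, which, by the Cram\'er-Wold device, establishes the claimed joint normality with block-diagonal covariance.

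The main technical obstacle will be this last step: strengthening the conditional distributional convergence of $\mathbb L_n^{(1)}$ to in-probability convergence of the conditional characteristic function $\varphi_n(t_1)$ to a deterministic limit requires a Lindeberg-type condition that holds uniformly enough in the realization of $(S^{(n)},A^{(n)})$. The second-moment bound in Assumption \ref{ass:moments}, combined with a within-stratum law of large numbers for squared potentials, should suffice; this is precisely the ``coupling-like argument in combination with results about partial sums'' referenced in the remark following Theorem \ref{theorem:sat}.
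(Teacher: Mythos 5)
Your proof is correct, but it takes a genuinely different route from the paper's. The paper argues by an explicit coupling: after noting that the distribution of $\mathbb L^{(1)}_n$ is unchanged if units are reordered first by stratum and then by treatment, it introduces auxiliary i.i.d.\ arrays $\{\tilde Y^s_i(a)\}$ drawn independently of $(A^{(n)},S^{(n)})$ and compares partial sums truncated at the empirical points $N(s)+N_a(s)$ with partial sums truncated at the deterministic points $\lfloor n(F(s)+F_a(s))\rfloor$; the deterministically truncated vector $\mathbb L^{\ast(1)}_n$ is independent of $\mathbb L^{(2)}_n$ by construction, converges to $N(0,\Sigma_1)$ by the partial-sum results of \citet[][Lemma B.1]{bugni/canay/shaikh:16}, and differs from the original by $o_P(1)$ because $N(s)/n$ and $N_a(s)/n$ converge to $F(s)$ and $F_a(s)$. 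You instead condition on $(S^{(n)},A^{(n)})$, exploit the fact that within each cell the relevant potential outcomes are conditionally i.i.d.\ from the law of $\tilde Y(a)\mid S=s$ and independent across cells, and factor the joint characteristic function so that only the in-probability limit of the conditional characteristic function $\varphi_n(t_1)$ is needed. Both arguments are valid; the paper's representation is reused almost verbatim in Lemmas \ref{lemma:CLT-sfe} and \ref{lemma:conv-in-P}, whereas your conditioning argument is more self-contained (it avoids the almost-sure representation and partial-sum machinery) and in effect delivers stable convergence of $\mathbb L^{(1)}_n$, i.e., joint convergence with any $(S^{(n)},A^{(n)})$-measurable statistic that itself converges in distribution, which would also cover the extension needed for Lemma \ref{lemma:CLT-sfe}. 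One remark: the ``main technical obstacle'' you flag at the end is not actually an obstacle. Conditional on $(S^{(n)},A^{(n)})$, the law of $\mathbb L^{(1)}_n$ depends on the realization only through the counts $n_a(s)$, so $\varphi_n(t_1)=\prod_{(a,s)}\psi_{a,s}(t_{1,(a,s)}/\sqrt n)^{n_a(s)}$ with $\psi_{a,s}$ the characteristic function of the fixed, $n$-independent law of $\tilde Y(a)\mid S=s$; the second-order expansion $\psi_{a,s}(u)=1-\tfrac12\sigma^2_{\tilde Y(a)}(s)u^2+o(u^2)$ combined with $n_a(s)/n\overset{P}{\to}\pi_a(s)p(s)$ and the deterministic bound $n_a(s)\le n$ yields $\varphi_n(t_1)\overset{P}{\to}\exp(-\tfrac12 t_1'\Sigma_1 t_1)$ directly, with no Lindeberg condition or uniformity over realizations required; bounded convergence then completes the factorization exactly as you describe.
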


\begin{proof}
To prove our result, we first show that
\begin{equation*}\label{eq:indep-construction}
\left\{ \mathbb L^{(1)}_{n},\mathbb L^{(2)}_{n} \right\} \overset{d}{=}\left\{\mathbb L^{\ast (1)}_{n},\mathbb L^{(2)}_{n}\right\}+o_{P}(1)~,
\end{equation*}
for a random vector $\mathbb L^{\ast (1)}_{n}$ satisfying $\mathbb L^{\ast (1)}_{n}\indep \mathbb L^{(2)}_{n}$ and $\mathbb L^{\ast (1)}_{n} \overset{d}{\to}N\left( 0,\Sigma _{1}\right) $. 
We then combine this result with the fact that $\mathbb L^{(2)}_{n}\overset{d}{\to}N\left( 0,\Sigma_{2}\right)$, which follows from $W^{(n)}$ consisting of $n$ i.i.d.\ observations and the CLT. 

Under the assumption that $W^{(n)}$ is i.i.d.\ and Assumption \ref{ass:rand}.(a), the distribution of $\mathbb L^{(1)}_{n}$ is the same as the distribution of the same quantity where units are ordered first by strata $s\in\mathcal S$ and then ordered by treatment assignment $a\in\mathcal A$ within strata. In order to exploit this observation, it is useful to introduce some further notation. Define $N(s) \equiv \sum_{i=1}^{n}I\{S_{i} <s\}$, $N_{a}(s) \equiv \sum_{i=1}^{n}I\{A_{i} <a,S_i=s\}$, $F(s) \equiv P\{S_{i}< s\}$, and $F_{a}(s) \equiv P\{A_i<a,S_{i}=s\}$ for all $(a,s)\in \mathcal A\times \mathcal{S}$. Furthermore, independently for each $(a,s) \in \mathcal A\times  \mathcal S$ and independently of $ (A^{(n)},S^{(n)})$, let $\{ \tilde{Y}_{i}^{s}(a) : 1 \leq i \leq n\} $ be i.i.d.\ with marginal distribution equal to the distribution of $\tilde{Y}_{i}(a)|S_{i}=s $. With this notation, define
\begin{equation*}
	\tilde{\mathbb L}^{(1)}_{n} \equiv \left(\frac{1}{\sqrt{n}}\sum_{i=1}^{n}I\{A_{i}=a,S_{i}=s\}\tilde{
	Y}^{s}_{i}(a) :( a,s) \in \mathcal{A}_{0}\times \mathcal S\right)=\left(\frac{1}{\sqrt{n}}\sum_{i=n\frac{N(s)+N_{a}(s)}{n}+1}^{n\frac{N(s)+N_{a+1}(s)}{n}}\tilde{	Y}^{s}_{i}(a) :( a,s) \in \mathcal{A}_{0}\times \mathcal S\right) ~.\label{eq:L1-tilde}
\end{equation*}
By construction, $\{\tilde{\mathbb L}^{(1)}_{n}|S^{(n)},A^{(n)}\}\overset{d}{=}\{\mathbb L^{(1)}_{n}|S^{(n)},A^{(n)}\}$ and so $\tilde{\mathbb L}^{(1)}_{n} \overset{d}{=} \mathbb L^{ (1)}_{n}$. Since $\mathbb L^{(2)}_{n} $ is only a function of $S^{(n)}$, we further have that $\left\{ \mathbb L^{(1)}_{n},\mathbb L^{(2)}_{n} \right\} \overset{d}{=}\left\{\tilde{\mathbb L}^{(1)}_{n},\mathbb L^{(2)}_{n}\right\}$. Next, define 
\begin{equation*}
	\mathbb L^{\ast (1)}_{n} \equiv\left(\frac{1}{\sqrt{n}}\sum_{i=\lfloor n\left(
F(s)+F_{a}(s)\right) \rfloor +1}^{\lfloor n\left( F(s)+F_{a+1}(s)\right) \rfloor
}\tilde{	Y}^{s}_{i}(a) :( a,s) \in \mathcal{A}_{0}\times \mathcal S\right)~.\label{eq:L1-ast}
\end{equation*}
Note that $\mathbb L^{\ast (1)}_{n}\indep \mathbb L^{(2)}_{n}$. Using similar partial sum arguments as those in \citet[][Lemma B.1]{bugni/canay/shaikh:16}, it follows that 
\begin{equation*}
	L_{n,a}^{\ast (1)}(s) = \frac{1}{\sqrt{n}}\sum_{i=\lfloor n\left(
F(s)+F_{a}(s)\right) \rfloor +1}^{\lfloor n\left( F(s)+F_{a+1}(s)\right) \rfloor
}\tilde{	Y}^{s}_{i}(a) \overset{d}{\to} N\left(0, \pi_{a}(s)p(s)\sigma^2_{\tilde Y(a)}(s) \right)~,
\end{equation*}
for all $(a,s)\in \mathcal A_0\times \mathcal S$, where we used that $F_{a+1}(s)-F_{a}(s)=\pi_{a}(s)p(s)$. By the independence of the components, it follows that $\mathbb L^{\ast (1)}_{n} \overset{d}{\to}N\left( 0,\Sigma _{1}\right) $. We conclude the proof by arguing that 
\begin{equation*}
	\tilde{L}^{(1)}_{n,a}(s) - L^{\ast (1)}_{n,a}(s) \overset{P}{\to} 0~,
\end{equation*}
for all $(a,s)\in \mathcal A_0\times \mathcal S$, where 
\begin{equation*}
	\tilde{L}^{(1)}_{n,a}(s) = \frac{1}{\sqrt{n}}\sum_{i=n\frac{N(s)+N_{a}(s)}{n}+1}^{n\frac{N(s)+N_{a+1}(s)}{n}}\tilde{	Y}^{s}_{i}(a) ~.
\end{equation*}
This in turn follows from 
\begin{equation*}
	\left(\frac{N(s)}{n}, \frac{N_a(s)}{n} \right)\overset{P}{\to} \left(F(s), F_a(s) \right)~
\end{equation*}
for all $(a,s)\in \mathcal A_0\times \mathcal S$ and again invoking similar arguments to those in \citet[][Lemma B.1]{bugni/canay/shaikh:16}.
\end{proof}

\begin{lemma}\label{lemma:CLT-sfe}
	Suppose $Q$ satisfies Assumption \ref{ass:moments} and the treatment assignment mechanism satisfies Assumption \ref{ass:rand-sfe}. Define
	\begin{align}
		\mathbb L^{(1)}_{n} & \equiv \left(\frac{1}{\sqrt{n}}\sum_{i=1}^{n}I\{A_{i}=a,S_{i}=s\}\tilde{
	Y}_{i}(a) :( a,s) \in \mathcal{A}_{0}\times \mathcal S\right)\label{eq:L1-sfe}\\
		\mathbb L^{(2)}_{n} & \equiv \left(\sqrt{n}\left( \frac{n(s)}{n}-p(s)\right): s\in \mathcal S\right) \label{eq:L2-sfe}~,\\
		\mathbb L^{(3)}_{n} & \equiv \left(\sqrt{n}\left( \frac{n_a(s)}{n(s)}-\pi_a\right): ( a,s) \in \mathcal{A}_{0}\times \mathcal S \right) \label{eq:L3-sfe}~,
	\end{align}
	and $\mathbb L_{n} = (\mathbb L^{(1)\prime}_{n},\mathbb L^{(2)\prime}_{n},\mathbb L^{(3)\prime}_{n})'$. It follows that 
	\begin{equation*}
		\mathbb L_{n}\overset{d}{\to} N\left( \left( 
	\begin{array}{c}
	0 \\ 
	0\\
	0
	\end{array}
	\right) ,\left( 
	\begin{array}{ccc}
	\Sigma _{1} & 0 & 0\\ 
	0 & \Sigma _{2} & 0 \\
	0 & 0 & \Sigma_{3}
	\end{array}
	\right) \right)~,
	\end{equation*}
	where 
	\begin{align*}
		\Sigma_{1} &= \diag\left(\pi_{a}(s)p(s)\sigma _{\tilde{Y}(a)}^{2}(s) : (a,s)\in \mathcal{A}_{0}\times \mathcal{S}\right) \\
		\Sigma_{2} &= \diag\left(p(s):s\in\mathcal S\right)-\left(p(s):s\in\mathcal S\right)\left(p(s):s\in\mathcal S\right)'\\
		\Sigma_{3} &= \diag\left(\Sigma_{D}(s)/p(s):s\in\mathcal S\right)~.
	\end{align*}
\end{lemma}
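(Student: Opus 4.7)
The plan is to mimic the structure of the proof of Lemma \ref{lemma:CLT-sat} while also handling the new block $\mathbb L^{(3)}_n$. Since $\mathbb L^{(2)}_n$ and $\mathbb L^{(3)}_n$ are both functions of $(S^{(n)},A^{(n)})$ only, the heart of the argument will still be to decouple $\mathbb L^{(1)}_n$ from treatment assignments and strata via an independent coupling, while a separate conditional argument handles the joint behavior of $(\mathbb L^{(2)}_n,\mathbb L^{(3)}_n)$.

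First, I would reuse the construction from Lemma \ref{lemma:CLT-sat} verbatim: introduce $\tilde{\mathbb L}^{(1)}_n$ and $\mathbb L^{*(1)}_n$, where $\mathbb L^{*(1)}_n$ is built from i.i.d.\ copies $\{\tilde Y^{s}_i(a)\}$ (independent of $(S^{(n)},A^{(n)})$) summed over deterministic index ranges $\lfloor n(F(s)+F_a(s))\rfloor+1,\ldots,\lfloor n(F(s)+F_{a+1}(s))\rfloor$. The identity $\{\mathbb L^{(1)}_n,\mathbb L^{(2)}_n,\mathbb L^{(3)}_n\} \stackrel{d}{=} \{\tilde{\mathbb L}^{(1)}_n,\mathbb L^{(2)}_n,\mathbb L^{(3)}_n\}$ is immediate from the conditional equality of distributions given $(S^{(n)},A^{(n)})$, since $\mathbb L^{(2)}_n,\mathbb L^{(3)}_n$ are measurable with respect to that sigma-field. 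The partial-sum coupling argument of \citet[][Lemma B.1]{bugni/canay/shaikh:16}, together with $\frac{N(s)+N_a(s)}{n}\overset{P}{\to} F(s)+F_a(s)$ (which follows from Assumption \ref{ass:rand-sfe}.(c), which in turn implies Assumption \ref{ass:rand}.(b)), then gives $\tilde{\mathbb L}^{(1)}_n - \mathbb L^{*(1)}_n = o_P(1)$ componentwise, and the CLT on i.i.d.\ $\tilde Y^{s}_i(a)$ gives $\mathbb L^{*(1)}_n \stackrel{d}{\to} N(0,\Sigma_1)$.

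Next, I would establish the joint convergence $(\mathbb L^{(2)}_n,\mathbb L^{(3)}_n)\stackrel{d}{\to} N(0,\mathrm{diag}(\Sigma_2,\Sigma_3))$ by a characteristic function argument. Writing
\begin{equation*}
E\bigl[e^{i t'\mathbb L^{(2)}_n + i s'\mathbb L^{(3)}_n}\bigr] = E\bigl[e^{i t'\mathbb L^{(2)}_n}\, E[e^{i s'\mathbb L^{(3)}_n}\mid S^{(n)}]\bigr]~,
\end{equation*}
Assumption \ref{ass:rand-sfe}.(c) implies that $E[e^{i s'\mathbb L^{(3)}_n}\mid S^{(n)}]\to e^{-\frac{1}{2}s'\Sigma_3 s}$ almost surely (the limit is a deterministic constant because $p(s)$ appears in $\Sigma_3$ through $\Sigma_D(s)/p(s)$ with $n(s)/n\to p(s)$), while the i.i.d.\ structure of $\{S_i\}$ gives $\mathbb L^{(2)}_n\stackrel{d}{\to} N(0,\Sigma_2)$ by the multivariate CLT. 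Combining these with bounded convergence yields the product of marginal characteristic functions, establishing both the joint normality and the asymptotic independence of $\mathbb L^{(2)}_n$ and $\mathbb L^{(3)}_n$.

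Finally, since $\mathbb L^{*(1)}_n$ is independent of $(S^{(n)},A^{(n)})$ by construction, it is in particular independent of $(\mathbb L^{(2)}_n,\mathbb L^{(3)}_n)$ for every $n$. Combining marginal convergence of $\mathbb L^{*(1)}_n$, the joint convergence above for $(\mathbb L^{(2)}_n,\mathbb L^{(3)}_n)$, and Slutsky to absorb the $o_P(1)$ coupling error yields the stated block-diagonal limit. The main obstacle I anticipate is the conditional characteristic function step: ensuring that Assumption \ref{ass:rand-sfe}.(c), which provides conditional convergence in distribution almost surely along $S^{(n)}$, can be combined with the unconditional CLT for $\mathbb L^{(2)}_n$ in a way that cleanly yields asymptotic independence; this is exactly where Assumption \ref{ass:rand-sfe}.(c) does more work than Assumption \ref{ass:rand}.(b) and is the reason Lemma \ref{lemma:CLT-sat} does not already subsume the present result.
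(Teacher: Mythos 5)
Your proposal is correct and follows essentially the same route as the paper: the paper's (largely omitted) proof likewise reuses the coupling construction of Lemma \ref{lemma:CLT-sat} to replace $\mathbb L^{(1)}_n$ by an independent $\mathbb L^{*(1)}_n$, and then combines the i.i.d.\ CLT for $\mathbb L^{(2)}_n$ with the conditional-on-$S^{(n)}$ normality of $\mathbb L^{(3)}_n$ from Assumption \ref{ass:rand-sfe}.(c). Your conditional characteristic-function step is simply an explicit writing-out of the joint convergence of $(\mathbb L^{(2)}_n,\mathbb L^{(3)}_n)$ that the paper asserts without detail, and it is the right way to justify it.
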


\begin{proof}
To prove our result, we first show that
\begin{equation*}\label{eq:indep-construction-sfe}
\left\{ \mathbb L^{(1)}_{n},\mathbb L^{(2)}_{n},\mathbb L^{(3)}_{n} \right\} \overset{d}{=}\left\{\mathbb L^{\ast (1)}_{n},\mathbb L^{(2)}_{n},\mathbb L^{(3)}_{n} \right\}+o_{P}(1)~,
\end{equation*}
for a random vector $\mathbb L^{\ast (1)}_{n}$ satisfying $\mathbb L^{\ast (1)}_{n}\indep (\mathbb L^{(2)}_{n},\mathbb L^{(3)}_{n})$ and $\mathbb L^{\ast (1)}_{n} \overset{d}{\to}N\left( 0,\Sigma _{1}\right) $. 
We then combine this result with the fact that $\mathbb L^{(2)}_{n}\overset{d}{\to}N\left( 0,\Sigma_{2}\right)$, which follows from $W^{(n)}$ consisting of $n$ i.i.d.\ observations and the CLT, and the fact that conditional on $S^{(n)}$, $\mathbb L^{(3)}_{n}\overset{d}{\to} N(0,\Sigma_3)$, which follows from Assumption \ref{ass:rand-sfe}. The proof of \eqref{eq:indep-construction-sfe} follows from similar arguments to those used in the proof of Lemma \ref{lemma:CLT-sat} and so we omit them here. 
\end{proof}

\begin{lemma}\label{lemma:XX-XY}
	Suppose $Q$ satisfies Assumption \ref{ass:moments} and the treatment assignment mechanism satisfies Assumption \ref{ass:rand}.  Let 
	\begin{equation}\label{eq:CC}
	\mathbb C_n'\mathbb C_n = \left[\begin{array}{cc}
	\diag(n(s):s\in\mathcal S) & \sum_{s\in\mathcal S} \mathbb{J}_{s}\otimes(n_a(s):a\in\mathcal A)'~\\
	\sum_{s\in\mathcal S} \mathbb{J}_{s}\otimes(n_a(s):a\in\mathcal A) & \diag(n_a(s):(a,s)\in\mathcal A\times \mathcal S)
	\end{array}\right]~,
	\end{equation}
	and 
	\begin{equation}\label{eq:CY}
	\mathbb C_n'\mathbb Y_n = \left[\begin{array}{c}
	\left( \sum_{a\in \mathcal{A}_{0}}\sum_{i=1}^{n}  I\{A_i=a,S_i=s\}\tilde{Y}_{i}(a)
	+\sum_{a\in \mathcal{A}_{0}}n_{a}(s)\left( E[ m_{a}(Z)|S=s] +\mu_{a}\right) :s\in \mathcal{S}\right) \\
	\left(\sum_{i=1}^{n}I\{A_i=a,S_i=s\}\tilde{Y}_{i}(a)+n_{a}(s)(E[ m_{a}(Z)|S=s] +\mu_{a}) :(a,s)\in \mathcal{A}\times \mathcal{S}\right)
	\end{array}\right]~,
\end{equation}
where $\mathbb Y_n \equiv (Y_i:1\le i \le n)$. It follows that 
\begin{equation*}
	\frac{1}{n}\mathbb C_n'\mathbb C_n \overset{P}{\to} \Sigma_{C} \equiv
	\left[\begin{array}{cc}
	\diag(p(s):s\in\mathcal S) & \sum_{s\in\mathcal S} \mathbb{J}_{s}\otimes(\pi_a(s)p(s):a\in\mathcal A)'~\\
	\sum_{s\in\mathcal S} \mathbb{J}_{s}\otimes(\pi_a(s)p(s):a\in\mathcal A) & \diag(\pi_a(s)p(s):(a,s)\in\mathcal A\times \mathcal S)
	\end{array}\right]~,
\end{equation*}
and 
\begin{equation*}
	\frac{1}{n}\mathbb C_n'\mathbb Y_n \overset{P}{\to} 
	\left[\begin{array}{c}
	\left( p(s)\sum_{a\in \mathcal{A}_{0}}\pi_{a}(s)\left( E[ m_{a}(Z)|S=s] +\mu_{a}\right) :s\in \mathcal{S}\right) \\[2mm]
	\Big(p(s)\pi_{a}(s)(E[ m_{a}(Z)|S=s] +\mu_{a}) :(a,s)\in \mathcal{A}\times \mathcal{S}\Big)
	\end{array}\right]~.
\end{equation*}
In addition, 
\begin{equation*}
	\Sigma^{-1}_{C} =
	\left[\begin{array}{cc}
	\diag\left(\frac{1}{\pi_0(s)p(s)}:s\in\mathcal S\right) & \sum_{s\in\mathcal S} \mathbb{J}_{s}\otimes\left(\frac{-1}{\pi_{0}(s)p(s)}:a\in\mathcal A\right)'~\\
	 \sum_{s\in\mathcal S} \mathbb{J}_{s}\otimes\left(\frac{-1}{\pi_{0}(s)p(s)}:a\in\mathcal A\right) &  \sum_{s\in\mathcal S} \mathbb{J}_{s}\otimes\left(\diag\left(\frac{1}{\pi_a(s)p(s)}:a\in\mathcal A\right)+\frac{1}{\pi_0(s)p(s)}\iota_{|\mathcal A|}\iota^{\prime}_{|\mathcal A|} \right)
	\end{array}\right]~.
\end{equation*}
\end{lemma}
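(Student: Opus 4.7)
The plan is to break the lemma into three essentially independent parts: (i) verify the explicit algebraic expressions \eqref{eq:CC} and \eqref{eq:CY}; (ii) pass to the probability limits of the two sample averages; and (iii) check the formula for $\Sigma_C^{-1}$ by direct block multiplication. Parts (i) and (iii) are bookkeeping; part (ii) is where Assumption \ref{ass:rand} actually enters.

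For \eqref{eq:CC}, I would expand $\mathbb{C}_n'\mathbb{C}_n$ block-by-block using the definition of $C_i$ in the paper and observe that each entry collapses to one of $n(s)$, $n_a(s)$, or $0$, giving the displayed Kronecker-style structure. For \eqref{eq:CY}, substitute $Y_i = \sum_{a\in\mathcal A_0} I\{A_i=a\}Y_i(a)$ and write $Y_i(a) = \tilde Y_i(a) + E[Y_i(a)|S_i]$, then use $E[Y_i(a)|S_i=s] = E[m_a(Z_i)|S_i=s] + \mu_a$ (which follows from the definition $m_a(Z_i) = E[Y_i(a)|Z_i] - \mu_a$ and iterated expectations). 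Within each block, the $\tilde Y_i(a)$ piece remains as a sum and the constant $E[m_a(Z)|S=s] + \mu_a$ factors out of the indicator sum, multiplying the appropriate count $n_a(s)$, yielding \eqref{eq:CY}.

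Turning to the probability limits, the convergence of $\frac{1}{n}\mathbb{C}_n'\mathbb{C}_n$ to $\Sigma_C$ is immediate: $n(s)/n \stackrel{P}{\to} p(s)$ by the weak law of large numbers applied to the i.i.d.\ strata indicators, and $n_a(s)/n = (n_a(s)/n(s))(n(s)/n) \stackrel{P}{\to} \pi_a(s)p(s)$ by Assumption \ref{ass:rand}.(b) combined with Slutsky. The harder step is $\frac{1}{n}\mathbb{C}_n'\mathbb{Y}_n$: the deterministic pieces handle themselves as above, but the entries involving $\frac{1}{n}\sum_{i=1}^n I\{A_i=a,S_i=s\}\tilde Y_i(a)$ require a law of large numbers compatible with covariate-adaptive randomization. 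I would invoke Lemma \ref{lemma:conv-in-P}, which is precisely the novel LLN the paper introduces for this purpose. Intuitively, Assumption \ref{ass:rand}.(a) implies that conditional on $(A^{(n)},S^{(n)})$ the units with $A_i=a$, $S_i=s$ form an i.i.d.\ sample from $Y(a)|S=s$, so $\tilde Y_i(a)$ has conditional mean zero; a conditional second-moment bound from Assumption \ref{ass:moments} combined with $n_a(s)/n \stackrel{P}{\to} \pi_a(s)p(s)$ then drives the whole average to $0$ in probability. This is the main (and essentially only) analytic obstacle in the proof.

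Finally, to verify the stated $\Sigma_C^{-1}$, I would multiply it against $\Sigma_C$ block-by-block and check the identity. Because $\Sigma_C$ has a block-diagonal structure across strata (after rearranging via the $\mathbbm{J}_s$), the computation reduces to verifying a $(1+|\mathcal A|)\times(1+|\mathcal A|)$ block-inverse identity per stratum $s$, where the $(|\mathcal A|\times|\mathcal A|)$ treatment sub-block $\diag(\pi_a(s)p(s):a\in\mathcal A)$ bordered by $(\pi_a(s)p(s):a\in\mathcal A)$ has the familiar diagonal-plus-rank-one inverse, and the $1/(\pi_0(s)p(s))$ terms arise from eliminating the constant (strata-indicator) row. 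This is a short direct check; no further probabilistic input is needed.
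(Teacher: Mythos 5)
Your proposal matches the paper's own proof essentially step for step: the same decomposition of $Y_i$ into $\tilde Y_i(a) + E[m_a(Z)\mid S=s] + \mu_a$, the same use of $n(s)/n \overset{P}{\to} p(s)$ and $n_a(s)/n = (n_a(s)/n(s))(n(s)/n) \overset{P}{\to} \pi_a(s)p(s)$ via Assumption \ref{ass:rand}.(b), the same appeal to Lemma \ref{lemma:conv-in-P} to kill the $\frac{1}{n}\sum_i I\{A_i=a,S_i=s\}\tilde Y_i(a)$ terms, and the same direct block verification of $\Sigma_C^{-1}$ (which the paper likewise omits as routine). The only cosmetic difference is that your heuristic for the LLN step mentions a conditional second-moment bound, whereas Lemma \ref{lemma:conv-in-P} needs only a first moment; since you invoke the lemma itself, this does not affect correctness.
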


\begin{proof}
The first result follows immediately from Assumption \ref{ass:rand}.(b) and the fact that $\frac{n(s)}{n}\overset{P}{\to} p(s)$ and $\frac{n_a(s)}{n} =\frac{n_a(s)}{n(s)}\frac{n(s)}{n}\overset{P}{\to} \pi_a(s)p(s) $ for all $(a,s)\in\mathcal A\times S$. For the second result, consider the following argument,
\begin{align*}
	\frac{1}{n}\mathbb{C}_{n}^{\prime }\mathbb{Y}_{n} &=\frac{1}{n}\sum_{i=1}^{n}\left[
	\begin{array}{c}
	\left(I\{S_{i}=s\}Y_{i}:s\in \mathcal{S}\right) \\ 
	\left(I\{A_{i}=a,S_{i}=s\}Y_{i}:(a,s)\in \mathcal{A}\times \mathcal{S}\right)
	\end{array}
	\right] \\
	&=\frac{1}{n}\sum_{i=1}^{n}\left[ 
	\begin{array}{c}
	\left(\sum_{a\in \mathcal{A}_{0}}I\{A_i=a,S_i=s\} \left[ \tilde{Y}_{i}(a)+E\left[ m_{a}(Z)|S_{i}=s \right] +\mu _{a}\right]:s\in \mathcal{S} \right)\\
	\left(I\{A_i=a,S_i=s\}\left[ \tilde{Y}_{i}(a)+E\left[ m_{a}(Z)|S_{i}=s\right] +\mu _{a}\right]:(a,s)\in \mathcal{A}\times \mathcal{S} \right)
	\end{array}
	\right] \\
	&=\left[ 
	\begin{array}{c}
	\left( p(s) \sum_{a\in \mathcal{A}_{0}}\pi_{a}(s)(E\left[ m_{a}(Z)|S=s\right] +\mu _{a}):s\in \mathcal{S} \right) \\ 
	\left( p(s)\pi_{a}(s)(E\left[ m_{a}(Z)|S=s\right] +\mu _{a}):(a,s)\in \mathcal{A}\times \mathcal{S}
	\right)
	\end{array}
	\right] +o_{P}(1)
\end{align*}
where we used $\frac{1}{n}\sum_{i=1}^n I\{A_i=a,S_i=s\}=\frac{n_a(s)}{n}\overset{P}{\to} \pi_a(s)p(s)$, and $\frac{1}{n}\sum_{i=1}^n I\{A_i=a,S_i=s\}\tilde{Y}_{i}(a)\overset{P}{\to} 0$ for all $(a,s)\in\mathcal A_0 \times \mathcal S$. Finally, the last result follows from simple manipulations that we omit. 
\end{proof}

\begin{lemma}\label{lemma:conv-in-P} 
Suppose $Q$ satisfies Assumption \ref{ass:moments} and the treatment assignment mechanism satisfies Assumption \ref{ass:rand}. Let $W_i = f((Y_i(a):a\in\mathcal A), S_i)$ for some function $f(\cdot)$ satisfy $E[|W_i|] < \infty$.   Then, for all $a\in\mathcal A_0$, 
\begin{equation} \label{eq:wlln1}
\frac{1}{n} \sum_{i = 1}^n W_iI\{A_i=a\} \stackrel{P}{\rightarrow} \sum_{s\in\mathcal S}p(s)\pi_a(s) E[W_i]~.
\end{equation}
\end{lemma}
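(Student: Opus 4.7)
My plan is to decompose the sum by strata,
$$\frac{1}{n}\sum_{i=1}^n W_i I\{A_i = a\} = \sum_{s \in \mathcal S} \frac{1}{n}\sum_{i=1}^n W_i I\{A_i=a, S_i=s\},$$
and to establish for each $(a,s) \in \mathcal A_0 \times \mathcal S$ that the $(a,s)$-summand converges in probability to $p(s)\pi_a(s) E[W_i|S_i=s]$. Summing over $s$ then yields the stated limit, with $E[W_i]$ in the claim understood via iterated expectations as $\sum_{s} p(s) E[W_i|S_i=s]$.

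To handle each $(a,s)$-term, I would factor it as $\frac{n_a(s)}{n} \cdot T_n(a,s)$, where $T_n(a,s) \equiv \frac{1}{n_a(s)}\sum_{i=1}^n W_i I\{A_i=a, S_i=s\}$ on the event $\{n_a(s)\ge 1\}$. Since $\frac{n_a(s)}{n} = \frac{n_a(s)}{n(s)}\cdot \frac{n(s)}{n}$, with $\frac{n_a(s)}{n(s)} \overset{P}{\to} \pi_a(s)$ by Assumption \ref{ass:rand}(b) and $\frac{n(s)}{n} \overset{P}{\to} p(s)$ by the i.i.d.\ weak law, the continuous mapping theorem gives $\frac{n_a(s)}{n} \overset{P}{\to} \pi_a(s) p(s) > 0$. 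It therefore suffices to show that $T_n(a,s) \overset{P}{\to} E[W_i|S_i=s]$.

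The key observation for $T_n(a,s)$ is that by Assumption \ref{ass:rand}(a), conditional on $(S^{(n)}, A^{(n)})$ the random variables $\{W_i : A_i=a, S_i=s\}$ are i.i.d.\ with the marginal law of $W_i\mid S_i=s$, and $E[|W_i|\mid S_i=s]<\infty$ almost surely since $E[|W_i|]<\infty$. Applying Kolmogorov's $L^1$ weak law to this conditionally i.i.d.\ sum, which is permissible because $n_a(s) \overset{P}{\to} \infty$, yields $T_n(a,s) - E[W_i|S_i=s] \overset{P}{\to} 0$; marginalizing removes the conditioning.

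The main obstacle is the rigorous combination of a conditionally-i.i.d.\ WLLN with the random sample size $n_a(s)$ generated by a covariate-adaptive mechanism. A cleaner self-contained route, paralleling the argument in Lemma \ref{lemma:CLT-sat}, would be to introduce an independent array $\{\tilde W_i^s\}$ of i.i.d.\ copies from the law of $W_i\mid S_i=s$ and use $W^{(n)}\indep A^{(n)}\mid S^{(n)}$ to obtain the distributional identity
$$\sum_{i: A_i=a, S_i=s} W_i \stackrel{d}{=} \sum_{i=N(s)+N_a(s)+1}^{N(s)+N_{a+1}(s)} \tilde W_i^s,$$
and then replace the random indices $N(s)+N_a(s)$ and $N(s)+N_{a+1}(s)$ by their deterministic analogs $\lfloor n(F(s)+F_a(s))\rfloor$ and $\lfloor n(F(s)+F_{a+1}(s))\rfloor$ via the partial-sum argument of \citet[Lemma B.1]{bugni/canay/shaikh:16}. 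Either path produces the limit $\pi_a(s) p(s) E[W_i|S_i=s]$.
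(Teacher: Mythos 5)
Your proposal is correct and takes essentially the same approach as the paper: both decompose by strata, invoke $W^{(n)} \indep A^{(n)} \mid S^{(n)}$ to replace the summands by independent i.i.d.\ copies drawn from the law of $W_i \mid S_i=s$, and then handle the random sample size $n_a(s)$ --- the paper makes your conditional-WLLN-with-random-sample-size step rigorous via the almost sure representation theorem plus dominated convergence, which is precisely the coupling/partial-sum route you offer as the ``cleaner'' alternative. One small caution: what your argument (and the paper's own proof) actually establishes is the limit $\sum_{s\in\mathcal S}p(s)\pi_a(s)E[W_i\mid S_i=s]$, which is the form used downstream in Lemma \ref{lemma:residuals}; this coincides with the displayed $\sum_{s\in\mathcal S}p(s)\pi_a(s)E[W_i]$ only when $\pi_a(s)$ is constant in $s$ or $E[W_i\mid S_i=s]$ does not vary with $s$, so your appeal to iterated expectations does not quite reconcile the two, and the discrepancy is best attributed to the lemma's statement rather than patched in the proof.
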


\begin{proof}
Fix $a\in\mathcal A_0$. By arguing as in the proof of Lemma \ref{lemma:CLT-sat}, note that $$\frac{1}{n} \sum_{i = 1}^n W_i I\{A_i=a\} \stackrel{d}{=} \sum_{s \in \mathcal S} \frac{1}{n} \sum_{i = 1}^{n_{a}(s)} W_i^s~,$$ where, independently for each $s \in S$ and independently of $(A^{(n)}, S^{(n)})$, $\{W_i^s : 1 \leq i \leq n\}$ are i.i.d.\ with marginal distribution equal to the distribution of $W_i|S_i = s$.  In order to establish the desired result, it suffices to show that 
\begin{equation} \label{eq:thisisenough}
	\frac{1}{n} \sum_{i = 1}^{n_{a}(s)} W_i^s \stackrel{P}{\rightarrow} p(s)\pi_a(s) E[W_i^s]~.
\end{equation}
From Assumption \ref{ass:rand}.(b), $\frac{n_a(s)}{n} \stackrel{P}{\to} p(s)\pi_a(s)$, so \eqref{eq:thisisenough} follows from 
\begin{equation} \label{eq:thisisenough2}
\frac{1}{n_a(s)} \sum_{i = 1}^{n_a(s)} W_i^s \stackrel{P}{\rightarrow} E[W_i^s]~.
\end{equation}
To establish \eqref{eq:thisisenough2}, use the almost sure representation theorem to construct $\frac{\tilde n_a(s)}{n}$ such that $\frac{\tilde n_a(s)}{n} \stackrel{d}{=} \frac{n_a(s)}{n}$ and $\frac{\tilde n_a(s)}{n} \rightarrow p(s)\pi_a(s)$ a.s.  Using the independence of $(A^{(n)}, S^{(n)})$ and $\{W_i^s: 1 \leq i \leq n\}$, we see that for any $\epsilon>0$,
\begin{align*}
	P\left \{\left|\frac{1}{n_a(s)} \sum_{i = 1}^{n_a(s)} W_i^s-E[W_i^s]\right | > \epsilon \right \} 
	&= P\left\{\left|\frac{1}{n \frac{n_a(s)}{n}} \sum_{i = 1}^{n \frac{n_a(s)}{n}} W_i^s-E[W_i^s]\right | > \epsilon \right \} \\
	&= P\left\{\left|\frac{1}{n \frac{\tilde n_a(s)}{n}} \sum_{i = 1}^{n \frac{\tilde n_a(s)}{n}} W_i^s -E[W_i^s]\right | > \epsilon \right \} \\
	&= E\left [ P\left\{\left|\frac{1}{n \frac{\tilde n_a(s)}{n}} \sum_{i = 1}^{n \frac{\tilde n_a(s)}{n}} W_i^s -E[W_i^s]\right | > \epsilon \Big | \frac{\tilde n_a(s)}{n} \right \} \right ] \\
	&\rightarrow 0~,
\end{align*}
where the convergence follows from the dominated convergence theorem and 
\begin{equation} \label{eq:thisisenough3}
	P\left\{\left|\frac{1}{n \frac{\tilde n_a(s)}{n}} \sum_{i = 1}^{n \frac{\tilde n_a(s)}{n}} W_i^s -E[W_i^s]\right | > \epsilon \Big | \frac{\tilde n_a(s)}{n} \right \} \rightarrow 0 \text{ a.s.}~.
\end{equation}
To see that the convergence \eqref{eq:thisisenough3} holds, note that the weak law of large numbers implies that 
\begin{equation} \label{eq:usuallln}
\frac{1}{n_k} \sum_{i = 1}^{n_k} W_i^s \stackrel{P}{\rightarrow} E[W_i^s]
\end{equation}
for any subsequence $n_k \rightarrow \infty$ as $k \rightarrow \infty$.  Since $n \frac{\tilde n_a(s)}{n} \rightarrow \infty$ a.s., \eqref{eq:thisisenough3} follows from the independence of $\frac{\tilde n_a(s)}{n}$ and $\{W_i^s : 1 \leq i \leq n\}$ and \eqref{eq:usuallln}.
\end{proof}

\begin{lemma}\label{lemma:residuals}
	Suppose $Q$ satisfies Assumption \ref{ass:moments} and the treatment assignment mechanism satisfies Assumption \ref{ass:rand}. Let $\hat u_i = Y_i - C_i\hat \gamma_n$ and $\hat \gamma_n = \left((\hat \delta_n(s):s\in\mathcal S)',(\hat \beta_{n,a}(s):(a,s)\in\mathcal A\times \mathcal S)'\right)'$, where $C_i$ is as in \eqref{eq:Ci}, be the least squares residuals associated with the regression in \eqref{eq:linear-sat}. Then,
	\begin{align*}
	\frac{1}{n}\sum_{i=1}^{n}\hat{u}_{i}^{2}
	&\overset{P}{\to}\sum_{(a,s)\in \mathcal{A}_{0}\times \mathcal{S}}p(s)\pi_{a}(s)\sigma_{
	\tilde{Y}(a)}^{2}(s) \\
	\frac{1}{n}\sum_{i=1}^{n}\hat{u}_{i}^{2}I\left\{A_{i}=a,S_{i}=s\right\} 
	&\overset{P}{\to }p(s)\pi_{a}(s)\sigma _{\tilde{Y}(a)}^{2}(s) \\
	\frac{1}{n}\sum_{i=1}^{n}\hat{u}_{i}^{2}I\left\{ S_{i}=s\right\} 
	&\overset{P}{\to }\sum_{a\in \mathcal{A}_{0}}p(s)\pi_{a}(s)\sigma_{\tilde{Y}
	(a)}^{2}(s) \\
	\frac{1}{n}\sum_{i=1}^{n}\hat{u}_{i}^{2}I\left\{ A_{i}=a\right\} 
	&\overset{P}{\to }\sum_{s\in \mathcal{S}}p(s)\pi_{a}(s)\sigma _{\tilde{Y}(a)}^{2}(s)~.
	\end{align*}
\end{lemma}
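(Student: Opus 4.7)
The plan is to exploit the fact that the regression in \eqref{eq:linear-sat} is fully saturated in $(A_i,S_i)$, so its fitted values coincide with the within-cell sample means of $Y_i$. Once this cell-mean representation is in hand, each of the four displayed limits reduces to straightforward applications of Lemma \ref{lemma:conv-in-P}, and the first, third, and fourth identities follow from the second by summing over the appropriate subset of $\mathcal A_0\times\mathcal S$.

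First I would verify the cell-mean representation. The regressors $\{I\{S_i=s\}:s\in\mathcal S\}\cup\{I\{A_i=a,S_i=s\}:(a,s)\in\mathcal A\times\mathcal S\}$ are saturated in $(A_i,S_i)$, so OLS forces the fitted value for each observation in cell $(a,s)$ to equal the cell mean $\bar Y_{n,a}(s)=\frac{1}{n_a(s)}\sum_{i:A_i=a,S_i=s}Y_i$. In particular, $\hat\delta_n(s)=\bar Y_{n,0}(s)$ and $\hat\delta_n(s)+\hat\beta_{n,a}(s)=\bar Y_{n,a}(s)$ for $a\in\mathcal A$. Within cell $(a,s)$, $Y_i=Y_i(a)=E[Y_i(a)|S_i=s]+\tilde Y_i(a)$, so the common conditional mean cancels in $\hat u_i=Y_i-\bar Y_{n,a}(s)$, giving
\begin{equation*}
\hat u_i\,I\{A_i=a,S_i=s\} = \bigl(\tilde Y_i(a)-\bar{\tilde Y}_{n,a}(s)\bigr)I\{A_i=a,S_i=s\},
\end{equation*}
where $\bar{\tilde Y}_{n,a}(s)\equiv\frac{1}{n_a(s)}\sum_{i:A_i=a,S_i=s}\tilde Y_i(a)$ for every $a\in\mathcal A_0$.

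Next I would expand the squared residual sum restricted to cell $(a,s)$ as
\begin{equation*}
\frac{1}{n}\sum_{i=1}^n \hat u_i^2\,I\{A_i=a,S_i=s\}
= \frac{1}{n}\sum_{i=1}^n \tilde Y_i(a)^2\,I\{A_i=a,S_i=s\}
- 2\bar{\tilde Y}_{n,a}(s)\,\frac{1}{n}\sum_{i=1}^n \tilde Y_i(a)\,I\{A_i=a,S_i=s\}
+ \bar{\tilde Y}_{n,a}(s)^2\,\frac{n_a(s)}{n}.
\end{equation*}
Applying Lemma \ref{lemma:conv-in-P} with $W_i=\tilde Y_i(a)^2\,I\{S_i=s\}$ shows the first term converges in probability to $p(s)\pi_a(s)\sigma_{\tilde Y(a)}^2(s)$, since $E[\tilde Y_i(a)^2\mid S_i=s]=\sigma_{\tilde Y(a)}^2(s)$. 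Applying the same lemma with $W_i=\tilde Y_i(a)\,I\{S_i=s\}$ gives $\frac{1}{n}\sum_i \tilde Y_i(a)\,I\{A_i=a,S_i=s\}\overset{P}{\to}0$ because $E[\tilde Y_i(a)\mid S_i=s]=0$; combined with $n_a(s)/n\overset{P}{\to}p(s)\pi_a(s)>0$ from Assumption \ref{ass:rand}.(b), this yields $\bar{\tilde Y}_{n,a}(s)\overset{P}{\to}0$, so the cross and quadratic terms are $o_P(1)$. This establishes the second identity of the lemma; the third identity follows by summing over $a\in\mathcal A_0$, the fourth by summing over $s\in\mathcal S$, and the first by summing over both.

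The main obstacle I anticipate is purely bookkeeping: confirming that $\hat\delta_n(s)=\bar Y_{n,0}(s)$ in the saturated parameterization so that the control cell ($a=0$) admits the same cell-mean expansion as the treatment cells, and being careful to invoke Lemma \ref{lemma:conv-in-P} with the correct conditioning structure for each choice of $W_i$. Beyond this, the result is an immediate consequence of that lemma together with the elementary convergence $n_a(s)/n\to p(s)\pi_a(s)$ from Assumption \ref{ass:rand}.(b).
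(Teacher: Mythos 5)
Your proposal is correct and follows essentially the same route as the paper: both arguments reduce each limit to repeated applications of Lemma \ref{lemma:conv-in-P} after observing that within cell $(a,s)$ the population residual is $\tilde Y_i(a)$. The only presentational difference is that the paper handles the discrepancy between $\hat u_i^2$ and $u_i^2$ by citing an argument analogous to Lemma B.8 of \cite{bugni/canay/shaikh:16}, whereas you make that step explicit through the cell-mean representation of the saturated fit, with the cross and quadratic terms vanishing because $\bar{\tilde Y}_{n,a}(s)\overset{P}{\to}0$.
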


\begin{proof}
First note that, by definition of $\tilde Y_i(a)$, we can write.   
\begin{equation*}
	Y_i = \sum_{(a,s)\in\mathcal A_0 \times S} I\{A_i=a,S_i=s\}[\tilde Y_i(a)+E[m_a(Z)|S=s]+\mu_a]~.
\end{equation*}
In addition, for $\gamma= \left((\delta(s):s\in\mathcal S)',(\beta_{a}(s):(a,s)\in\mathcal A\times \mathcal S)'\right)'$
\begin{align*}
	C_{i}\gamma &= \sum_{s\in \mathcal{S}}I\{S_{i}=s\}\left( E\left[ m_0(Z)|S=s\right] +\mu_0\right) \\
	&\quad + \sum_{(a,s)\in \mathcal{A}\times \mathcal{S}}I\{A_{i}=a,S_{i}=s\}\left[ E[m_a(Z)-m_0(Z)|S=s] +\theta_a \right]~.
\end{align*}
We can therefore write the error term $u_i$ as
\begin{equation*}
	u_{i}=Y_{i}-C_{i}\gamma =\sum_{(a,s)\in \mathcal{A}_{0}\times \mathcal{S}}I\{A_{i}=a,S_{i}=s\}\tilde{Y}_{i}(a)~,
\end{equation*}
and its square as
\begin{equation*}\label{eq:Ytilde2}
	u_{i}^{2}=\sum_{(a,s)\in \mathcal{A}_{0}\times \mathcal{S}}I\{A_{i}=a,S_{i}=s\}\tilde{Y}_{i}^{2}(a)~.  
\end{equation*}
By arguments similar to those in \citet[][Lemma B.8]{bugni/canay/shaikh:16}, it is enough to show the results with $u_i^2$ in place of $\hat u_i^2$. Since $E[u_i^2]=p(s)\pi_a(s) \sigma _{\tilde{Y}(a)}^{2}(s)$, the results follow immediately by invoking Lemma \ref{lemma:conv-in-P} repeatedly. We therefore omit the arguments here. 
\end{proof}

\begin{lemma}\label{lemma:sfe-Ho}
	Suppose $Q$ satisfies Assumption \ref{ass:moments} and the treatment assignment mechanism satisfies Assumption \ref{ass:rand-sfe}. Let $\hat{\mathbb V}^{\ast}_{\rm ho}$ be the homoskedasticity-only estimator of the asymptotic variance for the regression in \eqref{eq:sfe-reg}, defined as 
	\begin{equation}\label{eq:V-ho-sfe}
		\hat{\mathbb V}^{\ast}_{\rm ho} = \left(\frac{1}{n}\sum_{i=1}^n \hat u_i^2\right) \mathbb R^{\ast}\left(\frac{1}{n} \mathbb C^{\ast \prime}_n\mathbb C^{\ast}_n\right)^{-1}\mathbb R^{\ast \prime}~, 
\end{equation}
where $\{\hat u_i:1\le i\le n\}$ are the least squares residuals, $\mathbb C^{\ast}_n$ is the matrix with $i$th row given by 
\begin{equation*}\label{eq:Ci-ast}
	C^{\ast}_i = [(I\{S_i=s\}:s\in\mathcal S)',(I\{A_i=a\}:a\in\mathcal A)']~,
\end{equation*}
and $\mathbb R^{\ast}$ is a matrix with $|\mathcal A|$ rows and $|\mathcal S|+|\mathcal A|$ columns defined as  $\mathbb R^{\ast} =\left[\mathbb O,\mathbbm{I}_{|\mathcal A|}\right]$, where $\mathbb O$ and $\mathbbm{I}_{|\mathcal A|}$ are defined in Table \ref{tab:notation}. Then.
	\begin{equation*}
		\hat{\mathbb V}^{\ast}_{\rm ho} \overset{P}{\to}\left( \sum_{(a,s)\in \mathcal{A} _{0}\times \mathcal{S}}p(s)\pi_a \sigma _{\tilde{Y}(a)}^{2}(s) + \sum_{s\in \mathcal{S}}p(s)\varsigma _{H}^{2}(s)\right)\left[ \frac{1}{\pi_0}\iota_{|\mathcal{A}|}\iota_{|\mathcal{A}|}'+\diag\left( \frac{1}{\pi_a}:a\in \mathcal{A}\right) \right]	 	
	\end{equation*}
	where 
	\begin{equation*}
		\varsigma^2_H(s) = \sum_{a\in\mathcal A_0}\pi_a\left(E[m_a(Z_i)|S=s] \right)^2-\left(\sum_{a\in\mathcal A_0}\pi_a E[m_a(Z_i)|S=s] \right)^2~.
	\end{equation*} 
\end{lemma}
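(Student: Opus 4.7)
The plan is to split $\hat{\mathbb V}^{\ast}_{\rm ho}$ into the scalar residual mean square $\frac{1}{n}\sum_{i=1}^{n}\hat u_i^2$ and the matrix factor $\mathbb R^{\ast}(\frac{1}{n}\mathbb C^{\ast\prime}_n\mathbb C^{\ast}_n)^{-1}\mathbb R^{\ast\prime}$, analyze each separately, and then combine them by the continuous mapping theorem. For the matrix factor I would reuse calculations already carried out in the proof of Theorem \ref{theorem:sfe}: by the Frisch--Waugh--Lovell theorem this submatrix equals $(\frac{1}{n}\mathbb A_n'\mathbb M_n'\mathbb M_n\mathbb A_n)^{-1}$, which under Assumption \ref{ass:rand-sfe}.(b) was shown there to converge in probability to $\diag(1/\pi_a:a\in\mathcal A)+(1/\pi_0)\iota_{|\mathcal A|}\iota'_{|\mathcal A|}$, exactly the bracketed factor in the claim.

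For the scalar factor, the crucial step is to identify the probability limits of the OLS coefficients in the (potentially misspecified) regression \eqref{eq:sfe-reg}. Theorem \ref{theorem:sfe} already gives $\hat\beta^{\ast}_{n,a}\overset{P}{\to}\mu_a-\mu_0$, and by taking within-stratum sample averages of both sides of \eqref{eq:sfe-reg} and invoking Lemma \ref{lemma:conv-in-P} one obtains $\hat\delta^{\ast}_{n,s}\overset{P}{\to}\mu_0+\sum_{a\in\mathcal A_0}\pi_a E[m_a(Z)\mid S=s]$. Let $u_i^{\ast}$ denote the ``population residual'' obtained by replacing the OLS estimates with these limits. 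Using the decomposition $Y_i=\sum_{a\in\mathcal A_0} I\{A_i=a\}\bigl(\mu_a+E[m_a(Z)\mid S_i]+\tilde Y_i(a)\bigr)$, direct substitution yields $u_i^{\ast}=\tilde Y_i(A_i)+\xi_{A_i}(S_i)$, where $\xi_a(s)$ is as in \eqref{eq:Gamma-main}. A telescoping argument analogous to the one used in Lemma \ref{lemma:residuals} then shows $\frac{1}{n}\sum_i(\hat u_i^2-u_i^{\ast 2})=o_P(1)$.

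Expanding $u_i^{\ast 2}=\tilde Y_i^2(A_i)+2\tilde Y_i(A_i)\xi_{A_i}(S_i)+\xi_{A_i}^2(S_i)$ and applying Lemma \ref{lemma:conv-in-P} term by term will give: the quadratic term tends to $\sum_{(a,s)\in\mathcal A_0\times\mathcal S}p(s)\pi_a\sigma^2_{\tilde Y(a)}(s)$; the cross term vanishes because $E[\tilde Y_i(a)\mid S_i=s]=0$; and the last term tends to $\sum_{s\in\mathcal S}p(s)\sum_{a\in\mathcal A_0}\pi_a\xi_a^2(s)$, which by expanding the square and using $\sum_a\pi_a=1$ equals $\sum_{s\in\mathcal S}p(s)\varsigma_H^2(s)$. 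The main obstacle is the correct identification of the population limit of $\hat\delta^{\ast}_{n,s}$ together with the algebra showing that the misspecification bias of the SFE regression contributes exactly the within-stratum variance $\varsigma_H^2(s)$ of the conditional means; once this is in place, multiplying the limits of the two factors will produce the expression in the statement.
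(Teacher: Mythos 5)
Your proposal is correct and follows essentially the route the paper intends: the paper omits this proof, stating only that it is ``similar to that of Theorem \ref{theorem:sat-se},'' i.e., factor $\hat{\mathbb V}^{\ast}_{\rm ho}$ into the residual mean square and the matrix $\mathbb R^{\ast}(\frac{1}{n}\mathbb C^{\ast\prime}_n\mathbb C^{\ast}_n)^{-1}\mathbb R^{\ast\prime}$, take limits of each, and combine. You correctly supply the one genuinely new ingredient relative to that template --- that the misspecified SFE residual has population form $\tilde Y_i(A_i)+\xi_{A_i}(S_i)$, whose squared $\xi$ term averages (via Lemma \ref{lemma:conv-in-P}) to exactly $\sum_{s}p(s)\varsigma_H^2(s)$ --- and your identification of the limit of $\hat\delta^{\ast}_{n,s}$ and the algebra $\sum_{a\in\mathcal A_0}\pi_a\xi_a^2(s)=\varsigma_H^2(s)$ both check out.
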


\begin{proof}
 The proof is similar to that of Theorem \ref{theorem:sat-se} and therefore omitted. 
\end{proof}

\begin{lemma}\label{lemma:sfe-He}
	Suppose $Q$ satisfies Assumption \ref{ass:moments} and the treatment assignment mechanism satisfies Assumption \ref{ass:rand-sfe}. Let $\hat{\mathbb V}^{\ast}_{\rm he}$ be the heteroskedasticity-consistent estimator of the asymptotic variance for the regression in \eqref{eq:sfe-reg}, defined as 
	\begin{equation}\label{eq:V-he-sfe}
		\hat{\mathbb V}^{\ast}_{\rm he} = \mathbb R^{\ast}\left[ \left( \frac{\mathbb{C}_{n}^{\ast \prime }\mathbb{C}^{\ast}_{n}}{n}\right)^{-1}\left( \frac{\mathbb{C}_{n}^{\ast \prime }\diag(\{\hat{u}_{i}^{2}\}_{i=1}^{n})\mathbb{C}^{\ast}_{n}}{n}\right) \left( \frac{\mathbb{C}_{n}^{\ast \prime }\mathbb{C}^{\ast}_{n}}{n}\right) ^{-1}\right] \mathbb R^{\ast\prime }~,
\end{equation}
where $\{\hat u_i:1\le i\le n\}$ are the ordinary least squares residuals, and $\mathbb{C}_{n}^{\ast}$ and $\mathbb{R}^{\ast}$ are defined as in Lemma \ref{lemma:sfe-Ho}. Then.
	\begin{equation*}
		\hat{\mathbb V}^{\ast}_{\rm he} \overset{P}{\to}\mathbb V^{\ast}_{1} + {\mathbb V}^{\ast}_{2}~,	
	\end{equation*}
	where
	\begin{align*}
	\mathbb V^{\ast}_{1} &= \diag \left( \sum_{s\in\mathcal S}\frac{p(s)}{\pi_a} \left[ \sigma^2_{\tilde Y(a)}(s)+\left(E[m_a(Z)|S=s]-  \sum_{\tilde a\in\mathcal A_0} \pi_{\tilde a}E[m_{\tilde a}(Z)|S=s]\right)^2\right]: a\in\mathcal A \right) \\ 
	\mathbb V^{\ast}_{2} &= \iota_{|\mathcal{A}|}\iota_{|\mathcal{A}|}' \sum_{s\in\mathcal S}\frac{p(s)}{\pi_0}\left[\sigma^2_{\tilde Y(0)}(s)+ \left(E[m_0(Z)|S=s]- \sum_{\tilde a\in\mathcal A_0} \pi_{\tilde a}E[m_{\tilde a}(Z)|S=s]\right)^2\right]~.
\end{align*} 
\end{lemma}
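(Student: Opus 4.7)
The plan is to mirror the proof of Theorem \ref{theorem:sat-se}, with three substantive adjustments that reflect the absence of treatment-stratum interactions in $\mathbb C_n^{\ast}$. First, I would identify the population residual. By Theorem \ref{theorem:sfe}, $\hat\beta^{\ast}_{n,a}\overset{P}{\to}\mu_a-\mu_0$, and solving the population normal equation $E[I\{S_i=s\}u^{\ast}_i]=0$ using $A_i\indep Z_i\mid S_i$ gives
\begin{equation*}
\hat\delta^{\ast}_n(s)\overset{P}{\to}\mu_0+\sum_{\tilde a\in\mathcal A_0}\pi_{\tilde a}E[m_{\tilde a}(Z)|S=s]~.
\end{equation*}
Combining these limits with the decomposition $Y_i=\sum_{(a,s)}I\{A_i=a,S_i=s\}[\tilde Y_i(a)+E[m_a(Z)|S=s]+\mu_a]$ shows that the population residual simplifies to $u_i^{\ast}=\tilde Y_i(A_i)+\xi_{A_i}(S_i)$, with $\xi_a(s)$ as in \eqref{eq:Gamma-main}. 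Arguments paralleling those in Lemma \ref{lemma:residuals} justify replacing $\hat u_i^2$ by $(u_i^{\ast})^2$ in all subsequent sample averages.

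Second, since $E[\tilde Y_i(a)|S_i=s]=0$ makes the cross term in $(u_i^{\ast})^2$ vanish in expectation, $E[(u_i^{\ast})^2|A_i=a,S_i=s]=\sigma^2_{\tilde Y(a)}(s)+\xi_a(s)^2$. An application of Lemma \ref{lemma:conv-in-P} then delivers
\begin{equation*}
\frac{1}{n}\sum_{i=1}^n\hat u_i^2\,I\{A_i=a,S_i=s\}\overset{P}{\to}p(s)\pi_a\left[\sigma^2_{\tilde Y(a)}(s)+\xi_a(s)^2\right]
\end{equation*}
for each $(a,s)\in\mathcal A_0\times\mathcal S$, together with the corresponding marginal limits obtained by summing this identity over $s$ or over $a$.

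Third, to assemble the sandwich I would invoke the Frisch-Waugh-Lovell representation already used in the proof of Theorem \ref{theorem:sfe}, which reduces $\hat{\mathbb V}^{\ast}_{\rm he}$ to the $|\mathcal A|\times|\mathcal A|$ expression $\left(\tfrac{1}{n}\mathbb A_n'\mathbb M_n\mathbb A_n\right)^{-1}\left(\tfrac{1}{n}\mathbb A_n'\mathbb M_n\diag(\hat u_i^2)\mathbb M_n\mathbb A_n\right)\left(\tfrac{1}{n}\mathbb A_n'\mathbb M_n\mathbb A_n\right)^{-1}$. Because $\frac{n_a(s)}{n(s)}=\pi_a+o_P(1)$ by Assumption \ref{ass:rand-sfe}, the outer inverse converges in probability to $\diag(1/\pi_a:a\in\mathcal A)+(1/\pi_0)\iota_{|\mathcal A|}\iota'_{|\mathcal A|}$ via Sherman-Morrison (as already derived in the proof of Theorem \ref{theorem:sfe}), while the middle factor is obtained by expanding $(I\{A_i=a\}-\pi_a)(I\{A_i=a'\}-\pi_{a'})$ and applying the block limits of the previous step. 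Straightforward but lengthy manipulation, using $\sum_{a\in\mathcal A_0}\pi_a=1$ at several places, then yields the claimed decomposition $\mathbb V^{\ast}_1+\mathbb V^{\ast}_2$: the diagonal part collects the ``own-treatment'' contributions $\pi_a^{-1}\sum_s p(s)[\sigma^2_{\tilde Y(a)}(s)+\xi_a(s)^2]$, while the rank-one $\iota\iota'$ part collects the control-group contribution $\pi_0^{-1}\sum_s p(s)[\sigma^2_{\tilde Y(0)}(s)+\xi_0(s)^2]$.

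The main obstacle is the algebraic bookkeeping in this last step: the interaction of the diagonal and rank-one components of the outer inverse with the three-part structure of the middle factor (an $a=a'$ diagonal piece, an $a\ne a'$ off-diagonal piece, and a $\pi_a\pi_{a'}$-bilinear piece) generates many intermediate cross terms that must be shown to cancel or recombine into the claimed form. A direct verification in the special case $|\mathcal A|=1$ (cf.\ \eqref{eq:hc-sfe-limit}) confirms the expected cancellation pattern, and the general case follows the same scheme without qualitative surprises.
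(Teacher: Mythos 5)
Your proposal is correct and is essentially the argument the paper intends: the paper omits this proof entirely, stating only that it parallels Theorem \ref{theorem:sat-se}, and your writeup supplies exactly that parallel while correctly identifying the one genuinely new ingredient---that the population residual of the non-saturated regression is $u_i^{\ast}=\tilde Y_i(A_i)+\xi_{A_i}(S_i)$, so each conditional second moment picks up the extra term $\xi_a(s)^2$. The Frisch--Waugh--Lovell reduction to an $|\mathcal A|\times|\mathcal A|$ sandwich is only a mild organizational variant of the paper's direct block computation, and the final algebra does close: writing $K_a=\sum_{s\in\mathcal S} p(s)[\sigma^2_{\tilde Y(a)}(s)+\xi_a(s)^2]$, the product of $\diag(1/\pi_a:a\in\mathcal A)+(1/\pi_0)\iota_{|\mathcal{A}|}\iota_{|\mathcal{A}|}'$ with the limit of the middle factor collapses to $\diag(K_a/\pi_a:a\in\mathcal A)+(K_0/\pi_0)\iota_{|\mathcal{A}|}\iota_{|\mathcal{A}|}'$, which is precisely $\mathbb V_1^{\ast}+\mathbb V_2^{\ast}$.
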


\begin{proof}
 The proof is similar to that of Theorem \ref{theorem:sat-se} and therefore omitted. 
\end{proof}

\section{Results on Local Power}\label{app:local}

Let $\{Q^{*}_n:n\ge 1\}$ be a sequence of local alternatives to the null hypothesis in \eqref{eq:null-lin} that satisfies  
\begin{equation} \label{eq:Local}
	\sqrt{n}( \Psi \theta(Q^{*}_n)- c)\to \lambda~ 
\end{equation} 
as $n\to \infty$, for $\lambda$ and $c$ being $r$-dimensional column vectors and $\Psi$ being a $(r\times |\mathcal A|)$-dimensional matrix such that $\text{rank}(\Psi)=r$. Consider a test of the form \[ \phi _{n}( X^{( n) }) =I\{ T_{n}( X^{(n)}) >\chi^2_{r,1-\alpha} \}~,\] where \[T_{n}( X^{( n) }) =n( \Psi \hat{\theta}_{n}-c) ^{\prime }( \Psi \mathbb{\hat{V}}_{n}\Psi ^{\prime })^{-1}( \Psi \hat{\theta}_{n}-c)~, \]
$\hat \theta_n$ is an estimator of $\theta(Q)$ satisfying 
\begin{equation}\label{eq:local-est}
	\sqrt{n}(\hat{\theta}_{n}-\theta(Q^{*}_n)) \overset{d}{\to} N( 0,\mathbb{V} ) \text{ under } Q^{*}_n~
\end{equation}
for some asymptotic variance $\mathbb V$, $\mathbb{\hat V}_n$ is a matrix intended to Studentize the test statistic that satisfies 
\begin{equation}\label{eq:local-V}
	\mathbb{\hat{V}}_{n}\overset{P}{\to} \mathbb{V}_{\rm stud}	 \text{ under } Q^{*}_n~
\end{equation}
for some $\mathbb{V}_{\rm stud}$, and $\chi^2_{r,1 - \alpha}$ is the $1 - \alpha $ quantile of a $\chi^2$ random variable with $r$ degrees of freedom. The next theorem summarizes our main result. 

\begin{theorem}\label{thm:LocalPower}
Let $\{Q^{*}_n:n\ge 1\}$ be the sequence of local alternatives satisfying \eqref{eq:Local}, $\hat \theta_n$ be an estimator satisfying \eqref{eq:local-est}, and $\hat{\mathbb{V}}_n$ be a random matrix satisfying \eqref{eq:local-V}. Assume that $\mathbb{V}$ and $\mathbb{V}_{\rm stud}$ are positive definite, that $\mathbb{V}_{\rm stud}-\mathbb{V}$ is positive semi-definite, and that $\text{rank}(\Psi)=r$. Then,
\begin{equation}
	\lim_{n\to \infty} E[ \phi _{n}(X^{( n) })] =P\left\{ (\xi+\tilde{\lambda})^{\prime }( \Psi \mathbb{V}\Psi^{\prime })^{1/2}( \Psi\mathbb{V}_{\rm stud}\Psi^{\prime })^{-1}( \Psi \mathbb{V}\Psi^{\prime }) ^{1/2}(\xi+\tilde{\lambda}) >\chi^2_{r,1 - \alpha} \right\} ,  \label{eq:mainresult}
\end{equation}
under $Q^{*}_n$, where $\xi \sim N(0,\mathbb{I}_{r}) $ and $\tilde{\lambda}=( \Psi \mathbb{V} \Psi ^{\prime }) ^{-1/2}\lambda $. In addition, the following three statements follow under $Q^{*}_n$.
\begin{enumerate}[(a)]
 	\item Under the assumptions above, \[ \limsup_{n\to \infty} E[ \phi _{n}( X^{( n) }) ] \leq~P\left\{ ( \xi +\tilde{\lambda}) ^{\prime }( \xi +\tilde{\lambda}) > \chi^2_{r,1 - \alpha} \right\}~.\]
	\item If $\mathbb{V} =\mathbb{V}_{\rm stud}$, then
	\begin{equation*}
		\lim_{n\to \infty} E[ \phi _{n}( X^{( n) }) ]=P\left\{ ( \xi +\tilde{\lambda})^{\prime }( \xi +\tilde{\lambda}) > \chi^2_{r,1 - \alpha} \right\} \geq \alpha~,
	\label{eq:mainresult2}
	\end{equation*}%
	where the inequality is strict if and only if $\lambda \not=0$. 
	\item If $\phi _{n}^{1}(X^{(n)})$ and $\phi_{n}^{2}(X^{(n)})$ are two tests such that $\phi_{n}^{1}(X^{(n)})$ is based on an estimator with $\mathbb{V}^{1} =\mathbb{V}^{1}_{\rm stud}$ and $\phi_{n}^{2}(X^{(n)})$ is based on an estimator with $\mathbb{V}^{2} =\mathbb{V}^{2}_{\rm stud}$, then
	\[	\lim_{n\to \infty} E[ \phi _{n}^{1}( X^{( n) }) ]\geq \lim_{n\to \infty} E[ \phi _{n}^{2}( X^{( n) }) ]~,\]
	provided $\mathbb{V}^{2}-\mathbb{V}^{1}$ is positive semi-definite. In addition, the inequality becomes strict if and only if $\lambda \not=0$ and  $\mathbb{V}^{2}-\mathbb{V}^{1}$ is positive definite. 
\end{enumerate} 
\end{theorem}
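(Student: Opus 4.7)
The plan is to derive the limiting distribution of $T_n(X^{(n)})$ under $Q^{*}_n$ via Slutsky's theorem and the continuous mapping theorem, rewrite the resulting quadratic form in the canonical form of \eqref{eq:mainresult}, and then deduce (a)--(c) from standard L\"owner-order inequalities combined with monotonicity of the noncentral $\chi^{2}$ survival function. First I would decompose $\sqrt n (\Psi\hat\theta_n - c) = \Psi\sqrt n (\hat\theta_n - \theta(Q^{*}_n)) + \sqrt n (\Psi\theta(Q^{*}_n) - c)$; by \eqref{eq:local-est} and \eqref{eq:Local}, this converges in distribution under $Q^{*}_n$ to $Z \sim N(\lambda, \Psi\mathbb V\Psi')$. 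Combined with $\hat{\mathbb V}_n \overset{P}{\to} \mathbb V_{\rm stud}$ via Slutsky, the continuous mapping theorem will yield $T_n \overset{d}{\to} Z'(\Psi\mathbb V_{\rm stud}\Psi')^{-1}Z$ under $Q^{*}_n$. Since $\Psi\mathbb V\Psi'$ is positive definite, I can write $Z = (\Psi\mathbb V\Psi')^{1/2}(\xi + \tilde\lambda)$ with $\xi = (\Psi\mathbb V\Psi')^{-1/2}(Z-\lambda) \sim N(0, \mathbb I_r)$ and $\tilde\lambda$ as defined in the theorem, so the limiting quadratic form matches the integrand of \eqref{eq:mainresult} exactly. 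Because the limit is absolutely continuous, the Portmanteau theorem applied to the open rejection region $\{T > \chi^{2}_{r,1-\alpha}\}$ then delivers \eqref{eq:mainresult}.

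For part (a), the assumption that $\mathbb V_{\rm stud} - \mathbb V$ is positive semi-definite implies that $\Psi\mathbb V_{\rm stud}\Psi' - \Psi\mathbb V\Psi'$ is positive semi-definite, and the standard L\"owner-order inversion $A \succeq B \succ 0 \implies B^{-1} \succeq A^{-1}$ then gives $(\Psi\mathbb V\Psi')^{-1} \succeq (\Psi\mathbb V_{\rm stud}\Psi')^{-1}$. Conjugating by $(\Psi\mathbb V\Psi')^{1/2}$ yields $(\Psi\mathbb V\Psi')^{1/2}(\Psi\mathbb V_{\rm stud}\Psi')^{-1}(\Psi\mathbb V\Psi')^{1/2} \preceq \mathbb I_r$, so the quadratic form inside the probability in \eqref{eq:mainresult} is dominated pointwise by $(\xi+\tilde\lambda)'(\xi+\tilde\lambda)$, and monotonicity of probability delivers the claimed upper bound.

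For parts (b) and (c), setting $\mathbb V = \mathbb V_{\rm stud}$ reduces the conjugated matrix to $\mathbb I_r$ exactly, so the limit in \eqref{eq:mainresult} becomes $P\{\|\xi+\tilde\lambda\|^2 > \chi^{2}_{r,1-\alpha}\}$, the survival function of a noncentral $\chi^{2}_r$ with noncentrality $\|\tilde\lambda\|^2 = \lambda'(\Psi\mathbb V\Psi')^{-1}\lambda$ evaluated at $\chi^{2}_{r,1-\alpha}$. This equals $\alpha$ when $\tilde\lambda = 0$ and is strictly increasing in the noncentrality, so using that $\Psi\mathbb V\Psi'$ is positive definite (hence $\tilde\lambda = 0 \iff \lambda = 0$) gives (b). For (c), I would apply (b) to each test, obtaining limiting powers $P\{\|\xi+\tilde\lambda^{j}\|^2 > \chi^{2}_{r,1-\alpha}\}$ with noncentralities $\lambda'(\Psi\mathbb V^{j}\Psi')^{-1}\lambda$. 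The same L\"owner inversion as in (a) yields $(\Psi\mathbb V^1\Psi')^{-1} \succeq (\Psi\mathbb V^2\Psi')^{-1}$, so test~1's noncentrality weakly dominates test~2's; when additionally $\lambda \ne 0$ and $\mathbb V^2 - \mathbb V^1$ is positive definite, the assumption $\mathrm{rank}(\Psi) = r$ promotes $\Psi(\mathbb V^2-\mathbb V^1)\Psi'$ to positive definite and the domination is strict. Strict monotonicity of the noncentral $\chi^{2}$ survival function then converts these noncentrality comparisons into the required power comparisons.

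The main obstacle is really notational rather than substantive: one must carefully track the roles of $\mathbb V$ versus $\mathbb V_{\rm stud}$ across the two matrix-inversion steps, and cleanly invoke two textbook facts---the L\"owner-order inversion noted above (with strict inequality when $A \succ B$) and the strict monotonicity of the noncentral $\chi^{2}_r$ survival function in the noncentrality parameter.
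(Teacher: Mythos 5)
Your proposal is correct and follows essentially the same route as the paper's proof: the same decomposition of $\sqrt{n}(\Psi\hat\theta_n-c)$, Slutsky plus the representation $Z=(\Psi\mathbb V\Psi')^{1/2}(\xi+\tilde\lambda)$ for \eqref{eq:mainresult}, and the monotonicity of the noncentral $\chi^2_r$ survival function (the Marcum-$Q$ function in the paper) in the noncentrality parameter for parts (b) and (c). The only cosmetic difference is in part (a), where you invoke the standard antitonicity of matrix inversion on the L\"owner order while the paper proves the equivalent bound directly via a quadratic-form argument (Lemma \ref{lem:Inequality}); both yield the same conclusion.
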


\begin{proof}
Notice that
\[ \sqrt{n}( \Psi \hat{\theta}_{n}-c) =\sqrt{n}( \Psi \hat{\theta}_{n}-\Psi \theta(Q^{*}_n)) +\sqrt{n}( \Psi \theta(Q^{*}_n) -c) \overset{d}{\to}N( \lambda ,\Psi \mathbb{V} \Psi ^{\prime})~\text{ under } Q^{*}_n~.\]
By Slutsky's theorem,
\begin{align*}
	( \Psi \hat{\mathbb{V}}_{n}\Psi ^{\prime })^{-1/2}\sqrt{n}(\Psi\hat{\theta}_{n}-c)  
	&\overset{d}{\to}N\left( ( \Psi \mathbb{V}_{\rm stud}\Psi^{\prime })^{-1/2}\lambda,(\Psi \mathbb{V}_{\rm stud}\Psi^{\prime})^{-1/2}(\Psi \mathbb{V}\Psi^{\prime})(\Psi\mathbb{V}_{\rm stud}\Psi^{\prime})^{-1/2}\right)\\
	&\sim ( \Psi\mathbb{V}_{\rm stud}\Psi^{\prime })^{-1/2}(\Psi\mathbb{V}\Psi^{\prime})^{1/2}(\xi +\tilde{\lambda})~,
\end{align*}%
under $Q^{*}_n$, with $\xi \sim N(0,\mathbb{I}_{r})$ and $\tilde{\lambda}=(\Psi\mathbb{V}\Psi^{\prime})^{-1/2}\lambda$. From here we conclude that
\[T_{n}( X^{( n) }) \overset{d}{\to}( \xi +\tilde{\lambda}) ^{\prime }( \Psi\mathbb{V}\Psi^{\prime })^{1/2}(\Psi\mathbb{V}_{\rm stud}\Psi^{\prime})^{-1}( \Psi \mathbb{V}\Psi^{\prime})^{1/2}( \xi +\tilde{\lambda})~,
\]
and \eqref{eq:mainresult} follows.

Part (a). This follows immediately from Lemma \ref{lem:Inequality}.

Part (b). Note that 
\begin{equation} \label{eq:Qfunction}
	P\{ ( \xi +\tilde{\lambda}) ^{\prime }( \xi +\tilde{\lambda}) >\chi^2_{r,1-\alpha}\} =\Lambda_{\frac{r}{2}}\left(\sqrt{\mu },\sqrt{\chi^2_{r,1-\alpha}}\right)~, 
\end{equation}
where $\Lambda_{m}( a,b) $ is the Marcum-Q-function and $\mu \equiv \tilde{\lambda}^{\prime }\tilde{\lambda}=\lambda ^{\prime }( \Psi \mathbb{V}\Psi^{\prime })^{-1}\lambda \geq 0$. By the fact that $\Lambda_{m}( a,b) $ is increasing in $a$ (see \citet[][p.\ 575]{temme:2014} and \cite[][Theorem 3.1]{sun/baricz:08}), $\Lambda_{\frac{r}{2}}( \sqrt{\mu },\sqrt{\chi^2_{r,1-\alpha}}) \geq \Lambda_{\frac{r}{2}}(0,\sqrt{\chi^2_{r,1-\alpha}}) =\alpha $, with strict inequality if and only if $\mu >0$. Since $\mathbb{V}$ is positive definite and $\Psi$ is full rank, $\Psi\mathbb{V}\Psi^{\prime }$ is positive definite and, thus, non-singular. Then, $\mu >0$ if and only if $\lambda \ne 0 $.

Part (c). We only show the strict inequality, as the weak inequality follows from weakening all the inequalities. For $d=1,2$, since $\mathbb{V}^{d}$ is positive definite and $\Psi$ is full rank, $\Psi \mathbb{V}^{d}\Psi^{\prime}$ is positive definite and, thus, non-singular. Since $\mathbb{V}^{2}-\mathbb{V}^{1}$ is positive definite and $\Psi$ is full rank, $\Psi \mathbb{V}^{2}\Psi^{\prime }-\Psi\mathbb{V}^{1}\Psi ^{\prime }$ is positive definite and so $(\Psi\mathbb{V}^{2}\Psi^{\prime })^{-1}-(\Psi\mathbb{V}^{1}\Psi^{\prime })^{-1}$ is negative definite. By this and the fact that $\lambda\ne 0$, we conclude that 
\[ \mu^{2}-\mu^{1}=\lambda^{\prime}(\Psi\mathbb{V}^{2}\Psi^{\prime})^{-1}\lambda-\lambda^{\prime}(\Psi\mathbb{V}^{1}\Psi^{\prime})^{-1}\lambda =\lambda ^{\prime}((\Psi\mathbb{V}^{2}\Psi^{\prime})^{-1}-(\Psi\mathbb{V}^{1}\Psi^{\prime})^{-1})\lambda <0~.
\]
By \eqref{eq:Qfunction} and the fact that $\Lambda_{m}( a,b) $ is increasing in $a$, the result follows.
\end{proof}

\begin{lemma}\label{lem:Inequality}
Suppose that $\mathbb{V} -\mathbb{V}_{\rm stud}\in \mathbf{R}^{|\mathcal A|\times |\mathcal A|}$ is negative semi-definite, $\mathbb{V}_{\rm stud}$ is non-singular, and $\text{rank}(\Psi)=r$. Then, $(\Psi \mathbb{V}_{\rm stud}\Psi ^{\prime }) ^{-1/2}( \Psi \mathbb{V} \Psi^{\prime })( \Psi \mathbb{V}_{\rm stud} \Psi ^{\prime })^{-1/2}-\mathbb{I}_{r}$ is negative semi-definite.
\end{lemma}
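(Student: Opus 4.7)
The plan is to reduce this to a short manipulation in the Loewner order. First, I would rewrite the hypothesis that $\mathbb{V}-\mathbb{V}_{\rm stud}$ is negative semi-definite as $\mathbb{V}\preceq \mathbb{V}_{\rm stud}$, i.e.\ $x'\mathbb{V}x\le x'\mathbb{V}_{\rm stud}x$ for every $x\in\mathbb{R}^{|\mathcal A|}$. Applying this with $x=\Psi'y$ for arbitrary $y\in\mathbb{R}^r$ immediately yields
\begin{equation*}
\Psi\mathbb{V}\Psi'\preceq \Psi\mathbb{V}_{\rm stud}\Psi'~.
\end{equation*}

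Next, I would verify that $\Psi\mathbb{V}_{\rm stud}\Psi'$ is positive definite, so that its symmetric inverse square root is well-defined. In the context of Theorem \ref{thm:LocalPower}, $\mathbb{V}_{\rm stud}$ is the probability limit of a variance estimator and is assumed non-singular; together with symmetry this makes it positive definite. Since $\Psi$ has full row rank $r$, $\Psi'y\ne 0$ for every nonzero $y\in\mathbb{R}^r$, so $y'\Psi\mathbb{V}_{\rm stud}\Psi'y=(\Psi'y)'\mathbb{V}_{\rm stud}(\Psi'y)>0$. Hence $(\Psi\mathbb{V}_{\rm stud}\Psi')^{1/2}$ and $(\Psi\mathbb{V}_{\rm stud}\Psi')^{-1/2}$ exist as symmetric positive definite matrices.

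Finally, congruence by any symmetric nonsingular matrix preserves the Loewner order, so conjugating both sides of the inequality above by $(\Psi\mathbb{V}_{\rm stud}\Psi')^{-1/2}$ gives
\begin{equation*}
(\Psi\mathbb{V}_{\rm stud}\Psi')^{-1/2}(\Psi\mathbb{V}\Psi')(\Psi\mathbb{V}_{\rm stud}\Psi')^{-1/2}\preceq (\Psi\mathbb{V}_{\rm stud}\Psi')^{-1/2}(\Psi\mathbb{V}_{\rm stud}\Psi')(\Psi\mathbb{V}_{\rm stud}\Psi')^{-1/2}=\mathbb{I}_r~,
\end{equation*}
which is the claim. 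There is no real obstacle here: the only point requiring care is the existence of the symmetric inverse square root, which hinges on combining $\mathbb{V}_{\rm stud}\succ 0$ with the full row rank of $\Psi$ to rule out any nontrivial kernel of $\Psi\mathbb{V}_{\rm stud}\Psi'$.
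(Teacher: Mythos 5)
Your proposal is correct and follows essentially the same route as the paper's proof: the paper tests the claimed matrix against an arbitrary vector $a$, substitutes $b=(\Psi\mathbb{V}_{\rm stud}\Psi')^{-1/2}a$, and reduces the inequality to $(\Psi'b)'(\mathbb{V}-\mathbb{V}_{\rm stud})(\Psi'b)\le 0$, which is exactly your congruence/Loewner-order argument read in the opposite direction. Your explicit remark that positive definiteness of $\mathbb{V}_{\rm stud}$ (assumed in Theorem \ref{thm:LocalPower}, where the lemma is applied) is what guarantees the symmetric inverse square root exists is a point the paper glosses over, so no gap there.
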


\begin{proof}
Since $\Psi $ is full rank and $\mathbb{V}_{\rm stud}$ is non-singular, $(\Psi \mathbb{V}_{\rm stud}\Psi^{\prime })^{1/2}$ is well defined and non-singular. Let $a$ be an arbitrary $r$-dimensional column vector. We wish to show that
\begin{equation}\label{eq:FirstEqPSD}
	a^{\prime}((\Psi \mathbb{V}_{\rm stud}\Psi^{\prime})^{-1/2}(\Psi\mathbb{V}\Psi^{\prime})(\Psi\mathbb{V}_{\rm stud} \Psi^{\prime})^{-1/2}-\mathbb{I}_{r}) a \leq 0~.
\end{equation}
Let $b=(\Psi \mathbb{V}_{\rm stud}\Psi^{\prime})^{-1/2}a$ and note that \eqref{eq:FirstEqPSD} is equivalent to
\[ b^{\prime}( \Psi \mathbb{V}_{\rm stud}\Psi^{\prime })^{1/2}((\Psi \mathbb{V}_{\rm stud}\Psi^{\prime })^{-1/2}(\Psi \mathbb{V}\Psi^{\prime})(\Psi\mathbb{V}_{\rm stud}\Psi^{\prime})^{-1/2}-\mathbb{I}_{r}) ( \Psi \mathbb{V}_{\rm stud}\Psi ^{\prime })
^{1/2}b\leq 0\]
which, in turn, is equivalent to $ (\Psi^{\prime }b) ^{\prime }(\mathbb{V}-\mathbb{V}_{\rm stud})(\Psi ^{\prime }b) \leq 0$. This last inequality holds because $\mathbb{V}-\mathbb{V}_{\rm stud}$ is negative semi-definite.
\end{proof}
\end{small}

\end{appendices}

\bibliography{Appendix/ECAR_Ref.bib}

\begin{thebibliography}{19}
\expandafter\ifx\csname natexlab\endcsname\relax\def\natexlab#1{#1}\fi
\expandafter\ifx\csname url\endcsname\relax
  \def\url#1{\texttt{#1}}\fi
\expandafter\ifx\csname urlprefix\endcsname\relax\def\urlprefix{URL }\fi
\providecommand{\eprint}[2][]{\url{#2}}

\bibitem[{Bai(2018)}]{bai:18}
\textsc{Bai, Y.} (2018).
\newblock On optimal stratification in randomized controlled trials.
\newblock Manuscript. The University of Chicago.

\bibitem[{Berry et~al.(2018)Berry, Karlan and Pradhan}]{karlan/etal:2015}
\textsc{Berry, J.}, \textsc{Karlan, D.~S.} and \textsc{Pradhan, M.} (2018).
\newblock The impact of financial education for youth in {G}hana.
\newblock \textit{World Development}, \textbf{102} 71 -- 89.

\bibitem[{Bruhn and McKenzie(2009)}]{bruhn/mckenzie:08}
\textsc{Bruhn, M.} and \textsc{McKenzie, D.} (2009).
\newblock In pursuit of balance: Randomization in practice in development field
  experiments.
\newblock \textit{American Economic Journal: Applied Economics}, \textbf{1}
  200--232.

\bibitem[{Bugni et~al.(2018)Bugni, Canay and Shaikh}]{bugni/canay/shaikh:16}
\textsc{Bugni, F.~A.}, \textsc{Canay, I.~A.} and \textsc{Shaikh, A.~M.} (2018).
\newblock Inference under covariate-adaptive randomization.
\newblock \textit{Journal of the American Statistical Association,
  \upshape{forthcoming}}.

\bibitem[{Callen et~al.(2019)Callen, Gulzar, Hasanain, Khan and
  Rezaee}]{Callen/etat:14}
\textsc{Callen, M.}, \textsc{Gulzar, S.}, \textsc{Hasanain, A.}, \textsc{Khan,
  Y.} and \textsc{Rezaee, A.} (2019).
\newblock Personalities and public sector performance: Evidence from a health
  experiment in {P}akistan.
\newblock NBER Working Paper No. 21180.

\bibitem[{Chong et~al.(2016)Chong, Cohen, Field, Nakasone and
  Torero}]{chong2016iron}
\textsc{Chong, A.}, \textsc{Cohen, I.}, \textsc{Field, E.}, \textsc{Nakasone,
  E.} and \textsc{Torero, M.} (2016).
\newblock Iron deficiency and schooling attainment in peru.
\newblock \textit{American Economic Journal: Applied Economics}, \textbf{8}
  222--255.

\bibitem[{Dizon-Ross(2018)}]{dizon-Ross:15}
\textsc{Dizon-Ross, R.} (2018).
\newblock Parents' beliefs about their children's academic ability:
  implications for educational investments.
\newblock Manuscript, University of Chicago Booth School of Business.

\bibitem[{Duflo et~al.(2015)Duflo, Dupas and Kremer}]{duflo/dupas/kremer:14}
\textsc{Duflo, E.}, \textsc{Dupas, P.} and \textsc{Kremer, M.} (2015).
\newblock Education, {HIV}, and early fertility: Experimental evidence from
  {K}enya.
\newblock \textit{American Economics Review}, \textbf{105} 2757--2797.

\bibitem[{Duflo et~al.(2007)Duflo, Glennerster and Kremer}]{duflo/etal:07}
\textsc{Duflo, E.}, \textsc{Glennerster, R.} and \textsc{Kremer, M.} (2007).
\newblock Using randomization in development economics research: A toolkit.
\newblock \textit{Handbook of development economics}, \textbf{4} 3895--3962.

\bibitem[{Efron(1971)}]{efron:71}
\textsc{Efron, B.} (1971).
\newblock Forcing a sequential experiment to be balanced.
\newblock \textit{Biometrika}, \textbf{58} 403--417.

\bibitem[{Hu and Hu(2012)}]{hu/hu:12}
\textsc{Hu, Y.} and \textsc{Hu, F.} (2012).
\newblock Asymptotic properties of covariate-adaptive randomization.
\newblock \textit{Annals of Statistics, forthcoming}.

\bibitem[{Imbens and Rubin(2015)}]{imbens/rubin:15}
\textsc{Imbens, G.~W.} and \textsc{Rubin, D.~B.} (2015).
\newblock \textit{Causal Inference for Statistics, Social, and Biomedical
  Sciences: An Introduction}.
\newblock Cambridge University Press.

\bibitem[{Kernan et~al.(1999)Kernan, Viscoli, Makuch, Brass and
  Horwitz}]{kernan/etal:99}
\textsc{Kernan, W.~N.}, \textsc{Viscoli, C.~M.}, \textsc{Makuch, R.~W.},
  \textsc{Brass, L.~M.} and \textsc{Horwitz, R.~I.} (1999).
\newblock Stratified randomization for clinical trials.
\newblock \textit{Journal of clinical epidemiology}, \textbf{52} 19--26.

\bibitem[{Rosenberger and Lachin(2016)}]{rosenberger/lachin:16}
\textsc{Rosenberger, W.~F.} and \textsc{Lachin, J.~M.} (2016).
\newblock \textit{Randomization in clinical trials: theory and practice}.
\newblock 2nd ed. John Wiley \& Sons.

\bibitem[{Sun and Baricz(2008)}]{sun/baricz:08}
\textsc{Sun, Y.} and \textsc{Baricz, {\'A}.} (2008).
\newblock Inequalities for the generalized marcum q-function.
\newblock \textit{Applied Mathematics and Computation}, \textbf{203} 134--141.

\bibitem[{Tabord-Meehan(2018)}]{tabord:18}
\textsc{Tabord-Meehan, M.} (2018).
\newblock Stratification trees for adaptive randomization in randomized
  controlled trials.
\newblock Manuscript. Northwestern University.

\bibitem[{Temme(2014)}]{temme:2014}
\textsc{Temme, N.~M.} (2014).
\newblock \textit{Asymptotic methods for integrals}, vol.~11.
\newblock World Scientific.

\bibitem[{Wei et~al.(1986)Wei, Smythe and Smith}]{wei/etal:86}
\textsc{Wei, L.}, \textsc{Smythe, R.} and \textsc{Smith, R.} (1986).
\newblock K-treatment comparisons with restricted randomization rules in
  clinical trials.
\newblock \textit{The Annals of Statistics} 265--274.

\bibitem[{Zelen(1974)}]{zelen:74}
\textsc{Zelen, M.} (1974).
\newblock The randomization and stratification of patients to clinical trials.
\newblock \textit{Journal of chronic diseases}, \textbf{27} 365--375.

\end{thebibliography}

\end{document}